%

\documentclass[article]{imsart} 

\RequirePackage{amsthm,amsmath}
\usepackage{amsfonts}
\usepackage{amsmath}
\usepackage{dsfont}
\usepackage{xcolor} 
\usepackage{graphicx}
\usepackage{subcaption}
\usepackage{algorithm2e}
\usepackage{float}
\usepackage{placeins}

\RestyleAlgo{ruled}

\RequirePackage[colorlinks,citecolor=blue,urlcolor=blue]{hyperref}

\arxiv{https://arxiv.org/abs/2307.12754}

\startlocaldefs
\numberwithin{equation}{section}
\theoremstyle{plain}
\newtheorem{theorem}{Theorem}[section]
\newtheorem{lemma}{Lemma}[section]
\newtheorem{definition}{Definition}[section]
\newtheorem{corollary}{Corollary}[section]
\newtheorem{assumption}{Assumption}[section]
\DeclareMathOperator*{\argmax}{arg\,max}
\DeclareMathOperator*{\argmin}{arg\,min}
\DeclareMathOperator*{\trace}{tr}
\DeclareMathOperator{\Diag}{Diag}
\DeclareMathOperator{\Vect}{Span}
\definecolor{r}{rgb}{0.89, 0.02, 0.17}
\newcommand{\note}[1]{{#1}} 
\endlocaldefs

\begin{document}

\begin{frontmatter}

\title{Nonparametric Linear Feature Learning in Regression Through Regularisation}
\runtitle{Nonparametric Linear Feature Learning}

\begin{aug}
\author{\fnms{Bertille} \snm{Follain}\ead[label=e1]{bertille.follain@inria.fr}}

\address{Inria, PSL Research University\\ 48 rue Barrault, 75013 Paris, France\\
\printead{e1}}

\author{\fnms{Francis} \snm{Bach}\ead[label=e2]{francis.bach@inria.fr}}

\address{Inria, PSL Research University\\ 48 rue Barrault, 75013 Paris, France\\
\printead{e2}}
\runauthor{B. Follain \& F. Bach}

\end{aug}

\begin{abstract}
Representation learning plays a crucial role in automated feature selection, particularly in the context of high-dimensional data, where non-parametric methods often struggle. In this study, we focus on supervised learning scenarios where the pertinent information resides within a lower-dimensional linear subspace of the data, namely the multi-index model. If this subspace were known, it would greatly enhance prediction, computation, and interpretation. To address this challenge, we propose a novel method for \note{joint linear feature learning and non-parametric function estimation, aimed at more effectively leveraging hidden features for learning}. Our approach employs empirical risk minimisation, augmented with a penalty on function derivatives, ensuring versatility. Leveraging the orthogonality and rotation invariance properties of Hermite polynomials, we introduce our estimator, named \textbf{RegFeaL}. By using alternative minimisation, we iteratively rotate the data to improve alignment with leading directions. \note{We establish that the expected risk of our method converges in high-probability to the minimal risk under minimal assumptions and with explicit rates}. Additionally, we provide empirical results demonstrating the performance of \textbf{RegFeaL} in various experiments.
\end{abstract}

\begin{keyword}[class=MSC]
\kwd[Primary ]{62G08}
\kwd{62F10}
\kwd[; secondary ]{65K10}
\end{keyword}

\begin{keyword}
\kwd{multi-index model}
\kwd{sparsity}
\kwd{non-parametric regression}
\kwd{regularised empirical risk minimisation}
\kwd{alternating minimisation}
\kwd{Hermite polynomials}
\end{keyword}
\tableofcontents
\end{frontmatter}

\section{Introduction}
The increasing availability of high-dimensional data has created a demand for effective feature selection methods that can handle complex datasets. Representation learning, which aims to automate the feature selection process, plays a crucial role in extracting meaningful information from such data. However, non-parametric methods often struggle in high-dimensional settings. 

A sensible approach is to consider that there are a lower number of unknown relevant linear features, or linear transformations of the original data, that explain the relationship between the response and factors. A popular way to model this is to consider the multi-index model \cite{multi_index_model}, where we assume that the prediction function is the composition of few linear features which form a linear subspace (the effective dimension reduction (e.d.r.) subspace) and a non-parametric function. The multi-index model has been used in practice in many fields, such as ecology \cite{plague} or bio-informatics \cite{10.1093/bioinformatics/btg062}.  If the features were known, learning would be much easier due to the lower dimensionality of the problem, and their low number allows for a simpler, more explainable model, as well as a lesser need for computational and storage resources. 
\note{Although these relevant features are not known a priori, recognising their existence enables the development of methods that incorporate them, potentially resulting in better estimators for prediction.}

\paragraph{Related work.}
A wide range of methods have been proposed to estimate the e.d.r.\ space in the context of multi-index models. Brillinger introduced the method of moments, initially designed for Gaussian data and an e.d.r.\ of dimension one \cite{brillinger}. This method uses specific moments to eliminate the unknown function and focuses solely on the influence of the e.d.r.\ space. Extensions of this approach for distributions with differentiable log-densities have been provided, resulting in the average derivative estimation (ADE) method \cite{ADE}.

To incorporate subspaces of any dimension, several methods have been proposed. Slicing methods, such as slice inverse regression (SIR) \cite{sir}, use second-order moments to account for subspaces. Principal Hessian directions (PHD) \cite{phd} extend the approach to elliptically symmetric data. Combining these techniques, sliced average derivative estimation (SADE) \cite{babichev:hal-01388498} offers a comprehensive approach. However, these methods heavily rely on assumptions about the distribution shape and require prior knowledge of the distribution, limiting their applicability.

Iterative improvements have been suggested for both the one-dimensional latent subspace case \cite{single_index} and the general case \cite{dalalyan:hal-00128129}. Other optimisation-based methods, such as local averaging, aim to minimise an objective function to estimate the subspace \cite{kernel_dimension_reduction_francis, MAVE}. Although these procedures exhibit favourable performance in practice, particularly the \textbf{MAVE} method \cite{MAVE}, the theoretical guarantees provided by \cite{MAVE} show exponential dependency in the dimension of the original data. \note{Nonetheless, the recent work by \cite{review}  has made significant contributions to sufficient dimension reduction (SDR) by providing robust theoretical results for high-dimensional data that do not exhibit exponential dependency. However, their method, designed primarily for dimension reduction and variable selection in the specific setting of the square loss, relies on the linearity condition, which holds for example under the assumption that the covariates follow an elliptically contoured distribution.}

In our work, we consider regularising the empirical risk by incorporating derivatives, a technique employed in various contexts. Classical splines, such as Sobolev spaces regularisation \cite{wahba}, have used derivative-based regularisation. More recently, derivative regularisation has been employed in the context of semi-supervised learning \cite{cabannes2021overcoming}, as well as in linear subspace estimation using SADE~\cite{babichev:hal-01388498}.

\paragraph{Contributions.}
\note{We propose a novel approach for joint function estimation and effective dimension reduction space estimation in multi-index models.}

\note{We employ the empirical risk minimisation framework, compatible with a wide range of loss functions, which is regularised by a penalty on the derivatives of the prediction function. The proposed regularisation enforces dependence on a reduced set of projected dimensions. Our method addresses the discussed limitations of previous methods. Indeed the assumptions on the distribution of the covariates are minimal (typically subgaussianity of the norm), and does not require said distribution to be known a priori. We are also able to provide explicit rates for the high-probability convergence of the expected risk of our estimator to the minimal risk, again with limited assumptions.}

\note{To construct our estimator, which we coin \textbf{RegFeaL}, we exploit the advantageous properties of Hermite polynomials, which exhibit orthogonality and rotation invariance. By incorporating alternative minimisation on a variational formulation of the problem, we enable iterative rotation of the data to better align with the leading directions, as well as easy computation of the unknown relevant dimension of the e d.r. space. Furthermore, for the specific case of the variable selection problem, that is, when only a subset of the coordinates of the original data is relevant, we can simplify our proposed penalty term which yields a computationally more efficient algorithm.}

\note{While our primary objective is to leverage the existence of a dependency on only a few variables or features, we also offer principled ways to estimate the dimension of the feature space and select the relevant features.}

\note{We provide detailed explanations about the efficient computation of our estimator, ensuring its practical usability. Additionally, we present theoretical results that establish the high-probability convergence to the minimal risk of the expected risk of our estimator, with limited assumptions on the loss and data distribution. This allows for a deeper understanding of the performance of the method and the dependency on certain parameters such as the dimension of the original data and the number of samples.}

\note{To demonstrate the strengths of our approach, we conduct an extensive set of experiments focusing on training behaviour, dependency on sample size and dimension, and comparison to other methods.}

\note{Importantly, our regularisation strategy is applicable to a wide range of problems where empirical risk can be formulated, making it a versatile tool for feature learning and dimensionality reduction tasks, potentially extending beyond statistics to fields such as signal processing and control.}

\note{In summary, our contributions encompass the introduction of a novel empirical risk minimisation framework with derivative-based regularisation for prediction and e.d.r. space estimation in multi-index models. We provide efficient computational techniques, theoretical insights, and empirical evidence, highlighting the advantages of our proposed method.}

\paragraph{Paper organisation.}
The paper is organised as follows: we begin by describing the problem, our penalties, and the use of Hermite polynomials in Section~\ref{sec:preliminaries}. Then, we address the question of effectively computing our estimator \textbf{RegFeaL} in Section~\ref{sec:opti}. In Section~\ref{sec:statistical_consistency}, we discuss the \note{convergence of the empirical risk} of our estimator. In Section~\ref{sec:num}, we present numerical studies to illustrate the behaviour of \textbf{RegFeaL}. Finally, in Section~\ref{sec:conclusion}, we summarise our findings, highlight the contributions of our research, and discuss potential future directions.

\paragraph{Notations.}
Let $\mathbb{N}$ denote the set of non-negative integers and $\mathbb{N}^*$ the set of positive integers. For $d \in \mathbb{N}$, let $[d] = { 1, \ldots, d}$. Given $x\in \mathbb{R}^d$ and $a\in [d]$, $x_a$ represents the $a$-th component of $x$. Similarly, for $S\subset [d]$, $x_S$ denotes $(x_a)_{a \in S}$. Let $p,d \in \mathbb{N}^*$, and consider a matrix $A \in \mathbb{R}^{p \times d}$. The matrix $A_S$ corresponds to the columns of $A$ extracted using indices from $S$, while $A_{i,j}$ represents the element of $A$ in the $j$-th position of row $i$. The cardinality of a set $S$ is denoted by $|S|$. $I_d$ represents the $d\times d$ identity matrix, and $O_d$ denotes the set of $d\times d$ orthogonal matrices. For any $d \times d$ matrix $A$, $\trace(A)$ denotes its trace, and $\Diag(A)$ represents the diagonal matrix of size $d\times d$ with the diagonal elements of $A$. The transpose of a matrix $B$ is denoted by $B^\top$. For an invertible matrix $\Lambda$, $\Lambda^{-1}$ represents its inverse. Given $\eta \in \mathbb{R}^d$, $\Diag(\eta)$ is the diagonal matrix of size $d\times d$ with $\eta$ as its diagonal. For $r>0$, $\|\eta\|_r = \big(\sum_{a=1}^d |\eta_a|^r\big)^{1/r}$. For any $\alpha \in \mathbb{N}^d, \ |\alpha| = \sum_{a=1}^d \alpha_a$.

\section{Preliminaries}
\label{sec:preliminaries}
\subsection{Problem description}
We consider a standard regression problem, where we have access to a dataset $(x^{(i)}, y^{(i)})_{i \in [n]}, n \in \mathbb{N}^*$ consisting of independent and identically distributed (i.i.d.) realisations of a pair of random variables~$(X, Y)$ with probability measure~$\note{\nu}$ on~$\mathcal{X} \times \mathcal{Y} \subset \mathbb{R}^d \times \mathbb{R}$. Our objective is to estimate the regression function~$f^* := \argmin_{f \in \mathcal{F}} \mathcal{R}(f)$, where~$\mathcal{R}(f) := \mathbb{E}_{\note{\nu}}(\ell(Y, f(X)))$ is the risk,~$\ell$ is a loss function and~$\mathcal{F}$ a space of functions from~$\mathbb{R}^d$ to~$\mathbb{R}$. At this stage, we do not impose any assumptions regarding the choice of loss function or the data distribution.

We consider the multi-index model \cite{multi_index_model}, i.e., a model where the regression function depends on a low-rank linear transformation of the original variables.
\begin{assumption}[Feature learning] \label{ass:feature}
We assume that the regression function~$f^*$ can be expressed as the combination of a rank $s$ linear transformation~$P$ and a function~$g^*$ from~$\mathbb{R}^{s} \to \mathbb{R}$, i.e., 
\begin{equation*}
\exists s \in [d], \ \exists P \in \mathbb{R}^{d \times s}, \ P^\top P =I_{s},\  \exists g^* : \mathbb{R}^s \to \mathbb{R},  \ \forall x \in \mathbb{R}^d, \ f^*(x) = g^*(P^\top x).
\end{equation*}
\end{assumption}
We do not assume any prior knowledge about the value of $s$.  The model is nonparametric hence it remains broad. Our objective is to simultaneously estimate both $f^*$ and the associated linear transformation $P$, as well as the dimension $s$, by means of regularised empirical risk minimisation. Recall the definition of the empirical risk~$ \widehat{\mathcal{R}}(f):= \frac{1}{n} \sum_{i=1}^n \ell(y^{(i)}, f(x^{(i)}))$. This approach offers versatility, allowing its application to various scenarios. Although our focus lies on the regression setting, we acknowledge the potential of the regularisation-based method for future work in any setting where a risk can be defined.

\subsection{Penalising by derivatives}
\label{sec:derivatives}
In this context, it is common to employ derivative-based regularisation techniques \cite{babichev:hal-01388498, og_rosasco}. Under mild regularity assumptions, if we express~$f$ as~$f = g(Q^\top \cdot)$ with~$Q \in \mathbb{R}^{d \times s}$, then for all~$ x \in \mathbb{R}^d, \ \nabla f(x) \cdot \nabla f(x)^\top  = Q \nabla g(x) \cdot \nabla g(x)^\top  Q^\top$, where~$\nabla f(x) \in \mathbb{R}^d$ denotes the gradient of~$f$ at point~$x$. Consequently, we observe that
\begin{equation*}
    \int_{\mathcal{X}} \nabla f \nabla f^\top \note{\nu}  = \bigg(\int_{\mathcal{X}} \frac{\partial f}{\partial x_a} \frac{\partial f}{\partial x_b} \note{\nu} \bigg)_{a, b \in [d]}
\end{equation*}
has a rank of at most~$s$. This observation motivates us to employ the rank  of~$ \int_{\mathcal{X}} \nabla f \nabla f^\top \note{\nu}$ as a penalisation. However, the discontinuous nature of the rank makes this approach challenging for optimisation. To address this, we could penalise  instead by~$ \trace\big(\int_{\mathcal{X}} \nabla f \nabla f^\top \note{\nu}\big)$ as a convex relaxation~\cite{recht_trace_norm}.

This strategy would extend the work of \cite{og_rosasco}, which focuses on variable selection, a special case of feature learning. It corresponds to the constraint that~$P$ from Assumption~\ref{ass:feature} only contains~$0$ and~$1$ (with exactly a single one in each column), resulting in a model where the regression function depends on a limited number of the original variables.
\begin{assumption}[Variable selection] \label{ass:variable}
We assume that~$f^*$, the regression function, actually only depends on~$s$ of the~$d$ variables, i.e., 
\begin{equation*}
\exists s \in [d], \ \exists S \subset [d], \ |S| = s, \ \exists \ g^* : \mathbb{R}^{s} \to \mathbb{R}, \ \forall x \in \mathbb{R}^d, \ f^*(x) = g^*(x_{S}).
\end{equation*}
\end{assumption}
\note{In this variable selection setting}, we can remark that it suffices to penalise by a simpler quantity. Specifically, under some mild regularity assumptions on the function~$f$, $f$ does not depend on variable~$x_a$ if and only if the partial derivative of~$f$ with respect to~$x_a$, denoted by~$\frac{\partial f}{\partial x_a}$, is null everywhere on~$\mathcal{X}$. Hence, the task is to design a penalty that enforces sparsity in the dependence on different variables.

\note{To address this, we can draw inspiration from the group Lasso \cite{group_lasso}, which extends the Lasso method to enable structured sparsity. The group Lasso encourages groups of related quantities to be selected or excluded together by penalising the sum over each group using an appropriate penalty. For example, the derivatives with regard to a variable $x_a$ at data points $x^{(i)}$ should all be null if the function does not depend on variable $x_a$. Hence, they constitute a relevant group for group Lasso.} 

Combining these observations, \cite{og_rosasco} proposed a strategy using the fact that for all $a \in [d]$, $f$ does not depend on $x_a$ if and only if $\int_{\mathcal{X}} \left(\frac{\partial f}{\partial x_a}(x)\right)^2 \note{\nu} = 0$. They introduced penalties on each variable and summed them to obtain the penalty $  \sum_{a=1}^d \big(  \int_{\mathcal{X}} \big(\frac{\partial f}{\partial x_a}(x) \big)^2 \note{\nu}(x) \mathrm{d}x \big)^{1/2}$. However, since these quantities are intractable due to the unknown nature of $\note{\nu}$, they use a data-dependent penalty instead
\begin{equation*}
    \sum_{a=1}^d \left(\frac{1}{n} \sum_{i=1}^n \left(\frac{\partial f}{\partial x_a}(x^{(i)})\right)^2\right)^{1/2}.
\end{equation*}
By assuming that~$f$ belongs to some regular reproducing kernel Hilbert space (RKHS), the partial derivatives are easily computable, and so is the penalty \cite{og_rosasco} (for a good introduction to RKHS, see \cite{aronszajn50reproducing}). However, this regularisation by an estimate of the~$L^2$ norms of derivatives in the context of RKHS is not suitable. Functions that depend on a single variable, such as $x_1$, do not belong to the RKHS, making it an inappropriate space for addressing this type of problem. Additionally, another regularisation by the norm in the RKHS is required, introducing an extra hyperparameter. Moreover, using derivatives only at the data points limits the exploitation of the power of regularity.

We are confronted with two challenges here. First, how can the penalisation scheme be improved for variable selection? Second, how can it be adapted for feature learning? While our primary goal is the latter, we consider the former as a by-product of our methodology.

To address both challenges, we employ Hermite polynomials \cite{hermite_2009}, although it is worth noting that various other alternatives could have been considered for the first problem where rotation invariance is not needed.

\subsection{Hermite polynomials for variable selection}
To facilitate understanding, let us first consider the simpler case of variable selection. We employ multidimensional Hermite polynomials due to their suitability for both variable selection and feature learning. The normalised one-dimensional Hermite polynomials~$(h_k(x))_{k \geq 0}$ form an  orthonormal polynomial basis for the standard Gaussian measure on~$\mathbb{R}$ with density~$\frac{1}{\sqrt{2\pi}} e^{-x^2/2}$. The first few polynomials are given by\footnote{Given the regular ``physicist'' Hermite polynomials~$H_k$ (not to be confused with multivariate polynomials defined in Equation~\eqref{eq:multivariate-hermite}), we have~$h_k(x) = \frac{1}{\sqrt{2^k k!}} H_k(x/\sqrt{2})$ for any~$k \in \mathbb{N}$ and  for the ``probabilist'' Hermite polynomials $He_k$, we have~$h_k(x) = \frac{1}{\sqrt{n!}}He_k(x)$.}
\begin{align*}
h_0(x)  =  1, \ h_1(x)  =  x, \ h_2(x)  =  \frac{1}{\sqrt{2}} ( x^2 - 1 ), \ h_3(x)  =  \frac{1}{\sqrt{6}} ( x^3 - 3x ).
\end{align*}
These polynomials possess useful properties that allow their recursive computation and characterise their growth and their derivatives
\begin{align}
h_{n+2}(x) & = \frac{x}{\sqrt{n+2}}\cdot h_{n+1}(x) - \sqrt{\frac{n+1}{n+2}} \cdot h_n(x) \\
h_n'(x) & =  \sqrt{n} \cdot h_{n-1}(x)\label{eq:hermite_reccurence_relation_derivative} \\
|h_n(x)| & \leq  \exp(x^2/4). \label{eq:hermite_bound} 
\end{align}
The last property can be proved using  Hermite functions and Cramer's inequality \cite{szegő1939orthogonal}.

Next, we define the multivariate polynomials as follows
\begin{equation}
\label{eq:multivariate-hermite}
\big(H_\alpha \big)_{\alpha \in \mathbb{N}^d} \text{ where } \forall x \in \mathbb{R}^d, \ H_\alpha(x) = \prod_{a=1}^d h_{\alpha_a} (x_a).
\end{equation} 
This family forms an orthonormal basis of the space~$L^2(q) := \big\{f : \mathbb{R}^d  \to \mathbb{R}, \ \int_{\mathbb{R}^d} f^2q < + \infty  \big\}$ where~$ q(x) = \frac{1}{(2\pi)^{d/2}} e^{-\|x\|^2/2}$ denotes the standard normal distribution on~$\mathbb{R}^d$. We now present a Lemma which justifies the use of the multivariate Hermite polynomials in the variable selection setting.

\begin{lemma}[\note{Equivalence for dependency on variables}]
\label{lem:sparse_variable}
Let~$f \in L^2(q)$ and express it as~$f = \sum_{\alpha \in \mathbb{N}^d} \hat{f}(\alpha) H_\alpha$. Then for any~$b \in [d]$,
\begin{equation*}
f \text{ does not depend on variable } x_b \iff   \forall \alpha \in \big(\mathbb{N}^d\big)^*, \ \alpha_b \neq 0 \implies \hat{f}(\alpha) =0.
\end{equation*}
\end{lemma}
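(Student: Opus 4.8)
The plan is to work directly with the Hermite coefficients $\hat f(\alpha) = \langle f, H_\alpha\rangle_{L^2(q)} = \int_{\mathbb{R}^d} f(x) H_\alpha(x) q(x)\, dx$ and to exploit two structural facts: that the Gaussian weight is a product measure, $q(x) = \prod_{a=1}^d \frac{1}{\sqrt{2\pi}} e^{-x_a^2/2}$, and that $H_\alpha$ is the corresponding product $\prod_{a=1}^d h_{\alpha_a}(x_a)$ with $h_0 \equiv 1$. I would prove the two implications separately, though they can also be unified through an orthogonal-decomposition argument described at the end.

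For the forward direction, I assume $f$ does not depend on $x_b$, i.e.\ $f(x) = \tilde f(x_{[d]\setminus\{b\}})$ for a function $\tilde f$ of the remaining coordinates. Since $f, H_\alpha \in L^2(q)$, Cauchy--Schwarz gives $f H_\alpha \in L^1(q)$, so Fubini applies and I may separate the $x_b$-integral from the rest:
\begin{equation*}
\hat f(\alpha) = \Big(\int_{\mathbb{R}^{d-1}} \tilde f \prod_{a\neq b} h_{\alpha_a}(x_a)\, q_{-b}\Big) \cdot \Big(\int_{\mathbb{R}} h_{\alpha_b}(x_b) \tfrac{1}{\sqrt{2\pi}} e^{-x_b^2/2}\, dx_b\Big),
\end{equation*}
where $q_{-b}$ is the standard Gaussian on the remaining $d-1$ coordinates. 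The last factor equals $\langle h_{\alpha_b}, h_0\rangle = \delta_{\alpha_b, 0}$ by one-dimensional orthonormality, so it vanishes whenever $\alpha_b \neq 0$, giving $\hat f(\alpha) = 0$ as required.

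For the reverse direction, I assume $\hat f(\alpha) = 0$ for all $\alpha$ with $\alpha_b \neq 0$. Then the $L^2(q)$-convergent expansion reduces to $f = \sum_{\alpha : \alpha_b = 0} \hat f(\alpha) H_\alpha$, and each surviving $H_\alpha$ satisfies $h_{\alpha_b} = h_0 \equiv 1$, hence depends only on the coordinates other than $x_b$. Passing this independence through the $L^2$ limit is exactly the delicate point: I would argue that the closed linear span of $\{H_\alpha : \alpha_b = 0\}$ coincides with the subspace of $L^2(q)$ of functions not depending on $x_b$. This holds because $\{\prod_{a\neq b} h_{\alpha_a}\}$, indexed by the nonnegative exponents on the coordinates $a \neq b$, is an orthonormal basis of $L^2(q_{-b})$ (completeness of $(h_k)_k$ in one dimension, tensorised), and this space embeds into $L^2(q)$ precisely as the functions independent of $x_b$. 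Membership of $f$ in this closed span then yields that $f$ does not depend on $x_b$.

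The main obstacle is therefore not the algebra but the correct $L^2$ reading of ``does not depend on $x_b$'' and the verification that the two candidate subspaces --- the closed span of $\{H_\alpha : \alpha_b = 0\}$ and the functions independent of $x_b$ --- actually agree; once this identification is secured, both implications collapse to the single equivalence that $f \perp H_\alpha$ for all $\alpha$ with $\alpha_b \neq 0$ if and only if $f$ lies in that subspace. I would also record the routine integrability checks (Cauchy--Schwarz for $L^1$-membership and the ensuing Fubini) that license the factorisation used in the forward direction.
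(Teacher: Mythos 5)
Your proof is correct, and it rests on the same structural fact as the paper's: the Hermite expansion of $f$ splits according to whether $\alpha_b=0$ or $\alpha_b\neq 0$, and the $\alpha_b=0$ terms are precisely those not involving $x_b$. The difference is in execution rather than in the core idea. The paper's proof is a one-line decomposition --- it writes $f$ as the sum of the two blocks, labels one as independent of $x_b$ and the other as ``depending on $x_b$'', and stops --- leaving implicit both directions of the equivalence: why independence of $x_b$ forces the coefficients with $\alpha_b\neq 0$ to vanish, and why vanishing of those coefficients lets independence of $x_b$ survive the passage to the $L^2$ limit. You fill both gaps: the forward direction by the Fubini factorisation and the one-dimensional orthogonality $\langle h_{\alpha_b}, h_0\rangle = \delta_{\alpha_b,0}$ (equivalently, uniqueness of Hermite coefficients), and the reverse direction by identifying the closed span of $\{H_\alpha : \alpha_b=0\}$ with the isometrically embedded copy of $L^2(q_{-b})$ inside $L^2(q)$, which is exactly the closed subspace of functions not depending on $x_b$. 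What the paper's version buys is brevity and the intuitive picture; what yours buys is a complete argument in the $L^2$ setting, where ``does not depend on $x_b$'' must be read almost everywhere and the series converges only in norm, so the limit-passing step you isolate is precisely the point that needs justification.
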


\begin{proof}[\note{Proof of Lemma~\ref{lem:sparse_variable}}]
For~$x \in \mathbb{R}^d$, we have $h_0(x)=1$ and
\begin{align*}
f(x) = &\underbrace{\hat{f}(0) + \sum_{\alpha \in (\mathbb{N}^d)^*,\ \alpha_b =0} \hat{f}(\alpha) \prod_{a\in [d] \setminus \{b\}} h_{\alpha_a}(x_a)} _{\text{does not depend on} \ x_b} \\ 
& \hspace*{4cm} + \underbrace{\sum_{\alpha \in (\mathbb{N}^d)^*, \ \alpha_b >0} \hat{f}(\alpha) \prod_{a\in [d]} h_{\alpha_a}(x_a)}_{\text{depends on} \ x_b},
\end{align*} 
i.e., $f$ can be decomposed into two additive components, one of which does not depend on~$x_b$. For the component that depends on~$x_b$, it is the sum over~$\alpha \in \mathbb{N}^d$ such that~$\alpha_b$ is nonzero, yielding the result.
\end{proof}

We observe that when $f$ does not depend on a variable, it corresponds to a specific sparsity pattern in the coefficients $\hat{f}(\alpha)$ with respect to the basis $(H_\alpha)_{\alpha \in \mathbb{N}^d}$. \note{Indeed, if $f$ does not depend on $x_b$, all coefficients $\hat{f}(\alpha) $ for $ \alpha $ in the group $\{ \alpha \in \big(\mathbb{N}^d\big)^*, \alpha_b > 0 \}$ must be null. These groups overlap for different variables, and a similar argument holds for feature learning as we will see in Section~\ref{sec:feature}.} \note{This specific sparsity pattern motivates the use of a penalty based on group Lasso \cite{group_lasso}, and more specifically overlapping group Lasso~\cite{jenatton2011structured}.}

Hence, the Hermite polynomial basis is well-suited to this variable selection setting, while the space $L^2(q)$ is sufficiently large to describe a wide range of functions. However, it is worth noting that other spaces and well-adapted bases, such as any orthonormal basis of square-integrable functions, could also be used. Moreover, we use the Gaussian measure only to define the basis, and our method can be applied to all distributions.

To define a penalty relevant to variable selection, we examine the derivatives of $H_\alpha$. Here, we decompose any~$f \in \mathcal{F}$  as~$f = \sum_{\alpha \in \mathbb{N}^d} \hat{f}(\alpha) H_\alpha$.  Let~$e_a$ denote the~$a$-th element of the canonical basis of~$\mathbb{R}^d$, for~$a \in [d]$. Using Equation~\eqref{eq:hermite_reccurence_relation_derivative}, we obtain the following identities
\begin{align}
\frac{\partial H_\alpha}{\partial x_a}  &= \sqrt{\alpha_a} H_{\alpha - e_a} \\
\frac{\partial f}{\partial x_a}  &= \sum_{\alpha \in (\mathbb{N}^d)^*}\sqrt{\alpha_a} \hat{f}(\alpha) H_{\alpha - e_a} \label{eq:partial_f} \\
\int_{\mathbb{R}^d} \bigg(\frac{\partial f}{\partial x_a} \bigg)^2 q &= \sum_{\alpha \in (\mathbb{N}^d)^*}\alpha_a \hat{f}(\alpha)^2 .\label{eq:derivatives_a}
\end{align}

However, we remark that Equation~\eqref{eq:derivatives_a} corresponds to the expected version of the penalty proposed by \cite{og_rosasco} (when~$\note{\nu} = q$), which we deemed not suitable for our problem: indeed, penalising the \note{$L^2$}-norm of derivatives does not impose enough regularity for statistically efficient non-parametric estimation and thus requires extra regularisation, as specified by \cite{og_rosasco}.

We consider instead introducing a sequence~$(c_k)_{k > 0}$ of non-negative reals, to further regularise and avoid the need for additional regularisation.  We consider the space~$\mathcal{F}$, spanned by the family composed of~$H_\alpha$ for~$\alpha =0$ or~$\alpha \in (\mathbb{N}^d)$ such that~$c_{|\alpha|}>0$, i.e.,~$  \mathcal{F} := \Vect \big( \{H_0\} \cup \{H_\alpha,  \text{ for } \alpha \in (\mathbb{N}^d)^* \text{ such that } c_{|\alpha|>0} \} \big)$  and consider two penalties. First, we define a sparsity-inducing penalty, which depends on a hyper-parameter~$r \in (0, +\infty)$
\begin{equation*}
\Omega_{\mathrm{var}} (f)  =  \bigg(\sum_{a = 1}^d \bigg( \sum_{\alpha \in (\mathbb{N}^d)^*} \!\! \alpha_a \frac{1}{c_{|\alpha|}} \hat{f}(\alpha)^2 \bigg)^{r/2}\bigg)^{1/r}.
\end{equation*}
This penalty encourages sparsity in the dependence of $f$ on individual variables, as it pushes quantities of the form~$ \big( \sum_{\alpha \in (\mathbb{N}^d)^*} \alpha_a \frac{1}{c_{|\alpha|}} \hat{f}(\alpha)^2 \big)^{r/2}~$ to be~$0$. When this is the case, we obtain that~$ \forall \alpha \in (\mathbb{N}^d)^*, \ \alpha_a \neq 0, \hat{f}(\alpha) =0$, i.e.,~$f$ does not depend on variable~$x_a$  (Lemma~\ref{lem:sparse_variable}). When~$r\geq 1$,~$\Omega_{\rm var}$ is a norm, which makes the problem easier to study from a theoretical point of view because if the loss is convex, this will yield a convex optimisation problem. \note{However, estimators obtained through regularised empirical risk minimisation often suffer from bias due to the strong shrinkage associated with sparsity. Convex penalties can inadvertently reduce the significance of essential variables or features by excessive shrinkage to enforce sparsity. To address these issues, one can retrain on the set of selected variables or use concave penalties, which, despite presenting more analytical challenges, frequently deliver superior results by pushing the solution towards the boundary and enhancing sparsity \cite{zhang2010nearly, livreviolet}. In this work, we adopt this strategy through the hyper-parameter $r$ when $r<1$, which is the choice used in practice, while $r=1$ is used in the theoretical analysis.}

The link with the nullity of the derivative can be seen using Equation~\eqref{eq:derivatives_a}
\begin{equation*}
\bigg( \sum_{\alpha \in (\mathbb{N}^d)^*}\!\! \alpha_a \frac{1}{c_{|\alpha|}} \hat{f}(\alpha)^2 \bigg)^{r/2} = 0\  \iff \  \int_{\mathbb{R}^d} \bigg(\frac{\partial f}{\partial x_a} \bigg)^2 q =0.    
\end{equation*}

Next, we introduce a smoothness-inducing norm, which penalises higher-order polynomials, i.e., those with large~$|\alpha|$ (the dependence only on $|\alpha|$ is needed for future rotation invariance)
\begin{equation*}
        \Omega_0(f) =  \bigg( \sum_{\alpha \in (\mathbb{N}^d)^*}  \! \frac{1}{c_{|\alpha|}} \hat{f}(\alpha)^2 \bigg)^{1/2}.
\end{equation*}
It is important to note that $\Omega_0$ is not integrated into the theoretical analysis and will be used with a much smaller and fixed parameter compared to~$\Omega_{\rm var}$. Its primary purpose is to enforce numerical stability during the optimisation procedure, as discussed in Section \ref{sec:opti}.

The choice of~$(c_k)_{k \in \mathbb{N}^*}$ significantly influences the behaviour of the penalties.  In this work, we will consider two specific choices:~$c_k = \mathds{1}_{k \leq M}$ for some~$M \in \mathbb{N}$ and~$c_k = \rho^k$ for some~$\rho \in [0,1)$. Both choices ensure that all three penalties are well-defined. Notably, when~$M=1$,~$\Omega_{\mathrm{var}}$ considered with the quadratic loss reduces to the \note{basic Lasso problem with linear features} \cite{lasso}.

It is worth mentioning that the coefficient $\hat{f}(0)$, which corresponds to the constant function $H_0 = 1$, is never penalised because it does not depend on any of the variables.

We then consider estimating~$f^*$ in the setting described in Assumption~\ref{ass:variable} by
\begin{equation}
\label{eq:minimisation_problem_variable}
f^{\lambda, \mu}_{\mathrm{var}} := \argmin_{f \in \mathcal{F}} \ \widehat{\mathcal{R}}(f) + \lambda\Omega_0^2(f) + \mu\Omega_{\mathrm{var}}^r(f),
\end{equation}
with~$\lambda$ a fixed parameter and~$\mu$ a hyper-parameter to be estimated. When $r \geq 1$ \note{and the loss is convex}, we obtain a strongly-convex objective function, hence with a unique global minimiser. When $r<1$, which we use in practice, only a local minimiser can be reached.

\subsection{Hermite polynomials for feature learning}
\label{sec:feature}
We now turn to the feature learning setting described in Assumption~\ref{ass:feature}. The Hermite polynomials are particularly well-suited for feature learning, as they allow us to bridge the gap between variable selection and feature learning with only a minor modification of the previous penalties. This suitability is visible in some important properties which we now describe. First, the multivariate Hermite polynomials possess a rotation invariance property.

\begin{lemma}[\note{Rotational invariance property of Hermite polynomials}]
\label{lem:invariance_property}
For any~$x, x^\prime \in \mathbb{R}^d$, any~$k \in \mathbb{N}$ and any orthogonal matrix~$R \in O_d$, 
\begin{equation*}
\sum_{|\alpha| = k}  {H_\alpha(x) H_\alpha(x^\prime)} = \sum_{|\alpha| = k}  {H_\alpha(R x) H_\alpha(R x^\prime)}.
\end{equation*}
\end{lemma}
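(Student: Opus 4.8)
The plan is to recognise the left-hand side as the reproducing kernel of a \emph{rotation-invariant} finite-dimensional subspace of $L^2(q)$, so that the whole identity collapses to the orthogonality of the change-of-basis matrix induced by $R$. For each $k\in\mathbb{N}$ set $V_k := \Vect\{H_\alpha : |\alpha| = k\}$; since $(H_\alpha)_{|\alpha|=k}$ is, by the excerpt, an orthonormal family in $L^2(q)$, it is an orthonormal basis of $V_k$. The first step is to show that $V_k$ is preserved by the composition operator $T_R : f \mapsto f(R\,\cdot)$ for every $R\in O_d$. I would do this via the differential operator $\mathcal{L} := \Delta - x\cdot\nabla$: using only the recurrences $h_n' = \sqrt{n}\,h_{n-1}$ and the three-term relation from the excerpt, a short one-dimensional computation gives $h_n'' - x h_n' = -n\,h_n$, and the product structure $H_\alpha = \prod_a h_{\alpha_a}(x_a)$ then yields $\mathcal{L}H_\alpha = -|\alpha|\,H_\alpha$. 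Hence $V_k$ is exactly the $(-k)$-eigenspace of $\mathcal{L}$ on polynomials.

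Next I would check that $\mathcal{L}$ commutes with $T_R$. The Laplacian is invariant under orthogonal changes of variable, and for the first-order part $\nabla(f\circ R) = R^\top(\nabla f)(R\,\cdot)$ gives $x\cdot\nabla(f\circ R)(x) = (Rx)\cdot(\nabla f)(Rx)$, so $\mathcal{L}(T_R f) = T_R(\mathcal{L}f)$. Consequently $T_R$ maps the eigenspace $V_k$ into itself. Because the Gaussian weight satisfies $q(Rx)=q(x)$ and Lebesgue measure is rotation invariant, $T_R$ is moreover an isometry of $L^2(q)$. Writing its action on the orthonormal basis as $H_\alpha(Rx) = \sum_{|\beta|=k} O_{\alpha\beta}\,H_\beta(x)$, the isometry property forces the matrix $O = (O_{\alpha\beta})_{|\alpha|=|\beta|=k}$ to be orthogonal, i.e. $\sum_{|\alpha|=k} O_{\alpha\beta}\,O_{\alpha\gamma} = \delta_{\beta\gamma}$.

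The conclusion is then pure bookkeeping: substituting the expansion into the right-hand side,
\begin{equation*}
\sum_{|\alpha|=k} H_\alpha(Rx)H_\alpha(Rx') = \sum_{|\beta|,|\gamma|=k}\Big(\sum_{|\alpha|=k} O_{\alpha\beta}O_{\alpha\gamma}\Big) H_\beta(x)H_\gamma(x') = \sum_{|\beta|=k} H_\beta(x)H_\beta(x'),
\end{equation*}
using $O^\top O = I$ to kill the off-diagonal terms. The main obstacle is really the first step: one cannot expand $H_\alpha(Rx)$ coordinatewise, since $R$ mixes the variables and a single factor $h_{\alpha_a}$ does not transform simply, so the invariance of $V_k$ must be argued abstractly rather than by direct manipulation. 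The eigenvalue characterisation through $\mathcal{L}$ is precisely what makes that invariance accessible.

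As an alternative and a sanity check, one can bypass the operator argument with the multivariate Mehler generating function: taking the product over coordinates of the one-dimensional identity $\sum_{n\ge 0} t^n h_n(u)h_n(v) = (1-t^2)^{-1/2}\exp\!\big(\tfrac{2tuv - t^2(u^2+v^2)}{2(1-t^2)}\big)$ gives $\sum_{\alpha} t^{|\alpha|} H_\alpha(x)H_\alpha(x') = (1-t^2)^{-d/2}\exp\!\big(\tfrac{2t\langle x,x'\rangle - t^2(\|x\|^2+\|x'\|^2)}{2(1-t^2)}\big)$. The right-hand side depends on $(x,x')$ only through $\langle x,x'\rangle$, $\|x\|^2$ and $\|x'\|^2$, all unchanged under $R$, so the generating function is invariant and matching the coefficient of $t^k$ yields the lemma; here the difficulty simply shifts to justifying Mehler's formula and the validity of comparing power-series coefficients for $|t|<1$.
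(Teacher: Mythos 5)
Your proof is correct, and it takes a genuinely different route from the paper's. The paper argues computationally: starting from the representation $h_k(z)=\frac{1}{\sqrt{k!}}\,\mathbb{E}_{Y\sim\mathcal{N}(0,1)}\big((z+iY)^k\big)$, the multinomial theorem collapses $\sum_{|\alpha|=k}H_\alpha(x)H_\alpha(x^\prime)$ into $\frac{1}{k!}\,\mathbb{E}_{Y,Y^\prime\sim\mathcal{N}(0,I_d)}\big(\big(x^\top x^\prime-Y^\top Y^\prime+i(x^\top Y^\prime+Y^\top x^\prime)\big)^k\big)$, which depends on $(x,x^\prime)$ only through inner products, and the rotation invariance of the Gaussian law of $(Y,Y^\prime)$ finishes the argument; your Mehler-formula variant is close in spirit to this, trading the complex-Gaussian identity for Mehler's. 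Your main argument is structural instead: you identify $V_k=\Vect\{H_\alpha:|\alpha|=k\}$ as the $(-k)$-eigenspace of the Ornstein--Uhlenbeck operator $\mathcal{L}=\Delta-x\cdot\nabla$ on polynomials, check that $\mathcal{L}$ commutes with $T_R:f\mapsto f(R\,\cdot)$ and that $T_R$ is an isometry of $L^2(q)$, and deduce that $T_R$ acts on the orthonormal basis $(H_\alpha)_{|\alpha|=k}$ by a finite orthogonal matrix $O$, so the degree-$k$ kernel is unchanged; the final resummation correctly uses $O^\top O=I$, which follows from the row orthonormality $OO^\top=I$ supplied by the isometry, since $O$ is square. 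The individual steps check out: $h_n^{\prime\prime}-xh_n^\prime=-nh_n$ does follow from the derivative recurrence $h_n^\prime=\sqrt{n}\,h_{n-1}$ together with the three-term relation, and the commutation computations are right. The one point you should make explicit is that $V_k$ is \emph{exactly} the polynomial $(-k)$-eigenspace (needed to conclude $T_RH_\alpha\in V_k$), which requires that the $H_\alpha$ span all polynomials; this holds because $\deg h_n=n$, so the change of basis to monomials is triangular. In exchange for being longer, your argument is more conceptual and more general: it exhibits the lemma as an instance of the fact that the reproducing kernel of the orthogonal projection onto any $O_d$-invariant subspace of $L^2(q)$ is rotation invariant, whereas the paper's computation is shorter, self-contained, and yields an explicit closed form for the degree-$k$ kernel.
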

The proof of this lemma is available in Appendix~\ref{proof:invariance_property}. This property will be extremely useful to characterise the statistical behaviour of our methods, as discussed in Section~\ref{sec:statistical_consistency}. Another key property is that for any~$R \in O_d$, the family~$\big(H_\alpha(R\cdot)\big)_{\alpha \in \mathbb{N}^d}$ also forms a basis of~$L^2(q)$. Consequently, we can express any $f \in \mathcal{F}$ in this basis.

Moreover, we can characterise the derivatives of functions in~$L^2(q)$ as in Equation~\eqref{eq:derivatives_a}. Let~$f\in \mathcal{F}$ be written as~$f = \sum_{\alpha \in \mathbb{R}^d} \hat{f}(\alpha) H_\alpha$, then using Equation~\eqref{eq:partial_f}, we have the following expressions for the derivatives
\begin{equation}
\label{eq:derivatives_ab}
\int_{\mathbb{R}^d} \bigg(\frac{\partial f}{\partial x_a} \bigg) \bigg(\frac{\partial f}{\partial x_b} \bigg) q = \sum_{\alpha \in \mathbb{N}^d}\sqrt{(\alpha_a +1)}\sqrt{(\alpha_b+1)} \hat{f}(\alpha +e_a) \hat{f}(\alpha + e_b). 
\end{equation}

As before, we aim to enhance the regularisation using the sequence~$(c_k)_{k>0}$ For~$r \in (0, +\infty)$, we define
\begin{align}
   \nonumber \Omega_{\mathrm{feat}}(f)  &= \big( \trace \big( M_f^{r/2} \big)  \big)^{1/r} \\
\mbox{ with } (M_f)_{a,b} & =  \sum_{\alpha \in \mathbb{N}^d}  \frac{1}{c_{|\alpha|+1} }
 \sqrt{ \alpha_a+ 1} \sqrt{ \alpha_b+ 1}  \hat{f}(\alpha + e_a)  \hat{f}(\alpha + e_b), \ a,b \in [d]. \label{eq:m_f}
\end{align}

It is worth noting that~$M_f$ is a positive semi-definite matrix (see the proof of Lemma~\ref{lem:properties_of_omega_feat}). The penalty~$\Omega_{\rm feat}$  pushes the eigenvalues of of~$M_f$ towards~$0$, and since the rank of~$M_f$ is equal to the number of its non-zero eigenvalues, the penalty encourages the rank of~$M_f$ to be low. It is crucial that~$c_{|\alpha|}$ depends solely on~$|\alpha|$ and not on any other quantities \note{depending on}~$\alpha$ (e.g., ~$\max_{a \in [d]} \alpha_a$ for example). This property allows us to leverage the rotation invariance property described in Lemma~\ref{lem:invariance_property}, which is needed for our estimation algorithm in Section~\ref{sec:opti} and for obtaining statistical consistency results in Section~\ref{sec:statistical_consistency}.

Let us now examine some important properties of the proposed regularisation. \begin{lemma}[\note{Properties of the regularisation}]
\label{lem:properties_of_omega_feat}
For any~$f \in \mathcal{F}$, the following properties hold\begin{enumerate}
\item  Let~$R \in O_d$, if we define~$g = f(R \cdot)$, then~$M_f = R M_g R^\top$ and~$\Omega_{\rm feat}(f) = \Omega_{\rm feat}(g)$.
\item~$\Omega_{\rm var}(f) = \big( \trace \big( \Diag(M_f)^{r/2} \big)  \big)^{1/r}$.
\item If~$M_f$ is diagonal,~$\Omega_{\mathrm{feat}}(f) = \Omega_{\mathrm{var}}(f)$.
\item Let~$M_f = U D U^\top$ be the eigendecomposition of $M_f$, where~$U \in O_d$ and~$D$ is a diagonal matrix. If we define~$g=f(U\cdot)$, then~$M_g = D$ is diagonal and thus~$\Omega_{\mathrm{feat}}(f) = \Omega_{\mathrm{var}}(g)$.
\item Let~$M_f = U D U^\top$ be the eigendecomposition as above. If the rank of~$D$ is~$s$, then~$g = f(U\cdot)$ only depends on variables~$x_a$ where~$D_a>0$ and~$f=g(U^\top \cdot)$ only depends on~$s$ linear transformations of the original coordinates, namely of~$(U^\top x)_a$ for~$a$ such that~$D_a >0$.
\item If~$r=1, 
$\begin{equation*}
    \Omega_{\mathrm{feat}}(f) \geq  \inf_{R \in O_d} \Omega_{\mathrm{var}}(f(R \cdot)).
\end{equation*}
\end{enumerate}
\end{lemma}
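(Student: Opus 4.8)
The plan is to establish the inequality by producing a single orthogonal matrix at which $\Omega_{\mathrm{var}}(f(R\cdot))$ already attains the value $\Omega_{\mathrm{feat}}(f)$; the infimum over $O_d$ is then automatically at most this value. The natural candidate is the eigenbasis of $M_f$, which is precisely the rotation used in item~4 of this lemma.

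First I would specialise the two penalties to $r=1$. Since $M_f$ is positive semi-definite, write its eigendecomposition $M_f = U D U^\top$ with $D = \Diag(\lambda_1, \dots, \lambda_d)$ and $\lambda_a \geq 0$; then $\Omega_{\mathrm{feat}}(f) = \trace(M_f^{1/2}) = \sum_{a=1}^d \sqrt{\lambda_a}$. On the other side, by item~1 the rotated function $g = f(R\cdot)$ satisfies $M_g = R^\top M_f R$, so $\Omega_{\mathrm{var}}(f(R\cdot)) = \trace(\Diag(R^\top M_f R)^{1/2}) = \sum_{a=1}^d \sqrt{(R^\top M_f R)_{aa}}$.

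Next I would invoke item~4 directly: choosing $R = U$ makes $M_{f(U\cdot)} = D$ diagonal, so that $\Omega_{\mathrm{var}}(f(U\cdot)) = \trace(D^{1/2}) = \sum_a \sqrt{\lambda_a} = \Omega_{\mathrm{feat}}(f)$. Since $U \in O_d$, the infimum is bounded above by this particular value, giving $\inf_{R \in O_d} \Omega_{\mathrm{var}}(f(R\cdot)) \leq \Omega_{\mathrm{var}}(f(U\cdot)) = \Omega_{\mathrm{feat}}(f)$, which is exactly the claimed inequality.

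As stated, the inequality therefore presents essentially no obstacle: it reduces to the achievability statement already contained in item~4, and in this form nothing beyond positive semi-definiteness of $M_f$ is required. The place where the restriction $r = 1$ genuinely matters---and where I would expect the real work to lie---is if one wants to strengthen this to the equality $\Omega_{\mathrm{feat}}(f) = \inf_{R \in O_d} \Omega_{\mathrm{var}}(f(R\cdot))$, i.e.\ to show that $U$ is a minimiser. That requires the reverse bound $\Omega_{\mathrm{var}}(f(R\cdot)) \geq \Omega_{\mathrm{feat}}(f)$ for every $R$, which I would obtain from the Schur--Horn theorem (the diagonal of the symmetric matrix $R^\top M_f R$ is majorised by its spectrum $(\lambda_a)_a$) together with the concavity of $t \mapsto \sqrt{t}$, so that $x \mapsto \sum_a \sqrt{x_a}$ is Schur-concave and reverses the majorisation. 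This majorisation argument uses the spectral form of the square-root trace that is special to $r=1$, and it is the genuinely nontrivial ingredient once one moves past the one-line bound obtained from item~4.
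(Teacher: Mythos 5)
Your argument for the final inequality (item~6) is correct and is exactly the paper's own proof of that item: both take the eigendecomposition $M_f = U D U^\top$, pass to $g = f(U\cdot)$ so that $M_g = D$ is diagonal, conclude $\Omega_{\mathrm{feat}}(f) = \Omega_{\mathrm{feat}}(g) = \Omega_{\mathrm{var}}(g)$, and bound the infimum by the value at this particular rotation. Your closing remark is also sound and goes beyond the paper: by Schur--Horn the diagonal of $R^\top M_f R$ is majorised by the spectrum of $M_f$, and since $x \mapsto \sum_{a} \sqrt{x_a}$ is Schur-concave this gives $\Omega_{\mathrm{var}}(f(R\cdot)) \geq \Omega_{\mathrm{feat}}(f)$ for every $R \in O_d$, so the infimum is in fact attained at $U$ and equality holds --- a correct strengthening, though not claimed in the lemma.

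The genuine gap is one of scope: the statement comprises six assertions, and your proposal proves only the sixth, while explicitly invoking items~1 and~4 as known. That is circular unless those items are themselves established, and they carry essentially all of the content. In the paper, the substantive work is item~1: proving $M_g = R^\top M_f R$ requires rewriting $z^\top M_f z = \sum_{\alpha \in \mathbb{N}^d} \frac{1}{c_{|\alpha|+1}} \langle z^\top \nabla f, H_\alpha \rangle_{L^2(q)}^2$ (which also shows $M_f \succeq 0$, a fact you use), then combining the chain rule $\nabla g(x) = R^\top \nabla f(Rx)$ with the rotational invariance of Hermite polynomials (Lemma~\ref{lem:invariance_property}) and the rotation invariance of the standard Gaussian measure. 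Item~2 is a separate coefficient computation, $\Diag(M_f)_{a,a} = \sum_{\alpha, \, \alpha_a > 0} \frac{\alpha_a}{c_{|\alpha|}} \hat{f}(\alpha)^2$, and items~3--5 follow from these. None of this appears in your proposal, so as a proof of the lemma as stated it is incomplete; as a proof of item~6 granted the earlier items, it coincides with the paper's argument.
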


\begin{proof}[\note{Proof of Lemma~\ref{lem:properties_of_omega_feat}}]
We proceed by proving each assertion separately.
\begin{enumerate}
\item   We have for $z \in \mathbb{R}^d$
\begin{align*}
z^\top M_f z & =  \sum_{a,b=1}^d
\sum_{\alpha \in \mathbb{N}^d}  \frac{1}{c_{|\alpha|+1} }
 z_a z_b \sqrt{ \alpha_a+ 1} \sqrt{ \alpha_b+ 1}  \hat{f}(\alpha + e_a)  \hat{f}(\alpha + e_b) \\
& =  \sum_{a,b=1}^d
\sum_{\alpha \in \mathbb{N}^d}  \frac{1}{c_{|\alpha|+1} }
 z_a z_b \Big\langle \frac{\partial f}{\partial x_a}  , H_\alpha \Big\rangle_{L^2(q)}
 \Big\langle \frac{\partial f}{\partial x_b}  , H_\alpha \Big\rangle_{L^2(q)}
\\
& = \sum_{\alpha \in \mathbb{N}^d}  \frac{1}{c_{|\alpha|+1} }
  \big\langle z^\top \nabla f   , H_\alpha \big\rangle_{L^2(q)}^2.
\end{align*}

This shows that $M_f$ is positive semi-definite, writing $\mathcal{N}(0, I_d)$ for the standard normal distribution on $\mathbb{R}^d$, we then have 
\begin{align*}
z^\top M_g z & =  \sum_{\alpha \in \mathbb{N}^d}   \frac{1}{c_{|\alpha|+1} }
\Big(  \mathbb{E}_{X \sim \mathcal{N}(0,I_d)}   \big( z^\top \nabla g (X)     H_\alpha (X) \big) \Big)^2 \\
& =  \sum_{\alpha \in \mathbb{N}^d}  \frac{1}{c_{|\alpha|+1} }
\Big(  \mathbb{E}_{X \sim \mathcal{N}(0,I_d)}   \big( z^\top  R^\top \nabla f (R X)     H_\alpha (X) \big) \Big)^2 \\
& \text{ as } \nabla g (X) =   R^\top \nabla f (R X)\\
& =  \sum_{\alpha \in \mathbb{N}^d}  \frac{1}{c_{|\alpha|+1} }
\Big(  \mathbb{E}_{X \sim \mathcal{N}(0,I_d)}   \big( z^\top  R^\top \nabla f (R X)     H_\alpha (R X) \big) \Big)^2\\
& \text{ by Lemma~\ref{lem:invariance_property},}\\
& =  \sum_{\alpha \in \mathbb{N}^d}   \frac{1}{c_{|\alpha|+1} }
\Big(  \mathbb{E}_{X \sim \mathcal{N}(0,I_d)}   \big( z^\top  R^\top \nabla f (X)     H_\alpha (X) \big) \Big)^2 \\
& \text{ by rotation invariance of the standard Gaussian}, \\
& =   z^\top R^\top M_f Rz  ,
\end{align*}
that is $M_g = R^\top M_f R$. The second assertion follows by the rotation invariance of the trace.
\item It suffices to see that for any $a \in [d]$
\begin{align*}
\Diag(M_f)_{a,a} &=  \sum_{\alpha \in \mathbb{N}^d} \frac{1}{c_{|\alpha|+1}} (\alpha_a+1)^2 \hat{f}(\alpha + e_a)^2   = \sum_{\alpha \in \mathbb{N}^d,  \alpha_a >0} \frac{1}{c_{|\alpha|}} \alpha_a \hat{f}(\alpha)^2,
    \end{align*} and therefore 
    \begin{equation*}
        \trace\big( \Diag(M_f)^{r/2} \big) = \sum_{a=1}^d \bigg( \sum_{\alpha \in \mathbb{N}^d} \alpha_a \frac{1}{c_{|\alpha|}} \hat{f}(\alpha)^2 \bigg)^{r/2} = \Omega_{\rm var}(f)^r.
    \end{equation*}
\item This is a direct consequence of the previous result, because of the definition of $\Omega_{\rm feat}$.
    \item By applying the first result, we find that $\Omega_{\rm feat}(f) = \Omega_{\rm feat}(g)$ and $M_g = D$. Then, using the third result, we conclude that $\Omega_{\rm var}(g) = \Omega_{\rm feat}(g)$. This establishes the desired result.
    \item  Consider the function $g = f(U\cdot)$. From the previous result, we know that $M_g = D$ is diagonal. According to the definition of $\Omega_{\rm var}$, we have $D_a = 0$ if and only if $g$ does not depend on variable $x_a$. Consequently, if the rank of $D$ is $s$, then $g$ only depends on $s$ variables, specifically those for which $D_a > 0$. As a result, we can conclude that $f = g(U^\top \cdot)$ depends solely on $(U^\top x)_a$ for $a$ such that $D_a > 0$.  
\item  Let us examine $\Omega_{\mathrm{feat}}$ and $\Omega_{\mathrm{var}}$ as follows
$$ 
  \Omega_{\mathrm{feat}}(f)  =  \big( \trace \big( M_f^{1/2} \big)  \big), \hspace*{1.5cm}
    \Omega_{\mathrm{var}}(f)  =  \big( \trace \big( \Diag(M_f)^{1/2} \big)  \big).
 $$
We can decompose $M_f$ as $M_f = U D U^\top$ using its eigendecomposition. If we define $g = f(U \cdot)$, then $M_g = D$ is diagonal, and we have $\Omega_{\mathrm{feat}}(f) = \Omega_{\mathrm{feat}}(g) = \Omega_{\mathrm{var}}(g)$. Consequently, we obtain the inequality
\begin{equation*}
    \Omega_{\mathrm{feat}}(f) \geq  \inf_{R \in O_d} \Omega_{\mathrm{var}}(f(R \cdot)).
\end{equation*}
\end{enumerate}
\end{proof}

The rotation invariance of $\Omega_{\mathrm{feat}}$ is crucial in the context of feature learning, as it ensures that the penalty is not biased towards specific directions. Similarly, $\Omega_0$ is also rotation invariant, as can be seen using Lemma~\ref{lem:invariance_property}. 

We observe that given a function $f$ and its associated matrix $M_f$, we can construct a function $g$ consisting of a rotation of the data and $f$ in such a way that the feature penalty on $f$ is equal to the variable selection penalty on~$g$. This highlights that the feature learning setting extends the variable selection problem by allowing data rotation. Furthermore, we can easily determine if $g$ depends only on a few variables, and therefore if $f$ depends only on a few linear transformations of the data, which aligns with our assumption for $f^*$. The last assertion of Lemma~\ref{lem:properties_of_omega_feat} will be useful to show that the proof of the consistency for the variable penalty easily extends to the feature learning setting, see Section~\ref{sec:statistical_consistency}.

With these considerations, we proceed to estimate~$f^*$ in the setting described by Assumption~\ref{ass:feature} by solving
\begin{align}
    \label{eq:minimisation_problem_feature}   f^{\lambda, \mu}_{\mathrm{feat}} &:= \argmin_{f \in \mathcal{F}} \  \widehat{\mathcal{R}}(f) + \lambda\Omega_0^2(f) + \mu \Omega_{\mathrm{feat}}^r(f), 
\end{align}
with~$\lambda$ a fixed parameter and~$\mu$ a hyper-parameter. We refer to this estimator as the \textbf{RegFeaL} (\textbf{R}egularised \textbf{F}eature \textbf{L}earning) estimator. \note{As for the relevant features or variables and dimension, we discuss their computation in Section~\ref{sec:variational}.}

\section{Estimator computation}
\label{sec:opti}
The computation of the solution for the optimisation problems delineated by \eqref{eq:minimisation_problem_variable} and \eqref{eq:minimisation_problem_feature} requires the employment of several strategic methodologies, which we will now discuss.

\subsection{Variational formulation}
\label{sec:variational}
We first use the following quadratic variational formulation, similar to the approach presented in \cite{livreviolet}. This formulation is necessary since it is not possible to directly optimise Equation~\eqref{eq:minimisation_problem_variable} and Equation~\eqref{eq:minimisation_problem_feature} due to the absence of closed-form solutions. Using other classical optimisation methods such as gradient-based methods would be less efficient as the overlapping group Lasso penalty \note{we propose} does not have efficient projection algorithms. Indeed, the variational formulation allows us to rewrite our optimisation problems as the minimisation over two variables of a specific quantity.  Subsequently, we can alternate the minimisation with respect to each variable, leading to rapid convergence in practice.

We first give the following Lemma which is adapted from \cite{jenatton2010structured}, which provides a variational formulation of sums of powers.
\begin{lemma}[\note{Variational formulation}]
\label{lem:variational}
Let~$r \in (0,2)$ and~$u \in \mathbb{R}_+^d$, then
\begin{equation*}
\|u\|_{r/2}^{r/2} = \bigg(\sum_{a=1}^d u_a^{r/2}\bigg) = \min_{\eta \in \mathbb{R}^d_+, \ \|\eta\|_{r/(2-r)} =1}     \sum_{a=1}^d \frac{u_a}{\eta_a},
\end{equation*}
with minimum attained at~$\eta, \forall a \in [d], \ \eta_a = u_a^{(2-r)/2}/\big(   \sum_{b=1}^d  u_b ^{r/2}\big)^{(2-r)/r}$.
\end{lemma}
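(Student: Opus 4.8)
The plan is to prove the identity in two steps: first establish a lower bound $\sum_{a=1}^d u_a/\eta_a \ge \text{(claimed value)}$ valid for \emph{every} feasible $\eta$, and then exhibit the stated $\eta$ as a minimiser attaining that bound. For the lower bound I would use Hölder's inequality rather than Lagrangian duality. The reason is that when $r<1$ the constraint exponent $p:=r/(2-r)$ is smaller than $1$, so the feasible set $\{\eta\in\mathbb{R}^d_+ : \|\eta\|_{p}=1\}$ is nonconvex and $\|\cdot\|_p$ is only a quasi-norm, which makes the usual convex-duality sufficiency delicate. Hölder's inequality sidesteps this entirely, since the conjugate exponents $2/r$ and $2/(2-r)$ are both $\ge 1$ and satisfy $r/2+(2-r)/2=1$ for the whole range $r\in(0,2)$.

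The lower bound itself would come from the trivial factorisation $u_a^{r/2}=(u_a/\eta_a)^{r/2}\,\eta_a^{r/2}$, valid for $\eta_a>0$. Applying Hölder's inequality to $\sum_a (u_a/\eta_a)^{r/2}\,\eta_a^{r/2}$ with exponents $2/r$ and $2/(2-r)$ gives $\sum_a u_a^{r/2}\le \big(\sum_a u_a/\eta_a\big)^{r/2}\big(\sum_a \eta_a^{p}\big)^{(2-r)/2}$. Since a feasible $\eta$ satisfies $\sum_a \eta_a^{p}=1$, the second factor collapses to $1$, and rearranging yields the desired uniform lower bound on $\sum_a u_a/\eta_a$, matching the claimed minimum value.

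For attainment I would substitute the stated $\eta_a = u_a^{(2-r)/2}/\big(\sum_b u_b^{r/2}\big)^{(2-r)/r}$ and carry out the (routine) exponent bookkeeping: checking feasibility via $\sum_a \eta_a^{p}$, using $(2-r)p/2=r/2$ and $1/p=(2-r)/r$ to confirm it equals $1$, and then evaluating $\sum_a u_a/\eta_a$ to confirm it meets the lower bound. As a cross-check on the \emph{form} of the minimiser, I would note that equality in Hölder forces $u_a/\eta_a \propto \eta_a^{p}$, i.e.\ $\eta_a\propto u_a^{(2-r)/2}$, which after normalisation against the constraint reproduces exactly the stated point; equivalently, Lagrangian stationarity $-u_a/\eta_a^2 + \lambda p\,\eta_a^{p-1}=0$ gives the same proportionality, since $p+1=2/(2-r)$.

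The only place requiring genuine care, and hence the main obstacle, is the boundary behaviour when some $u_a=0$: there the proposed $\eta_a$ vanishes and $u_a/\eta_a$ becomes the indeterminate $0/0$. I would handle this with the standard convention $0/0=0$, or equivalently by restricting all sums to the support $\{a:u_a>0\}$ and passing to a limit, after which both the Hölder step and the attainment computation go through unchanged. Away from this degeneracy the argument is pure exponent arithmetic, so I expect no substantive difficulty beyond keeping the bookkeeping consistent.
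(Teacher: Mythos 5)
Your H\"older strategy is sound, and it is essentially the standard argument behind the result the paper cites from Jenatton et al.\ (the paper itself gives no proof of this lemma, so there is nothing to compare against beyond the citation): the factorisation $u_a^{r/2}=(u_a/\eta_a)^{r/2}\eta_a^{r/2}$, the conjugate exponents $2/r$ and $2/(2-r)$, the feasibility check, and the equality-case analysis are all correct. The genuine gap is the final assertion, in both halves, that the value you obtain ``match[es] the claimed minimum value.'' It does not. Rearranging your H\"older bound $\sum_a u_a^{r/2}\le\big(\sum_a u_a/\eta_a\big)^{r/2}$ gives the uniform lower bound $\sum_a u_a/\eta_a\ge\big(\sum_a u_a^{r/2}\big)^{2/r}$, and substituting the stated minimiser gives $\sum_a u_a/\eta_a=\big(\sum_b u_b^{r/2}\big)^{(2-r)/r}\sum_a u_a^{r/2}=\big(\sum_a u_a^{r/2}\big)^{2/r}$ as well. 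So what your argument proves is that the minimum equals $\|u\|_{r/2}=\big(\sum_a u_a^{r/2}\big)^{2/r}$, not $\|u\|_{r/2}^{r/2}=\sum_a u_a^{r/2}$ as the lemma asserts; the two agree only when $\sum_a u_a^{r/2}\in\{0,1\}$ or $r=2$.

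In other words, the statement as printed is false, and no proof can close this gap: take $d=1$ and $u_1\neq 1$; the constraint forces $\eta_1=1$, so the minimum is $u_1$, while the claimed value is $u_1^{r/2}$. Your computation, carried through honestly, establishes the corrected identity $\|u\|_{r/2}=\min_{\eta}\sum_a u_a/\eta_a$ (equivalently $\|u\|_{r/2}^{r/2}=\min_{\eta}\big(\sum_a u_a/\eta_a\big)^{r/2}$, since $x\mapsto x^{r/2}$ is increasing), and the same exponent slip propagates to the paper's subsequent variational formulation of $\Omega_{\mathrm{var}}$, whose minimum is in fact $\Omega_{\mathrm{var}}^2(f)$ rather than $\Omega_{\mathrm{var}}^r(f)$. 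What your write-up should have done is carry the exponent bookkeeping through to the conclusion and flag the mismatch, rather than assert agreement. (Your treatment of zero coordinates via the convention $0/0=0$, or by restricting to the support of $u$, is fine and is indeed a needed remark for attainment.)
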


Now, let us apply this approach to the penalty used for variable selection.
\begin{lemma}[\note{Variational formulation of variable selection penalty}]
\label{lemma:variational}
Let~$f= \sum_{\alpha \in \mathbb{N}^d} \hat{f}(\alpha) H_\alpha \in \mathcal{F}$ and~$r\in (0,2)$, then
\begin{align*}
    \Omega_{\mathrm{var}}^r(f) =&  \min_{\eta \in \mathbb{R}^d_+, \ \|\eta\|_{r/(2-r)} =1}  \sum_{a=1}^d \bigg(\sum_{\alpha \in (\mathbb{N}^d)^*} \alpha_a \frac{1}{c_{|\alpha|}} \hat{f}(\alpha)^2 \bigg)\eta_a^{-1} \\
    =& \min_{\eta \in \mathbb{R}^d_+, \ \|\eta\|_{r/(2-r)} =1}  \bigg(\sum_{\alpha \in (\mathbb{N}^d)^*} \alpha^\top \eta^{-1} \frac{1}{c_{|\alpha|}} \hat{f}(\alpha)^2 \bigg),
\end{align*}
where~$\eta^{-1} = (1/\eta_1, \ldots, 1/\eta_d)$ and where the minimum is reached for~$\eta$ such that
\begin{equation}
\label{eq:best_eta}
\forall a \in [d], \ \eta_a = \frac{\Big(\sum_{\alpha \in (\mathbb{N}^d)^*} \alpha_a \frac{1}{c_{|\alpha|}} \hat{f}(\alpha)^2 \Big)^{(2-r)/2}}{\Big(\sum_{b=1}^d \Big(\sum_{\alpha \in (\mathbb{N}^d)^*} \alpha_b \frac{1}{c_{|\alpha|}} \hat{f}(\alpha)^2 \Big)^{r/2}\Big)^{(2-r)/r}}.
\end{equation}
\end{lemma}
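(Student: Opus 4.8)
The plan is to recognise that this lemma is an immediate application of the abstract variational identity of Lemma~\ref{lem:variational} under a judicious choice of vector $u \in \mathbb{R}^d_+$. First I would introduce, for each $a \in [d]$, the quantity
\begin{equation*}
u_a := \sum_{\alpha \in (\mathbb{N}^d)^*} \alpha_a \frac{1}{c_{|\alpha|}} \hat{f}(\alpha)^2,
\end{equation*}
and check that $u \in \mathbb{R}^d_+$: each $u_a$ is non-negative because $\alpha_a \geq 0$, $c_{|\alpha|} > 0$ on the indices spanning $\mathcal{F}$, and $\hat{f}(\alpha)^2 \geq 0$; each $u_a$ is finite because $f \in \mathcal{F}$ ensures $\Omega_{\mathrm{var}}(f) < \infty$. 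Since the hypothesis $r \in (0,2)$ matches exactly the range required by Lemma~\ref{lem:variational}, the abstract identity is applicable to this $u$.

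With this substitution, the definition of $\Omega_{\mathrm{var}}$ reads $\Omega_{\mathrm{var}}(f) = \big(\sum_{a=1}^d u_a^{r/2}\big)^{1/r}$, whence $\Omega_{\mathrm{var}}^r(f) = \sum_{a=1}^d u_a^{r/2} = \|u\|_{r/2}^{r/2}$. Applying Lemma~\ref{lem:variational} verbatim then gives
\begin{equation*}
\Omega_{\mathrm{var}}^r(f) = \min_{\eta \in \mathbb{R}^d_+, \ \|\eta\|_{r/(2-r)}=1} \sum_{a=1}^d \frac{u_a}{\eta_a},
\end{equation*}
which, upon expanding $u_a$, is precisely the first displayed equality. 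The explicit minimiser \eqref{eq:best_eta} is then obtained by substituting the value of $u_a$ into the attained-minimum formula $\eta_a = u_a^{(2-r)/2}/\big(\sum_{b=1}^d u_b^{r/2}\big)^{(2-r)/r}$ provided by Lemma~\ref{lem:variational}.

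For the second equality I would simply interchange the order of the two summations, justified by Tonelli's theorem since all terms are non-negative:
\begin{equation*}
\sum_{a=1}^d \frac{u_a}{\eta_a} = \sum_{a=1}^d \frac{1}{\eta_a} \sum_{\alpha \in (\mathbb{N}^d)^*} \alpha_a \frac{\hat{f}(\alpha)^2}{c_{|\alpha|}} = \sum_{\alpha \in (\mathbb{N}^d)^*} \frac{\hat{f}(\alpha)^2}{c_{|\alpha|}} \sum_{a=1}^d \frac{\alpha_a}{\eta_a} = \sum_{\alpha \in (\mathbb{N}^d)^*} \alpha^\top \eta^{-1} \frac{\hat{f}(\alpha)^2}{c_{|\alpha|}}.
\end{equation*}
There is no substantive obstacle here: the entire content is carried by Lemma~\ref{lem:variational}, and the only points demanding care are the verification that $u \in \mathbb{R}^d_+$ (its non-negativity and finiteness, needed for the abstract lemma to apply) and the justification of the summation interchange, both of which rest on the non-negativity of every term involved.
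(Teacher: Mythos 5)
Your proof is correct and follows exactly the paper's own argument: the paper likewise applies Lemma~\ref{lem:variational} with $u_a = \sum_{\alpha \in (\mathbb{N}^d)^*} \alpha_a \frac{1}{c_{|\alpha|}} \hat{f}(\alpha)^2$, and your additional verifications (non-negativity and finiteness of $u$, and the Tonelli justification for exchanging the sums) are sound elaborations of details the paper leaves implicit.
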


\begin{proof}[\note{Proof of Lemma~\ref{lemma:variational}}]
Recall $ \Omega_{\mathrm{var}} (f)  =  \big(\sum_{a = 1}^d \big( \sum_{\alpha \in (\mathbb{N}^d)^*}\frac{\alpha_a}{c_{|\alpha|}} \hat{f}(\alpha)^2 \big)^{r/2}\big)^{1/r}$ and use Lemma~\ref{lem:variational} with~$u_a =  \sum_{\alpha \in (\mathbb{N}^d)^*} \alpha_a \frac{1}{c_{|\alpha|}} \hat{f}(\alpha)^2$.
\end{proof}

We can then rewrite \eqref{eq:minimisation_problem_variable} as 
\begin{align}
\label{eq:final_variable_problem}
    f^{\lambda, \mu}_{\mathrm{var}}, \ \eta^{\lambda, \mu}_{\mathrm{var}} =& \argmin_{f \in \mathcal{F}, \>\> \eta \in \mathbb{R}^d_+}  \ \widehat{\mathcal{R}}(f) + \sum_{\alpha \in (\mathbb{N}^d)^*}  \frac{1}{c_{|\alpha|}} \hat{f}(\alpha)^2(\lambda + \mu\alpha^\top \eta^{-1}) \\
     &\mathrm{subject} \>\mathrm{to} \quad  f = \sum_{\alpha \in \mathbb{N}^d} \hat{f}(\alpha)H_\alpha, \quad 
      \|\eta\|_{r/(2-r)}=1. \nonumber
\end{align}

Recall that~$\Omega_{\rm var}(f) = \big(\sum_{a=1}^d \big(\sum_{\alpha \in (\mathbb{N}^d)^*} \alpha_a \frac{1}{c_{|\alpha|}} \hat{f}(\alpha)^2 \big)^{r/2}\big)^{1/2}$. Each term \newline $\big(\sum_{\alpha \in (\mathbb{N}^d)^*} \alpha_a \frac{1}{c_{|\alpha|}} \hat{f}(\alpha)^2 \big)^{r/2}$ quantifies the dependency of~$f$ on the variable~$x_a$. We then remark from the definition of~$\eta^{\lambda, \mu}_{\mathrm{var}}$ in Equation~\eqref{eq:best_eta}, that
\begin{equation*}
\forall a \in [d], \ \big(\eta^{\lambda, \mu}_{\mathrm{var}}\big)_a^{r/(2-r)} = \frac{\big(\sum_{\alpha \in (\mathbb{N}^d)^*} \alpha_a \frac{1}{c_{|\alpha|}} \hat{f}_{\rm var}^{\lambda, \mu}(\alpha)^2 \big)^{r/2}}{\sum_{b=1}^d \big(\sum_{\alpha \in (\mathbb{N}^d)^*} \alpha_b \frac{1}{c_{|\alpha|}} \hat{f}_{\rm var}^{\lambda, \mu}(\alpha)^2 \big)^{r/2}}.
\end{equation*}
Hence~$\big(\eta_{\rm var}^{\lambda, \mu}\big)_a$ represents the variation of~$f_{\mathrm{var}}^{\lambda, \mu}$ which is due to~$x_a$. We can use~$\eta_{\mathrm{var}}^{\lambda, \mu}$ to estimate the relevant underlying variables by using conventional techniques such as thresholding. Specifically, we can consider a variable~$x_a$ to be relevant only if~$\eta_a$ is above some predetermined threshold, i.e.~$\hat{S} := \{ a \in [d], \ \big(\eta_{\rm var}^{\lambda, \mu}\big)_a > t \}$ for some~$t>0$.

We can proceed in a similar manner for the feature learning setting. 
\begin{lemma}[\note{Variational formulation of feature learning penalty}]
Let~$f \in \mathcal{F}$,~$M_f$ from Equation~\eqref{eq:m_f}, with~$M_f = U D U^\top$ its eigendecomposition and~$r\in (0,2)$, then
\begin{align*}
 \Omega_{\mathrm{feat}}^r(f) = \min_{\Lambda \in \mathbb{R}^{d \times d}}  \trace & \big( \Lambda^{-1} M_f \big) \\
  \mathrm{subject} \>\mathrm{to} \ &  \Lambda = R\Diag{(\eta)}R^\top \\
 & R \in O_d, \ \eta  \in \mathbb{R}^d_+, \ \|\eta\|_{r/(2-r)} =1,
\end{align*}
where the minimum is attained for 
\begin{align}
\label{eq:best_lambda}
   \Lambda &= U \Diag{(\eta)}  U^\top  \\
   \nonumber \forall a \in [d], \ \eta_a &= \frac{D_a^{(2-r)/2}}{(\sum_{b=1}^d D_b^{r/2})^{(2-r)/r}}.
\end{align}
\end{lemma}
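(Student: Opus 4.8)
The plan is to reduce everything to the scalar variational identity of Lemma~\ref{lem:variational} and to control the orthogonal variable $R$ with a majorisation argument. First I would record the target explicitly: by definition $\Omega_{\mathrm{feat}}^r(f) = \trace(M_f^{r/2})$, and since $M_f = U D U^\top$ with $D = \Diag(D_1,\dots,D_d)$ its nonnegative eigenvalues, $\trace(M_f^{r/2}) = \sum_{a=1}^d D_a^{r/2}$. Next I would rewrite the objective: for an admissible $\Lambda = R\Diag(\eta)R^\top$ with $R \in O_d$ and $\eta$ strictly positive, orthogonality gives $\Lambda^{-1} = R\Diag(\eta^{-1})R^\top$, so by cyclicity of the trace
\begin{equation*}
\trace(\Lambda^{-1}M_f) = \trace\big(\Diag(\eta^{-1})\,R^\top M_f R\big) = \sum_{a=1}^d \frac{(R^\top M_f R)_{aa}}{\eta_a}.
\end{equation*}
Writing $N := R^\top M_f R$, which is positive semi-definite with the same eigenvalues $D$ as $M_f$, the problem becomes $\min_{R,\eta}\sum_a N_{aa}/\eta_a$ subject to $\|\eta\|_{r/(2-r)}=1$.

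For achievability (the upper bound) I would take $R = U$, so that $N = D$ is diagonal with $N_{aa} = D_a$, and then apply Lemma~\ref{lem:variational} with $u_a = D_a$. This yields the minimiser $\eta_a = D_a^{(2-r)/2}/(\sum_b D_b^{r/2})^{(2-r)/r}$, which is precisely the claimed $\eta$, and the minimal value $\sum_a D_a^{r/2}$. Hence the minimum over the constraint set is at most $\sum_a D_a^{r/2}$, attained at $\Lambda = U\Diag(\eta)U^\top$.

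For the matching lower bound I would fix any admissible $(R,\eta)$ and apply the scalar Lemma~\ref{lem:variational} pointwise with $u_a = N_{aa}\ge 0$, obtaining $\sum_a N_{aa}/\eta_a \ge \sum_a N_{aa}^{r/2}$. It then remains to show $\sum_a N_{aa}^{r/2} \ge \sum_a D_a^{r/2}$, which is the crux. Here I would invoke the Schur--Horn theorem: the diagonal $(N_{aa})_a$ of the symmetric matrix $N$ is majorised by its eigenvalue vector $(D_a)_a$. Since $t\mapsto t^{r/2}$ is concave for $r\in(0,2)$, the power sum $x\mapsto\sum_a x_a^{r/2}$ is Schur-concave and therefore reverses the majorisation inequality, giving $\sum_a N_{aa}^{r/2}\ge\sum_a D_a^{r/2} = \trace(M_f^{r/2})$. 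Chaining the two inequalities shows $\trace(\Lambda^{-1}M_f)\ge\sum_a D_a^{r/2}$ for every admissible $\Lambda$, which closes the argument together with achievability.

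I expect the main obstacle to be the lower bound, specifically recognising that optimising over $R$ reduces to controlling the diagonal of $R^\top M_f R$ and then combining Schur--Horn with Schur-concavity of the power sum; the $\eta$-dependence is routine once Lemma~\ref{lem:variational} is available. A minor technical point worth addressing is invertibility of $\Lambda$ when $M_f$ is rank-deficient (some $D_a = 0$): the optimal $\eta_a$ then vanishes as well, and one handles this either by restricting to strictly positive $\eta$ with the convention $0/0 = 0$, or by a continuity argument, consistent with the statement of Lemma~\ref{lem:variational}.
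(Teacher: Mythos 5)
Your proof is correct, and it is worth noting that it supplies more than the paper itself does: the paper states this lemma without proof, and the closest ingredient it provides, assertion 6 of Lemma~\ref{lem:properties_of_omega_feat}, only establishes the \emph{easy} direction (achievability: taking $R=U$ shows the infimum over admissible $\Lambda$ is at most $\Omega_{\mathrm{feat}}^r(f)$, via $M_{f(U\cdot)}=D$ and the scalar Lemma~\ref{lem:variational}), and only for $r=1$. Your lower bound is the genuinely nontrivial half: after reducing $\trace(\Lambda^{-1}M_f)$ to $\sum_a N_{aa}/\eta_a$ with $N=R^\top M_f R$ and minimising out $\eta$ via Lemma~\ref{lem:variational}, the inequality $\sum_a N_{aa}^{r/2}\ge\sum_a D_a^{r/2}$ follows exactly as you say from Schur--Horn majorisation of the diagonal by the eigenvalues together with Schur-concavity of $x\mapsto\sum_a x_a^{r/2}$ for $r\in(0,2)$; this is the step the paper never makes explicit, and without it the claimed value of the minimum (rather than merely an upper bound on it) is unjustified. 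Your handling of the rank-deficient case (zero eigenvalues forcing $\eta_a=0$, so $\Lambda$ is singular) is also a real issue that the paper glosses over, both here and in the scalar Lemma~\ref{lem:variational}; the convention $0/0=0$ or a continuity/infimum reformulation resolves it as you propose. In short: correct, complete, and it closes a gap in the paper's exposition rather than merely reproducing it.
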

This allows us to rewrite Equation~\eqref{eq:minimisation_problem_feature} as
\begin{align*}
    f^{\lambda, \mu}_{\mathrm{feat}}, \ \Lambda^{\lambda, \mu}_{\mathrm{feat}} = \argmin_{f \in \mathcal{F}, \ \Lambda \in \mathbb{R}^{d \times d}} &\ \widehat{\mathcal{R}}(f) + \lambda \Omega_0^2(f) + \mu \trace{\big(\Lambda^{-1}M_f\big)} \\
    \mathrm{subject} \>\mathrm{to} & \ \Lambda = R \Diag(\eta)R^\top  \\
    & R \in O_d, \ \eta \in \mathbb{R}^d_+, \  \|\eta\|_{r/(2-r)}=1.
\end{align*}

Moreover, with~$\Lambda = R\Diag(\eta) R^\top$ as above, if we write~$f~$ in the rotated basis as~$f =  \sum_{\alpha \in \mathbb{N}^d} \hat{f}(\alpha)H_\alpha(R^\top \cdot)$, and~$g=f(R\cdot) = \sum_{\alpha \in \mathbb{N}^d} \hat{f}(\alpha)H_\alpha$, we have~$M_f = R M_g R^\top$ (Lemma~\ref{lem:properties_of_omega_feat}). Therefore  
\begin{align*}
\trace{(\Lambda^{-1} M_f)} = \trace{\big(\Diag{(\eta^{-1})}M_g\big)}  &= \sum_{a=1}^d \eta_a^{-1} \sum_{\alpha \in (\mathbb{N}^d)^*} \frac{\alpha_a}{c_{|\alpha|}} \hat{f}(\alpha)^2 \\
& =  \sum_{\alpha \in (\mathbb{N}^d)^*}  \frac{1}{c_{|\alpha|}} \hat{f}(\alpha)^2\alpha^\top \eta^{-1}.
\end{align*}
We can then rewrite Equation~\eqref{eq:final_feature_problem} as
\begin{align}
\label{eq:final_feature_problem}
f^{\lambda, \mu}_{\mathrm{feat}}, \ \Lambda^{\lambda, \mu}_{\mathrm{feat}} = \argmin_{f \in \mathcal{F}, \ \Lambda \in \mathbb{R}^{d \times d}} & \  \widehat{\mathcal{R}}(f) + \sum_{\alpha \in (\mathbb{N}^d)^*}  \frac{1}{c_{|\alpha|}} \hat{f}(\alpha)^2(\lambda + \mu\alpha^\top \eta^{-1}) \\
\mathrm{subject} \>\mathrm{to} &  \  \Lambda = R \Diag(\eta)R^\top \nonumber  \\
& \ R \in O_d, \ \eta \in \mathbb{R}^d_+,  \ \|\eta\|_{r/(2-r)}=1 \nonumber \\
& \ f=\sum_{\alpha \in \mathbb{N}^d }\hat{f}(\alpha) H_\alpha(R^\top \cdot). \nonumber
\end{align}

We see then that the feature learning problem can be viewed as an extension of the variable selection problem, where we additionally optimise over any possible data rotation. Conversely, the variable selection problem can be seen as a particular case of the feature learning problem, where the rotation matrix~$R$ is fixed to the identity matrix. 

To estimate the dimension of the underlying feature space $P$ and the features themselves, we use the eigendecomposition of $\Lambda^{\lambda, \mu}_{\mathrm{feat}} = (R^{\lambda, \mu}_{\mathrm{feat}})^\top \Diag(\eta_{\mathrm{feat}}^{\lambda, \mu})R^{\lambda, \mu}_{\mathrm{feat}}$. By using the columns of $R^{\lambda, \mu}_{\mathrm{feat}}$ corresponding to the selected features, denoted as $\hat{S}:= \{a \in [d] \, | \, \left(\eta_{\mathrm{feat}}^{\lambda, \mu}\right)_a > t\}$ for some threshold $t > 0$, we construct our feature estimator $\hat{P}$, i.e., $\hat{P} := \big(R_{\mathrm{feat}}^{\lambda, \mu}\big)_{\hat{S}}$. We see that by employing alternating minimisation, we are able to simultaneously learn the regression function and the underlying features.

\subsection{Optimisation procedure}
We now discuss how to solve the optimisation problem using alternative minimisation, drawing on techniques described in \cite{livreviolet}.  In the following discussion, we will focus on the feature learning setting. However, it is important to note that by simply fixing $R=I_d$ in each equation, we can easily revert back to the variable selection case.

To solve Equation~\eqref{eq:final_feature_problem}, we have observed that when the function~$f$ is fixed, the optimal~$\Lambda$ can be determined using Equation~\eqref{eq:best_lambda}, which involves the matrix~$M_f$.\footnote{If~$f= \sum_{\alpha} \hat{f}(\alpha) H_\alpha(R^\top \cdot)$, to compute~$M_f$, we can remark that with~$g = f(R\cdot) = \sum_{\alpha} \hat{f}(\alpha) H_\alpha$, we have the usual formula for~$M_g$ from Equation~\eqref{eq:m_f} and~$M_f= R M_g R^\top$.} 

When~$\Lambda$ is fixed, we seek to solve the optimisation problem
\begin{align*}
\argmin_{f \in \mathcal{F}} & \ \widehat{\mathcal{R}}(f) + \sum_{\alpha \in (\mathbb{N}^d)^*}  \frac{1}{c_{|\alpha|}} \hat{f}(\alpha)^2(\lambda + \mu\alpha^\top \eta^{-1}) \\
\mathrm{subject} \>\mathrm{to} & \ f=\sum_{\alpha \in \mathbb{N}^d }\hat{f}(\alpha) H_\alpha(R^\top \cdot),
\end{align*}
where~$\Lambda = R \Diag(\eta) R^\top$. However, this can only be solved if~$\widehat{\mathcal{R}}$ is known, i.e., for some chosen loss function~$\ell$. Until the end of Section~\ref{sec:opti}, we consider the quadratic loss which is commonly used in regression problems and allows for closed-form solutions. Otherwise, iterative optimisation algorithms \note{need} to be employed. The problem is then 
\begin{align}
\label{eq:explicit_kernel_min}
\argmin_{f \in \mathcal{F}} & \ \frac{1}{n}\sum_{i=1}^n \big( y^{(i)} - f(x^{(i)}) \big)^2 + \sum_{\alpha \in (\mathbb{N}^d)^*}  \frac{1}{c_{|\alpha|}} \hat{f}(\alpha)^2(\lambda + \mu\alpha^\top \eta^{-1}) \\
\mathrm{subject} \>\mathrm{to} & \  f=\sum_{\alpha \in \mathbb{N}^d }\hat{f}(\alpha) H_\alpha(R^\top \cdot). \nonumber
\end{align}
If we write for any $x, x^\prime \in \mathbb{R}^d$
\begin{equation}
\label{eq:kernel}
k_\Lambda(x,x^\prime) = \sum_{\alpha \in  (\mathbb{N}^d)^*}    \frac{c_{|\alpha|} H_\alpha(R^\top x) H_\alpha(R^\top x^\prime)}{   \lambda + \mu\alpha^\top \eta^{-1}},
\end{equation}
the function~$k_\Lambda$ verifies all properties required to be a reproducing kernel \cite{aronszajn50reproducing}. The condition for a function to be a reproducing kernel is that it is symmetric and that the associated kernel matrix is positive definite for any set of points. Specifically, for any $n \in \mathbb{N}$ and $x^{(1)}, \ldots, x^{(n)}$, the matrix $K_\Lambda = (k_\Lambda(x^{(i)}, x^{(j)}))_{i,j \in [n]}$ must be positive definite (where $\lambda > 0$ is useful in this context). We can then apply the theory of reproducing kernel Hilbert spaces (RKHS).  In this case, $k_\Lambda$ serves as the reproducing kernel for the space $\mathcal{F}$, with associated norm $\| \cdot \|_\Lambda$, given by
\begin{equation*}
    \| f \|^2_\Lambda =  \sum_{\alpha \in (\mathbb{N}^d)^*}  \frac{1}{c_{|\alpha|}} \hat{f}(\alpha)^2(\lambda + \mu\alpha^\top \eta^{-1})
\end{equation*}
(note that $\hat{f}$ depends on $\Lambda$ through $R$).
We can interpret the problem as a standard kernel ridge regression, which we refer to as the ``kernel point of view.'' By applying the representer theorem \cite{aronszajn50reproducing}, we know that the solution to Equation~\eqref{eq:explicit_kernel_min} takes the form
\begin{align*}
    f &= \sum_{i=1}^n \delta^{\Lambda}_i k_\Lambda(x^{(i)}, \cdot) + \delta_0^{\Lambda} ,
\end{align*}
where~$\delta^{\Lambda}$ and~$\delta_0^{\Lambda}$ can be obtained in closed form using ~$Y=(y^{(1)}, \ldots, y^{(n)})^\top$ and~$K=(k_\Lambda(x^{(i)}, x^{(j)}))_{i,j \in [n]}$ as the minimisers of 
\begin{equation}
\label{eq:update_delta}
    \delta^{\Lambda}, \ \delta_0^{\Lambda} = \argmin_{\delta \in \mathbb{R}^n, \ \delta_0 \in \mathbb{R}} \frac{1}{n} \| Y - K_\Lambda \delta - \delta_0 \mathds{1}\|_2^2 + \delta^\top K_\Lambda \delta.
\end{equation}

It is worth noting that the shape of the kernel defined in Equation~\eqref{eq:kernel} implies that features corresponding to $\alpha \in \mathbb{N}^d$ with large values of $\alpha^\top \eta^{-1}$ are penalised more. If $\eta_a$ is small, indicating that it has been pushed down in the previous optimisation steps, it suggests that variable $x_a$ or the direction $(R^\top x)_a$ may not be particularly useful for prediction. In such cases, for these variables/directions to be retained, they would need to contribute significantly more to the fit compared to others.

Furthermore, we observe that the parameter $\lambda$ serves the purpose of ensuring numerical stability when solving linear systems, particularly when $\alpha^\top \eta^{-1}$ can be null. We recommend setting $\lambda$ to a significantly smaller value than $\mu$ to achieve this desired stability (e.g, $\lambda = 10^{-8} / d^{(2-r)/r}$ in our experiments). In fact, it is possible to fix $\lambda$ as a predetermined value, eliminating the need for it to be treated as a hyper-parameter.

\subsection{Sampling approximation of the kernel}
\label{sec:sampling}
We remark that the kernel described in Equation~\eqref{eq:kernel} is defined as an infinite sum, which means it is not computable in practice.  To overcome this challenge, we adopt an approximation approach using sampling.

Let us define~$C(\eta) = \sum_{\alpha \in (\mathbb{N}^d)^*} \frac{c_{|\alpha|}}{\lambda + \mu \alpha^\top \eta^{-1}}$. By defining~$ h(\alpha) = \frac{1}{C(\eta)} \frac{c_{|\alpha|}}{\lambda + \mu \alpha^\top \eta^{-1}}$, for all $\alpha \in (\mathbb{N}^d)^*$ we obtain a probability distribution on $(\mathbb{N}^d)^*$. Consequently, we can express the kernel $k_\Lambda(x, x^\prime)$ as $C(\eta) \mathbb{E}_{\alpha \sim h} \big( H_\alpha(R^\top x) H_\alpha(R^\top x^\prime) \big)$. 

Sampling from the distribution $h$ can be challenging, particularly in high-dimensional settings. Therefore, we employ an importance sampling technique. For the first choice~$c_{|\alpha|} = \mathds{1}_{|\alpha| \leq M}$,  the kernel $k_\Lambda(x,x^\prime)$ can be expressed as
\begin{align*}
    k_\Lambda(x,x^\prime) &= \sum_{\alpha \in (\mathbb{N}^d)^*, |\alpha| \leq M} \frac{H_\alpha(R^\top x) H_\alpha(R^\top x^\prime)}{\lambda + \mu\alpha^\top\eta^{-1}} \\
    & =  \binom{M + d }{d}  \mathbb{E}_{\alpha \sim \mathcal{U}\{ |\alpha| \leq M \}}\bigg( \frac{H_\alpha(R^\top x)H_\alpha(R^\top x^\prime)}{\lambda + \mu \alpha^\top \eta^{-1}} \bigg),
\end{align*}
where $\mathcal{U}\{ |\alpha| \leq M \}$ is the uniform distribution over $\{ \alpha \in (\mathbb{N}^d)^*, \ |\alpha| \leq M \}$. Sampling from this uniform distribution can be achieved by selecting a subset~$B$ of size $d$ uniformly from the set $[M+d]$, sorting the subset into $B_1 < \cdots < B_d$, setting $B_0=0$, and using the differences between consecutive values to construct $\alpha$. Specifically, for each $a \in [d]$, we set $\alpha_a = B_a - B_{a-1} -1$. If the resulting~$\alpha$ is the null tuple, it is rejected, and the sampling process is repeated.

For the choice~$c_{|\alpha|} = \rho^{|\alpha|}$ the kernel is
\begin{align*}
    k_\Lambda(x, x^\prime) &= \sum_{\alpha \in (\mathbb{N}^d)^*} \frac{\rho^{|\alpha|}}{\lambda + \mu \alpha^\top \eta^{-1}} H_\alpha(R^\top x) H_\alpha(R^\top x^\prime).
\end{align*}

We have developed a methodology called ``group sampling'' that addresses the challenges of sampling from the distribution $h$. To initialise the sampling, we set all components of $\eta$ to be equal. This choice ensures unbiasedness among the possible directions while satisfying the constraint $\|\eta\|_{r/(2-r)} = 1$. As a result, the kernel takes the form
\begin{align*}
    k_\Lambda(x, x^\prime) = \sum_{\alpha \in (\mathbb{N}^d)^*} \frac{\rho^{|\alpha|}}{\lambda + \mu |\alpha|d^{(2-r)/r}} H_\alpha(R^\top x) H_\alpha(R^\top x^\prime).
\end{align*}

We can directly sample from the distribution proportional to~$\frac{\rho^{|\alpha|}}{\lambda + \mu |\alpha|d^{(2-r)/r}}$. The sampling process involves two steps. First, we sample an integer $k$ from the distribution
\begin{equation*}
    k \sim \binom{k+d -1}{d-1} \frac{\rho^{k}}{\lambda + \mu d^{(2-r)/r} k }.
\end{equation*}
To perform this sampling, we can precompute a table of probabilities for different values of $k$ up to a chosen maximum value (e.g., 40). We then normalise these probabilities and use them to sample the value of $k$. Once we have obtained $k$, it represents the cardinality of $\alpha$. In the second step, we sample $\alpha$ uniformly from the set ${ \alpha \in (\mathbb{N}^d)^*, \ |\alpha| = k}$. This sampling procedure is exact, except for the controlled approximation introduced by the choice of the maximum value. 

We can develop an importance sampling scheme for the other optimisation steps when the components of $\eta$ are not equal. Here are the steps.
\begin{enumerate}
    \item Sort the components of $\eta$ in ascending order and find the largest gap between consecutive values. Divide the set $[d]$ into two groups: Group 1, containing the components above the top of the gap, with size $d_1$, and Group 2, containing the remaining components, with size $d_2$.
    \item Define $\Tilde{\eta}_1$ as the minimum value among the components in Group 1, and $\Tilde{\eta}_2$ as the maximum value among the components in Group 2.
    \item Sample $k_1$ and $k_2$ from the distribution
\begin{equation*}
k_1, k_2 \sim \binom{k_1+d_1 -1}{d_2-1}\binom{k_2+d_2 -1}{d_2-1} \frac{\rho^{k_1 + k_2}}{\lambda + \mu \left(\frac{k_1}{\Tilde{\eta}_1} + \frac{k_2}{\Tilde{\eta}_2}\right)},
\end{equation*}
where $k_1$ and $k_2$ represent $|\alpha^{(1)}|$ and $|\alpha^{(2)}|$ respectively, and $\alpha^{(1)}$ corresponds to the components in Group 1.
\item 
Sample $\alpha^{(1)}$ uniformly from the set ${ \alpha \in (\mathbb{N}^{d_1})^, \ |\alpha| = k_1}$, and sample $\alpha^{(2)}$ uniformly from the set ${ \alpha \in (\mathbb{N}^{d_2})^, \ |\alpha| = k_2}$.
\item This yields
\begin{align*}
    k_\Lambda(x, x^\prime) &= \sum_{\alpha \in (\mathbb{N}^d)^*}  \frac{C(\Tilde{\eta})}{C(\Tilde{\eta})}\frac{\rho^{|\alpha|}}{\lambda + \mu \alpha^\top \Tilde{\eta}^{-1}} \frac{\lambda + \mu \alpha^\top \Tilde{\eta}^{-1} }{\lambda + \mu \alpha^\top \eta^{-1}} H_\alpha(R^\top x) H_\alpha(R^\top x^\prime) \\
    & = \mathbb{E}_ {\alpha \sim \mathrm{Group \  sampling}}  \bigg(C(\Tilde{\eta})\frac{\lambda + \mu \alpha^\top \Tilde{\eta}^{-1} }{\lambda + \mu \alpha^\top \eta^{-1}} H_\alpha(R^\top x) H_\alpha(R^\top x^\prime) \bigg),
\end{align*}
with~$C(\Tilde{\eta})$ a normalising constant. 
\end{enumerate}
By using this importance sampling scheme, we can approximate the desired distribution accurately, even when the components of $\eta$ are not equal.

We observe that with the group sampling approach, the distribution of $\alpha$ is influenced by $\eta$ through the grouping process, as well as through the values of $\Tilde{\eta}_1$ and $\Tilde{\eta}_2$. As the optimisation progresses, the sampled tuples exhibit specific patterns: in directions that are deemed unimportant (corresponding to small values of $\eta_a$), $\alpha_a$ tends to be close to zero, while in directions that are considered important (corresponding to large $\eta_a$), $\alpha_a$ is more widely distributed.\footnote{It is worth noting that using the geometric distribution independently on each dimension of $\alpha$ would have been a simpler approach. However, this method becomes highly inefficient as the dimensionality increases, since it would involve sampling numerous $\alpha$ tuples with low importance weights (as determined by $\frac{H_\alpha(R^\top x) H_\alpha(R^\top x^\prime)}{\lambda + \mu \alpha^\top \eta^{-1}}$) due to their alignment with directions where $\eta$ is very small (i.e., $\alpha^\top \eta^{-1}$ is small).}

No matter the sampling scheme, we sample~$\alpha^{(1)}, \ldots, \alpha^{(m)}$ from some distribution with importance weight~$w(\alpha)$, yielding 
\begin{equation*}
    k_\Lambda(x, x^\prime) \approx \sum_{j=1}^m w(\alpha^{(i)}) H_{\alpha^{(j)}}(R^\top x) H_{\alpha^{(j)}}(R^\top x^\prime).
\end{equation*}
We use this formula to compute the kernel matrix~$K_\Lambda = \big(k_\Lambda(x^{(i)}, x^{(j)})\big)_{i,j \in [n]}$. Instead of approximating the matrix~$K_\Lambda$ to use in Equation~\eqref{eq:update_delta}, we can also consider the equivalent explicit ``feature point of view'' by writing~$f$ in the form 
\begin{equation*} 
    f=  \sum_{j=1}^{m} \theta_j w(\alpha^{(j)})H_{\alpha^{(j)}}(R^\top \cdot) + \theta_0 H_{0},
\end{equation*}
where
\begin{equation}
\label{eq:update_theta}
    \theta^{\Lambda}, \ \theta^{\Lambda}_0 = \argmin_{\theta \in \mathbb{R}^m, \ \theta_0 \in \mathbb{R}} \frac{1}{n} \|Y - \Phi \theta - \theta_0 \mathds{1}\|^2_2 + \|\theta\|_2^2 ,
\end{equation}
with~$\Phi \in \mathbb{R}^{n \times m}$ the matrix filled with~$w(\alpha^{(j)})H_{\alpha^{(j)}}( R^\top x^{(i)})$. This is computationally advantageous when~$n > m$. Otherwise, we use the kernel point of view. In both cases, we can use~$(\theta, \theta_0)$ or~$(\delta, \delta_0)$ to rewrite~$f$ as~$\sum_{\alpha \in \mathbb{N}^d} \hat{f}(\alpha) H_\alpha(R^\top \cdot)$. We remark that~$\hat{f}(\alpha) = 0$ when~$\alpha$ has not been sampled. 

The pseudo-code for the \textbf{RegFeaL} method is provided in Algorithm~\ref{alg:reg_feal}.
\begin{algorithm}[h]
\caption{RegFeaL pseudocode}\label{alg:reg_feal}
\For{$i \in [n_{\rm iter}]~$}{
\eIf{$i=0$}{$\eta \gets \mathds{1}/d^{(2-r)/r}$\;
   ~$R \gets I_d$\;
}{\eIf{\rm feature learning}{Update~$R$ and~$\eta$ as in Equation~\eqref{eq:best_lambda}\;}{Update~$\eta$ as in Equation~\eqref{eq:best_eta}\;}
}    
Sample~$\alpha^{(1)}, \ldots, \alpha^{(m)}$ using group sampling as in Section~\ref{sec:sampling} with~$\eta$ \; Compute importance weights~$w(\alpha^{(1)}), \ldots, w(\alpha^{(m)})$\;
Compute Hermite features~$\Phi \in \mathbb{R}^{n \times m}, \Phi_{i,j} =  w(\alpha^{(j)})H_{\alpha^{(j)}}( R^\top x^{(i)})$  \;
\eIf{$n>m$}{Update~$\theta$ as in Equation~\eqref{eq:update_theta}\;}{Update~$\delta$ as in Equation~\eqref{eq:update_delta}\;}
}
\end{algorithm}

In terms of numerical complexity, each iteration has a cost of 
\begin{equation*}
\mathcal{O}\bigg(\underset{\text{Hermite features}}{nm^\prime d + nd^2} + \underset{M_f  \text{ and its eigendecomposition}}{d^2(m^\prime)^2 +d^3} + \underset{\text{Sampling}}{m d} + \underset{\text{Computing } \theta \text{ or } \delta}{nm^\prime\max(n,m^\prime)} \bigg),
\end{equation*}
where $m^\prime$ is the number of unique tuples sampled (which is necessarily smaller than $m$, and can be much smaller when $\eta$ is sparse). The parameter $m$ can be chosen to achieve a balance between computational cost and performance, but selecting an excessively small value for $m$ may adversely affect performance. In practice, the number of iterations required for convergence is typically very small (less than $10$), as demonstrated in Section~\ref{sec:num}. Additionally, it is worth noting that the computation cost of $\delta$ in the feature point of view could be reduced through the use of the Nyström approximation \cite{nystrom_rudi}.

\section{Statistical properties}
\label{sec:statistical_consistency}
We now consider the statistical properties of \textbf{RegFeaL}. We always take~$r=1$ and we do not consider the approximation due to the computation of the estimators in this section.  \note{Our goal is to provide a high-probability bound on the expected risk of \textbf{RegFeaL} to gain insights into its generalisation properties under minimal assumptions to obtain a very general result. We do not consider the consistency of the e.d.r. space estimation, as this usually requires much stronger assumptions, such as the linearity condition, the gradient along the relevant directions to be large enough in norm, or constraint on the loss to be the square loss, for example \cite{cook_review_of_li, review}.}

We leverage the results presented in \cite{francis_book}, which provide bounds on the maximum difference between empirical and expected risk, in terms of the expectation over the class of functions with bounded norm. These bounds are expressed in terms of the Rademacher complexity of the set \note{$\{ f \in \mathcal{F}, \ \Omega(f) \leq D\}$}, where $D > 0$ is a fixed bound. By employing these results, we can obtain a probabilistic bound on the constrained estimator and apply McDiarmid's inequality \cite{boucheron2013concentration} to establish a result in probability. Ultimately, Theorem~\ref{theo:stat_general} provides a probabilistic bound for the \textbf{RegFeaL} estimator, leveraging the aforementioned results as well as the optimality conditions satisfied by the estimator.

\subsection{Setup}
\label{sec:theo_setup}
We start by making assumptions about the data used to train the model.
\begin{assumption}[Data]
 \label{ass:sampling}   
$\mathcal{D}= (x^{(i)},y^{(i)})_{i \in [n]}$ is a  set of i.i.d data, with~$(X, Y)$ a pair of random variables such that~$\forall i \in [n], \ (x^{(i)}, y^{(i)}) \sim (X,Y)$.
\end{assumption}
\note{Notice that we do not make strong assumptions on the distribution of the data, such as independence of the covariates or constraint to be elliptically contoured, nor do we require is to be known a priori.}

Let us introduce some definitions. Let $(c_k)_{k >0}$ be a non-null sequence of positive reals. We define the function space $\mathcal{F}$ as $\Vect \big( \{H_0\} \cup \{H_\alpha,  \text{ for } \alpha \in (\mathbb{N}^d)^* \text{such that } c_{|\alpha|>0} \} \big)$. Let $\ell$ be a loss function on~$\mathbb{R} \times \mathbb{R}$, and let the expected risk~$\mathcal{R}$ and the empirical risk~$\widehat{\mathcal{R}}$ be
\begin{align*}
\mathcal{R}(f) = \mathbb{E}_{X,Y}\big(\ell(Y, f(X))\big) \ \text{ and } \ \widehat{\mathcal{R}}(f) = \frac{1}{n}\sum_{i=1}^n \ell(y^{(i)}, f(x^{(i)})).
\end{align*} 

We define the functional norm $\Omega(f)$ for any $f \in \mathcal{F}$ as $\Omega(f) := \Omega_{\rm feat}(f) + |\hat{f}(0)|$ or $\Omega(f) := \Omega_{\rm var}(f) + |\hat{f}(0)|$, where $\hat{f}(0)$ represents the constant coefficient of $f$. It is important to note that the constraint on the constant coefficient is not necessary in practice, but we include it for the purpose of theoretical analysis (we could also add a small weight on $|\hat{f}(0)|$ to this effect). We define the regularised empirical risk $\widehat{\mathcal{R}}\mu(f)$ for $\mu > 0$ as follows
\begin{equation*}
\widehat{\mathcal{R}}\mu(f) = \frac{1}{n} \sum_{i=1}^n \ell(y^{(i)}, f(x^{(i)})) + \mu\Omega(f).
\end{equation*}
We denote our estimator as $f^\mu := \argmin_{f \in \mathcal{F}} \widehat{\mathcal{R}}_\mu$. In order to establish theoretical results, we will rely on the following assumptions. 
\begin{assumption}[Problem assumptions]
\label{ass:problem}
\begin{enumerate}
\item The true regression function $f^* := \argmin_{f \in \mathcal{F}} \mathcal{R}(f)$ exists.
\item For some~$D>0$, the loss function $\ell$ is $G$-Lipschitz continuous in its second argument for any value of its first argument, i.e.,~$\forall y \in \mathcal{Y}, \ \forall x, x^\prime \in \mathcal{X}, \ \forall f \in \mathcal{F}$ such that $ \Omega(f) \leq D, \ |\ell(y, f(x)) - \ell(y, f(x^\prime))| \leq G\cdot |f(x) - f(x^\prime) |$.
\item For some $D > 0$,   $\ell_\infty := \sup_{(x, y) \in \mathcal{X} \times \mathcal{Y}, f \in \mathcal{F}, \Omega(f) \leq D} \ell(y, f(x))$ is finite.
\item The loss $\ell$ is convex on~$\mathbb{R} \times \mathbb{R}$ .
\end{enumerate}
\end{assumption}
For our main result, we will use~$D = 2\Omega(f^*)$. These assumptions are commonly used in the analysis of nonparametric regression~\cite{gyorfi2002distribution}. Many commonly used loss functions in regression problems, such as the quadratic loss, absolute mean error, Huber loss, or logistic loss, are convex.  The Lipschitz continuity condition holds for all of these losses, except for the quadratic loss, which we handle separately, for example by exploiting the boundedness of the data. If the data is bounded (i.e., $\mathcal{X}\times \mathcal{Y}$ is bounded in $\mathbb{R}^d \times \mathbb{R}$), then $\sup_{x \in \mathcal{X}, f \in \mathcal{F},  \Omega(f) \leq D} |f(x)|$ is bounded for any $D>0$.\footnote{This can be seen for $\Omega_{\rm var}$ by noticing that $\Omega(f)$ can be written as $|\hat{f}(0)| + \sum_{a=1}^d \Theta_a(f) \geq \big(|\hat{f}(0)|^2 + \sum_{a=1}^d \Theta_a(f)^2\big)^{1/2}$, with the latter being an RKHS norm with reproducing kernel $k(x, x^\prime) = 1+ \sum_{\alpha \in (\mathbb{N}^d)^*} \frac{c_{|\alpha|}}{|\alpha|} H_\alpha(x) H_\alpha(x^\prime)$. It follows that $f(x) =  \langle f, k(X, \cdot) \rangle \leq \hat{f}(0) + \Omega(f) \sqrt{k(x,x)}$ which is bounded if $x$ is bounded.} We can then use the convexity of the loss $\ell$ and boundedness of $\mathcal{Y}$ to justify that $\ell_\infty$ is well-defined.  For the quadratic loss, in this setting, it satisfies Assumption \ref{ass:problem}.2 because $(y - f(x))^2 - (y - f(x^\prime))^2 = (f(x^\prime) - f(x))(y - f(x) + y - f(x^\prime))$, and we can then take$G := \sup_{(x, y) \in \mathcal{X} \times \mathcal{Y}, f \in \mathcal{F}, \Omega(f) \leq D} |y - f(x) + y - f(x^\prime)|$.

\subsection{Rademacher complexity}
First, we apply the Lipschitz continuity assumption to bound the supremum over a set of functions of the difference between the empirical risk and expected risk, in expectation over the dataset.
\begin{lemma}[\note{Use of Gaussian complexity}]
\label{lemma:expectation_sups}
Let~$\mathcal{G}$ be any set of functions, then under Assumption~\ref{ass:sampling}, and Assumption~\ref{ass:problem}.2,
    \begin{equation*}
        \mathbb{E}_{\mathcal{D}}\bigg(\sup_{f \in \mathcal{G}} \big( \mathcal{R}(f) - \widehat{\mathcal{R}}(f) \big) + \sup_{f \in \mathcal{G}} \big( \widehat{\mathcal{R}}(f) - \mathcal{R}(f) \big) \bigg)\leq 4\sqrt{\frac{\pi}{2}}G \cdot G_n(\mathcal{G}) ,
    \end{equation*}
where 
\begin{equation*}
     G_n(\mathcal{G}) := \mathbb{E}_{\mathcal{D}, \varepsilon \sim \mathcal{N}(0,I_n) }\bigg( \sup_{f \in \mathcal{G}} \frac{1}{n}\sum_{i=1}^n \varepsilon_i f(x^{(i)}) \bigg)
\end{equation*}
is the Gaussian complexity of the set~$\mathcal{G}$~\cite{bartlett2002rademacher}.
\end{lemma}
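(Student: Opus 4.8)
The plan is to follow the classical symmetrisation--contraction route, applied to the loss-composed class $\mathcal{H} := \{(x,y) \mapsto \ell(y,f(x)) : f \in \mathcal{G}\}$. Write $R_n(\mathcal{H})$ for the Rademacher complexity, that is, $G_n$ with the Gaussian vector replaced by a Rademacher vector $\varepsilon \in \{-1,1\}^n$. First I would symmetrise each one-sided supremum separately. Introducing an independent ghost sample $\mathcal{D}' = (\tilde x^{(i)}, \tilde y^{(i)})_{i \in [n]}$ with the same law as $\mathcal{D}$ (Assumption~\ref{ass:sampling}), I write $\mathcal{R}(f) = \mathbb{E}_{\mathcal{D}'}[\widehat{\mathcal{R}}'(f)]$, pull the supremum inside the expectation by Jensen, and use exchangeability of $(x^{(i)},y^{(i)})$ and $(\tilde x^{(i)}, \tilde y^{(i)})$ to insert signs $\varepsilon_i$. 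Splitting $\sup_f(A_f - B_f) \le \sup_f A_f + \sup_f(-B_f)$ and using $-\varepsilon \stackrel{d}{=} \varepsilon$ gives $\mathbb{E}_{\mathcal{D}}[\sup_{f \in \mathcal{G}}(\mathcal{R}(f) - \widehat{\mathcal{R}}(f))] \le 2 R_n(\mathcal{H})$, and the reversed direction is bounded identically, so the left-hand side of the Lemma is at most $4 R_n(\mathcal{H})$.

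Second, I would pass to the Gaussian complexity through $R_n(\mathcal{H}) \le \sqrt{\pi/2}\, G_n(\mathcal{H})$. Writing a standard Gaussian coordinate as $g_i = \varepsilon_i |g_i|$ with $\varepsilon_i = \operatorname{sign}(g_i)$ Rademacher and independent of $|g_i|$, convexity of the supremum together with Jensen (conditioning on the signs) gives $G_n(\mathcal{H}) \ge \mathbb{E}|g_i| \cdot R_n(\mathcal{H}) = \sqrt{2/\pi}\, R_n(\mathcal{H})$, which is the claimed comparison.

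Third is the contraction, removing the loss conditionally on $\mathcal{D}$. I would centre the scalar maps $\phi_i(u) := \ell(y^{(i)}, u)$ by subtracting $\ell(y^{(i)},0)$; the resulting additive constant is independent of $f$ and integrates to zero against the centred Gaussian weights, leaving $G$-Lipschitz maps (Assumption~\ref{ass:problem}.2) vanishing at $0$. Comparing the centred Gaussian processes $X_f = \frac1n \sum_i g_i \phi_i(f(x^{(i)}))$ and $Y_f = \frac{G}{n}\sum_i g_i f(x^{(i)})$, the Lipschitz bound yields $\mathbb{E}(X_f - X_{f'})^2 \le \mathbb{E}(Y_f - Y_{f'})^2$ for all $f, f' \in \mathcal{G}$, so the Sudakov--Fernique inequality gives $G_n(\mathcal{H}) \le G\, G_n(\mathcal{G})$. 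Chaining the three bounds produces exactly $4\sqrt{\pi/2}\, G \cdot G_n(\mathcal{G})$.

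The delicate step is the contraction. Because the loss is Lipschitz only in its second argument, the composition constant is data-dependent, and one must secure the sharp factor $G$ with no spurious factor of $2$. Carrying out the contraction at the Gaussian level via Sudakov--Fernique (rather than a Rademacher contraction, whose general form loses a factor of $2$) is what delivers the clean constant, while the centring at $\ell(y^{(i)},0)$ is what lets the comparison apply to maps vanishing at the origin. The only remaining points to verify are measurability of the suprema and separability of the index set, which are standard under the stated setting.
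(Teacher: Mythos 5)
Your proof is correct and reaches the paper's bound with the same constant, but through a partially different route. Both arguments begin identically: ghost-sample symmetrisation applied to each one-sided supremum bounds the left-hand side by $4R_n(\mathcal{H})$, where $\mathcal{H}$ is the loss-composed class (the paper simply cites Proposition 4.2 of its reference for this step; you carry it out by hand). The paths then diverge. The paper stays at the Rademacher level, applying the Rademacher contraction principle (Proposition 4.3 of the same reference) to get $R_n(\mathcal{H}) \le G \cdot R_n(\mathcal{G})$, and only afterwards compares to the Gaussian complexity via $R_n(\mathcal{G}) \le \sqrt{\pi/2}\, G_n(\mathcal{G})$. You convert to Gaussian complexity first, $R_n(\mathcal{H}) \le \sqrt{\pi/2}\, G_n(\mathcal{H})$, and perform the contraction at the Gaussian level with Sudakov--Fernique, comparing conditionally on the data the centred processes $X_f = \frac{1}{n}\sum_{i} g_i \phi_i\big(f(x^{(i)})\big)$ and $Y_f = \frac{G}{n}\sum_i g_i f(x^{(i)})$; the Lipschitz assumption gives the increment domination, hence $G_n(\mathcal{H}) \le G\cdot G_n(\mathcal{G})$, and chaining yields the identical bound $4\sqrt{\pi/2}\,G\cdot G_n(\mathcal{G})$. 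What your route buys is independence from the Rademacher contraction machinery, needing only a Gaussian comparison theorem; the price is that it genuinely requires Gaussian weights (it says nothing about Rademacher sums directly) plus the separability caveat you correctly flag. One inaccuracy in your closing justification, though it does not affect validity: the claim that a Rademacher contraction ``loses a factor of 2'' holds only for the variant with absolute values inside the supremum; the one-sided form, which is exactly what the paper invokes, already carries the clean constant $G$, so Sudakov--Fernique is an elegant alternative rather than a necessity. Finally, your centring at $\ell(y^{(i)},0)$ is harmless but not needed: Sudakov--Fernique involves only increments, and the $f$-independent terms average to zero under the centred Gaussian weights in any case.
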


See Appendix~\ref{proof:expectation_sups} for the proof, which we include for the sake of completeness. This is a close variation of the work presented in \cite{francis_book}. We now need to bound the Gaussian complexity, when we consider subsets of the working space~$\mathcal{F}$ with bounded norm. 
\begin{lemma}[\note{Bound on Gaussian complexity}]
\label{lemma:rademacher_Gaussian}
Let~$D >0$, with~$\mathcal{G} := \{f \in \mathcal{F},\ \Omega(f) \leq D\}$ with~$\Omega$ defined as in Section~\ref{sec:theo_setup}, under Assumption~\ref{ass:sampling}, we have
\begin{equation*}
    G_n(\mathcal{G}) \leq D \cdot \sqrt{\frac{1}{n}  \bigg( 1 + \sum_{\alpha \in (\mathbb{N}^d)^*} \frac{c_{|\alpha|}}{|\alpha|}\mathbb{E}_{X}\big(H_\alpha(X)^2\big)} \bigg).
\end{equation*} 
\end{lemma}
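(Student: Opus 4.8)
The plan is to reduce the statement to the classical bound on the Gaussian complexity of a ball in a reproducing kernel Hilbert space (RKHS). The first step is to exhibit a kernel whose RKHS norm is dominated by $\Omega$, so that $\mathcal{G}$ is contained in an RKHS ball of the same radius $D$. I would take the kernel flagged in the discussion following Assumption~\ref{ass:problem}, namely $k(x,x^\prime) = 1 + \sum_{\alpha \in (\mathbb{N}^d)^*} \frac{c_{|\alpha|}}{|\alpha|} H_\alpha(x) H_\alpha(x^\prime)$, whose RKHS $\mathcal{H}$ has squared norm $\|f\|_{\mathcal{H}}^2 = \hat f(0)^2 + \sum_{\alpha \in (\mathbb{N}^d)^*} \frac{|\alpha|}{c_{|\alpha|}}\hat f(\alpha)^2$. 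The crucial algebraic fact to verify is $\Omega(f) \geq \|f\|_{\mathcal{H}}$. Writing $\Theta_a(f) := \big(\sum_{\alpha \in (\mathbb{N}^d)^*}\alpha_a \frac{1}{c_{|\alpha|}}\hat f(\alpha)^2\big)^{1/2}$, both penalties reduce to the same quantity: for variable selection $\Omega_{\rm var}(f)=\sum_a \Theta_a(f)$, and by the $\ell_1 \geq \ell_2$ inequality $\sum_a \Theta_a(f) \geq (\sum_a \Theta_a(f)^2)^{1/2}$, where $\sum_a \Theta_a(f)^2 = \sum_{\alpha \in (\mathbb{N}^d)^*}\frac{|\alpha|}{c_{|\alpha|}}\hat f(\alpha)^2$ after swapping the order of summation and using $\sum_a \alpha_a = |\alpha|$; for feature learning $\Omega_{\rm feat}(f) = \trace(M_f^{1/2}) = \sum_a D_a^{1/2} \geq (\sum_a D_a)^{1/2} = (\trace M_f)^{1/2}$, again by $\ell_1 \geq \ell_2$ on the eigenvalues, and $\trace(M_f)=\sum_a \Theta_a(f)^2$ by the identity for $\Diag(M_f)_{a,a}$ established in the proof of assertion~2 of Lemma~\ref{lem:properties_of_omega_feat}. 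Absorbing the $|\hat f(0)|$ term via $a+b \geq (a^2+b^2)^{1/2}$ once more gives $\Omega(f)\geq \|f\|_{\mathcal{H}}$ in both settings.

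Having established the inclusion $\mathcal{G} \subseteq \{f : \|f\|_{\mathcal{H}} \leq D\}$, the second step is the standard RKHS complexity computation. Using the reproducing property $f(x^{(i)}) = \langle f, k(x^{(i)},\cdot)\rangle_{\mathcal{H}}$ and Cauchy--Schwarz, the supremum over the ball is explicit:
\[ \sup_{\|f\|_{\mathcal{H}}\leq D}\frac{1}{n}\sum_{i=1}^n \varepsilon_i f(x^{(i)}) = \frac{D}{n}\Big\| \sum_{i=1}^n \varepsilon_i k(x^{(i)},\cdot)\Big\|_{\mathcal{H}}. \]
Taking expectation over $\varepsilon \sim \mathcal{N}(0,I_n)$ and applying Jensen to the square root, the independence of the $\varepsilon_i$ annihilates the cross terms in $\|\sum_i \varepsilon_i k(x^{(i)},\cdot)\|_{\mathcal{H}}^2 = \sum_{i,j}\varepsilon_i\varepsilon_j k(x^{(i)},x^{(j)})$, leaving $\sum_i k(x^{(i)},x^{(i)})$ under the root. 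A second application of Jensen over the dataset, with Assumption~\ref{ass:sampling} (the $x^{(i)}$ identically distributed), turns $\mathbb{E}_{\mathcal{D}}\big(\sum_i k(x^{(i)},x^{(i)})\big)$ into $n\,\mathbb{E}_X(k(X,X))$.

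Finally I would read off the diagonal of the kernel, $k(X,X) = 1 + \sum_{\alpha \in (\mathbb{N}^d)^*}\frac{c_{|\alpha|}}{|\alpha|}H_\alpha(X)^2$, so that $\mathbb{E}_X(k(X,X)) = 1 + \sum_{\alpha \in (\mathbb{N}^d)^*}\frac{c_{|\alpha|}}{|\alpha|}\mathbb{E}_X(H_\alpha(X)^2)$. Chaining the estimates yields $G_n(\mathcal{G}) \leq \frac{D}{\sqrt{n}}\sqrt{\mathbb{E}_X(k(X,X))}$, which is exactly the claimed bound. I expect the main subtlety to lie not in the RKHS machinery, which is routine, but in rigorously setting up $\mathcal{H}$ (positive semidefiniteness of $k$ and the fact that the defining series converges, at least on the diagonal in $\mathbb{E}_X$, which is precisely the quantity controlling the final rate) and in the domination $\Omega(f) \geq \|f\|_{\mathcal{H}}$, where the specific choice of weights $c_{|\alpha|}/|\alpha|$ in $k$ is used essentially. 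It is worth noting that the reduction relies only on a lower bound for $\Omega$, so any slack in the $\ell_1 \geq \ell_2$ steps is harmless when upper bounding $G_n$.
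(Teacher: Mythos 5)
Your proposal is correct: the domination $\Omega(f) \geq \|f\|_{\mathcal{H}}$ for the kernel $k(x,x^\prime) = 1 + \sum_{\alpha \in (\mathbb{N}^d)^*} \frac{c_{|\alpha|}}{|\alpha|} H_\alpha(x) H_\alpha(x^\prime)$, the ball inclusion, and the standard kernel Gaussian-complexity chain (Cauchy--Schwarz, then Jensen twice) reproduce exactly the stated constant, and the degenerate case $\mathbb{E}_X(k(X,X)) = +\infty$ makes the bound vacuously true, so the well-definedness caveat you raise is harmless. Compared with the paper, your variable-selection step is the same inequality seen from the primal side: the paper writes the supremum as $D\,\Omega^*(\xi)$ with $\hat{\xi}(\alpha) = \frac{1}{n}\sum_i \varepsilon_i H_\alpha(x^{(i)})$, bounds the dual norm of the sum of semi-norms $\Theta_0,\dots,\Theta_d$ by exhibiting the explicit decomposition $\hat{\xi}_a(\alpha) = \frac{\sqrt{\alpha_a}}{\sum_b \sqrt{\alpha_b}}\hat{\xi}(\alpha)$, and then evaluates $\mathbb{E}_\varepsilon\big[\tfrac{1}{n^2}\varepsilon^\top \Phi W^2 \Phi^\top \varepsilon\big]$ as a trace; your bound $\Omega_{\rm var}(f)^2 \geq \sum_{\alpha}\frac{|\alpha|}{c_{|\alpha|}}\hat f(\alpha)^2$ via $\ell_1 \geq \ell_2$ and $\sum_a \alpha_a = |\alpha|$ is precisely the dual statement of that step, and your kernel computation is the same trace computation in RKHS packaging. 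The genuine divergence is the feature-learning case: the paper routes through assertion 6 of Lemma~\ref{lem:properties_of_omega_feat}, which forces it to control $\sup_{R \in O_d}\Omega_{\rm var}^*(\xi_R)$ and to invoke the rotation invariance of Hermite sums (Lemma~\ref{lem:invariance_property}) to show the resulting quadratic form does not depend on $R$; you bypass all of this with $\trace(M_f^{1/2}) \geq (\trace M_f)^{1/2}$ together with $\trace(M_f) = \trace(\Diag(M_f)) = \sum_a \Theta_a(f)^2$ (assertion 2 of Lemma~\ref{lem:properties_of_omega_feat}; note the paper's displayed $(\alpha_a+1)^2$ there is a typo for $\alpha_a+1$), so both penalties dominate the same RKHS norm and no rotation argument is needed. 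That is a real simplification, eliminating one of the two uses of Lemma~\ref{lem:invariance_property} in the paper; what the paper's dual route retains in exchange is the stronger intermediate fact that the quadratic form $\varepsilon^\top \Phi_R W^2 \Phi_R^\top \varepsilon$ is itself pointwise invariant under $R$, and a dual-norm bound that could be reused in arguments where a primal domination is less accessible, but for this lemma your argument is leaner and yields an identical constant.
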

We remark that the result depends heavily on the data distribution through the expectations~$\mathbb{E}_X(H_\alpha(X)^2)$ and the design of the norm through~$(c_k)_{k>0}$. We discuss these in more details in Section~\ref{sec:dependence}.

\begin{proof}[\note{Proof of Lemma~\ref{lemma:rademacher_Gaussian}}]
We first consider the norm~$\Omega_{\mathrm{var}}$. Let~$f \in \mathcal{G}$, we have
\begin{equation*}
    \frac{1}{n} \sum_{i=1}^n \varepsilon_i f(x^{(i)}) =  \sum_{\alpha \in \mathbb{N}^d} \hat{f}(\alpha) \bigg( \frac{1}{n} \sum_{i=1}^n \varepsilon_i H_\alpha(x^{(i)}) \bigg) =  \sum_{\alpha \in \mathbb{N}^d} \hat{f}(\alpha) \hat{\xi}(\alpha),
\end{equation*}
with~$\xi$ an infinite vector indexed by~$(\mathbb{N}^d),  \hat{\xi}(\alpha) = \frac{1}{n} \sum_{i=1}^n \varepsilon_i H_\alpha(x^{(i)})~$. Therefore 
\begin{equation*}
    \sup_{f \in \mathcal{G}} \frac{1}{n} \sum_{i=1}^n \varepsilon_i f(x^{(i)}) = \sup_{f \in \mathcal{G}} \sum_{\alpha \in \mathbb{N}^d} \hat{f}(\alpha) \hat{\xi}(\alpha) = D \cdot \Omega_{\mathrm{var}}^*(\xi) .
\end{equation*}
Now since~$\Omega_{\mathrm{var}}$ is the sum of~$d+1$ semi-norms~$\Theta_0, \Theta_1,\dots,\Theta_d$, with 
\begin{align*}
&\Theta_0(f) = |\hat{f}(0)| \\
& \Theta_a(f) = \bigg(\sum_{\alpha \in (\mathbb{N}^d)^*} \frac{\alpha_a}{c_{|\alpha|}} \hat{f}(\alpha)^2\bigg)^{1/2}, \forall a \in [d],
\end{align*}
we have
\begin{equation*}
\Omega_{\mathrm{var}}^*(\xi)
= \inf_{ \xi = \sum_{a=0}^d \xi_a} \sup_{a \in \{0, \ldots, d\}} \Theta_a^*(\xi_a).
\end{equation*}
This is an extension of the fact that the set~$\Omega_{\mathrm{var}}^*(\xi) \leq 1$ is the subdifferential of~$\Omega_{\mathrm{var}}$ at~$f=0$, and thus the sum of the~$d$ subdifferentials of~$\Omega_0,\dots,\Omega_d$ at~$f=0$. We consider $a \in [d]$, we have
\begin{align*}
\Omega_a^*(\xi_a)^2 = \sum_{\alpha \in \mathbb{N}^d, \alpha_a > 0} \hat{\xi}_a(\alpha) ^2 \frac{c_{|\alpha|}}{\alpha_a},
\end{align*}
and $\Omega_0^*(\xi)^2 = \hat{\xi}(0)^2$.
 
If we choose~$\forall \alpha \in (\mathbb{N}^d)^*, \ \hat{\xi}_a(\alpha) = \frac{\sqrt{\alpha_a}}{\sum_b \sqrt{\alpha_b}}\hat{\xi}(\alpha)$, $\hat{\xi}_0(\alpha) = 0$, $\hat{\xi}_a(0) =0$ and $ \hat{\xi}_0(0) = \hat{\xi}(0)$, we have:
\begin{align*}
  \Omega_{\mathrm{var}}^\ast(\xi)^2   & \leq \sup \bigg( \sup_{a \in [d]}   \sum_{\alpha \in \mathbb{N}^d, \alpha_a >0} \hat{\xi}_a(\alpha) ^2 \frac{c_{|\alpha|}}{\alpha_a} , \hat{\xi}_0(0)^2 \bigg) \\
   & \leq \sup \bigg(  \sup_{a \in [d]} \sum_{\alpha \in \mathbb{N}^d, \alpha_a >0} \hat{\xi}(\alpha) ^2 \frac{c_{|\alpha|}}{\big(\sum_b \sqrt{\alpha_b}\big)^2} , \hat{\xi}(0)^2 \bigg)\\
    & \leq \sum_{\alpha \in \mathbb{N}^d} \hat{\xi}(\alpha)^2 \bigg(\frac{c_{|\alpha|}}{|\alpha|}\mathds{1}_{|\alpha|>0}  + \mathds{1}_{|\alpha| =0} \bigg).
\end{align*}
Let~$W^2 = \Diag\big(\frac{c_{|\alpha|}}{|\alpha|}\mathds{1}_{|\alpha|>0} +  \mathds{1}_{|\alpha| =0}\big)$ and~$\Phi$ the design matrix of all~$H_\alpha(x^{(i)})$ (with~$n$ rows and infinitely many columns indexed with~$\alpha \in \mathbb{N}^d$). We have~$\hat\xi = \frac{1}{n} \Phi^\top \varepsilon$, and
\begin{equation*}
 \Omega_{\mathrm{var}}^*(\xi)^2 \leq \hat{\xi}^\top W^2 \hat{\xi} =   \frac{1}{n^2}  \varepsilon^\top  \Phi W^2 \Phi^\top \varepsilon.
\end{equation*}
We  compute the expectation of~$\Omega_{\mathrm{var}}^*(\xi)^2$ for~$\varepsilon \sim \mathcal{N}(0, I_n)$, and get 
\begin{align*}
    \mathbb{E}_\varepsilon\big(\Omega_{\rm var}^*(\xi)^2\big) \leq \mathbb{E}_\varepsilon\bigg( \frac{1}{n^2} \varepsilon^\top  \Phi W^2 \Phi^\top \varepsilon \bigg) &= \frac{1}{n^2} \trace\big(\Phi W^2 \Phi^\top\big) \\ &= \frac{1}{n} + \frac{1}{n^2}  \sum_{\alpha \in (\mathbb{N}^d)^*} \sum_{i=1}^n \frac{c_{|\alpha|}}{|\alpha|} H_\alpha(x^{(i)})^2.
\end{align*}
We now take expectations with regards to the data~$\mathcal{D}$ and get 
\begin{equation*}
      \mathbb{E}_{\mathcal{D},\varepsilon}\big(\Omega_{\mathrm{var}}^*(\xi)^2\big) \leq \frac{1}{n} \sum_{\alpha \in (\mathbb{N}^d)^*} \frac{c_{|\alpha|}}{|\alpha|} \mathbb{E}_{X}\big( H_\alpha(X)^2 \big) + \frac{1}{n}.
\end{equation*}
Using Cauchy-Schwartz,
$\mathbb{E}_{\mathcal{D}, \varepsilon}\big(\Omega_{\mathrm{var}}^*(\xi)\big) \leq \sqrt{\frac{1}{n} \big( 1+ \sum_{\alpha \in (\mathbb{N}^d)^*} \frac{c_{|\alpha|}}{|\alpha|}\mathbb{E}_{X}\big(H_\alpha(X)^2\big)\big)}$. 

From Lemma~\ref{lem:properties_of_omega_feat}, we have
\begin{equation*}
    \Omega_{\mathrm{feat}}(f) \geq  \inf_{R \in O_d} \Omega_{\mathrm{var}}(f(R \cdot)).
\end{equation*}
Then, for an infinite vector~$\xi$ indexed by~$\mathbb{N}^d$, with~$\hat{\xi}(\alpha) = \frac{1}{n} \sum_{i=1}^n \varepsilon_i H_\alpha(x^{(i)})~$  and~$\xi_R$ an infinite vector indexed by~$\mathbb{N}^d$ with~$ \hat{\xi}_R(\alpha) = \frac{1}{n} \sum_{i=1}^n \varepsilon_i H_\alpha(Rx^{(i)})~$, we have~$\Omega_{\mathrm{feat}}^*(\xi) \leq \sup_{R \in O_d} \Omega_{\mathrm{var}}^*(\xi_R)$. 

Therefore
\begin{equation*}
    \sup_{R \in O_d} \Omega_{\mathrm{var}}^*(\xi_R) \leq  \sup_{R \in O_d}  \frac{1}{n^2}  \varepsilon^\top  \Phi_R W^2 \Phi_R^\top \varepsilon,
\end{equation*}
with~$\Phi_R$ the design matrix of all~$H_\alpha(Rx^{(i)})$ (with~$n$ rows and infinitely many columns indexed with~$\alpha \in \mathbb{N}^d$). Therefore using Lemma~\ref{lem:invariance_property},
\begin{align*}
     \varepsilon^\top  \Phi_R W^2 \Phi_R^\top \varepsilon &= \sum_{i,j} \varepsilon_i \varepsilon_j  \bigg(1 + \sum_{\alpha \in (\mathbb{N}^d)^*} \frac{c_{|\alpha|}}{|\alpha|} H_\alpha(Rx^{(i)})H_\alpha(Rx^{(j)}) \bigg)\\
     & = \sum_{i,j} \varepsilon_i \varepsilon_j \bigg( 1 + \sum_{k=1}^{+\infty} \frac{c_k}{k}\sum_{\alpha \in \mathbb{N}^d, \ |\alpha| = k }H_\alpha(Rx^{(i)})H_\alpha(Rx^{(j)}) \bigg) \\
     & = \sum_{i,j} \varepsilon_i \varepsilon_j \bigg( 1+ \sum_{k=1}^{+\infty} \frac{c_k}{k}\sum_{|\alpha| = k }H_\alpha(x^{(i)})H_\alpha(x^{(j)}) \bigg) = \varepsilon^\top  \Phi W^2 \Phi^\top \varepsilon ,
\end{align*}
which is independent of~$R$, therefore yielding exactly the same result as for~$\Omega_{\mathrm{var}}$ once expectation with regards to~$\varepsilon$ and the data is taken.
\end{proof}

\subsection{Statistical convergence}
To gain insight into the proof technique, we initially establish an expectation-based result for the constrained estimator instead of the regularised estimator. We bound the expected risk of the function that minimises the empirical risk over the set of functions with a bounded norm, in expectation over the dataset. To accomplish this, we use Lemma~\ref{lemma:expectation_sups} and Lemma~\ref{lemma:rademacher_Gaussian}.
\begin{lemma}[\note{Expected risk of constrained estimator}]
\label{prop:constrained_expectation}
Let~\note{$D>\Omega(f^*)$}, let~$f^D = \argmin_{f \in \mathcal{F}, \ \Omega(f) \leq D} \widehat{\mathcal{R}}(f)$, under Assumptions~\ref{ass:sampling}, \ref{ass:problem}.1 and \ref{ass:problem}.2, 
\begin{equation*}
    \mathbb{E}_{\mathcal{D}}\big(\mathcal{R}(f^D)\big) \leq \mathcal{R}(f^*) + \frac{4GD}{\sqrt{n}}\sqrt{\frac{\pi}{2}}\sqrt{ 1 + \sum_{\alpha \in (\mathbb{N}^d)^*} \frac{c_{|\alpha|}}{|\alpha|}\mathbb{E}_{X}\big(H_\alpha(X)^2\big)}.
\end{equation*}
\end{lemma}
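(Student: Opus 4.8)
The plan is to use the standard symmetrisation-plus-optimality scheme, which reduces the control of the excess risk of the constrained minimiser to the two preceding lemmas. First I would set $\mathcal{G} := \{f \in \mathcal{F}, \ \Omega(f) \leq D\}$ and observe that the hypothesis $D > \Omega(f^*)$ guarantees $f^* \in \mathcal{G}$, while $f^D \in \mathcal{G}$ by construction. This is the only place where $D > \Omega(f^*)$ is used, and it is essential: it is what allows me to compare $f^D$ against $f^*$ inside the constrained class.

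Next I would decompose the excess risk as
\begin{equation*}
\mathcal{R}(f^D) - \mathcal{R}(f^*) = \big(\mathcal{R}(f^D) - \widehat{\mathcal{R}}(f^D)\big) + \big(\widehat{\mathcal{R}}(f^D) - \widehat{\mathcal{R}}(f^*)\big) + \big(\widehat{\mathcal{R}}(f^*) - \mathcal{R}(f^*)\big).
\end{equation*}
Since $f^D$ minimises the empirical risk over $\mathcal{G}$ and $f^* \in \mathcal{G}$, the middle term is nonpositive. The first term is bounded by $\sup_{f \in \mathcal{G}}\big(\mathcal{R}(f) - \widehat{\mathcal{R}}(f)\big)$ using $f^D \in \mathcal{G}$, and the last term by $\sup_{f \in \mathcal{G}}\big(\widehat{\mathcal{R}}(f) - \mathcal{R}(f)\big)$ using $f^* \in \mathcal{G}$.

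I would then take expectation over the dataset $\mathcal{D}$ and apply Lemma~\ref{lemma:expectation_sups} directly to the sum of the two one-sided suprema, obtaining
\begin{equation*}
\mathbb{E}_{\mathcal{D}}\big(\mathcal{R}(f^D)\big) - \mathcal{R}(f^*) \leq 4\sqrt{\tfrac{\pi}{2}}\, G \cdot G_n(\mathcal{G}).
\end{equation*}
Finally I would substitute the bound on $G_n(\mathcal{G})$ from Lemma~\ref{lemma:rademacher_Gaussian}, which after collecting the constant $4\sqrt{\pi/2}\,G D / \sqrt{n}$ yields the stated inequality verbatim.

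I do not anticipate any genuine obstacle, since the two technical lemmas have already absorbed the only nontrivial work (the Lipschitz contraction to Gaussian complexity in Lemma~\ref{lemma:expectation_sups}, and the rotation-invariant kernel bound in Lemma~\ref{lemma:rademacher_Gaussian}). The single point requiring care is the sign bookkeeping: one must bound the two deviations by the two \emph{matching} one-sided suprema, so that Lemma~\ref{lemma:expectation_sups}, stated for the sum of both suprema, applies cleanly without losing a constant factor.
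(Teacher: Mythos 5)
Your proposal is correct and follows exactly the same route as the paper's proof: the identical three-term excess-risk decomposition over $\mathcal{G} = \{f \in \mathcal{F},\ \Omega(f) \leq D\}$, dropping the middle term by optimality of $f^D$ (which requires $f^* \in \mathcal{G}$, i.e.\ $D > \Omega(f^*)$), bounding the remaining two deviations by the matching one-sided suprema, then applying Lemma~\ref{lemma:expectation_sups} followed by Lemma~\ref{lemma:rademacher_Gaussian}. No gaps; your remark about matching the one-sided suprema to the form required by Lemma~\ref{lemma:expectation_sups} is precisely the bookkeeping the paper does implicitly.
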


\begin{proof}[\note{Proof of Lemma~\ref{prop:constrained_expectation}}]
With~$\mathcal{G}:=\{f \in \mathcal{F}, \  \Omega(f) \leq D \}$, we have the classical decomposition of the excess risk
\begin{align*}
    \mathcal{R}(f^D) - \mathcal{R}(f^*) &= \mathcal{R}(f^D) - \widehat{\mathcal{R}}(f^D) + \widehat{\mathcal{R}}(f^D) - \widehat{\mathcal{R}}(f^*) + \widehat{\mathcal{R}}(f^*) - \mathcal{R}(f^*) \\
    &\leq \mathcal{R}(f^D) - \widehat{\mathcal{R}}(f^D) + \widehat{\mathcal{R}}(f^*) - \mathcal{R}(f^*) \\
    & \leq \sup_{f \in \mathcal{G}} \mathcal{R}(f) - \widehat{\mathcal{R}}(f) + \sup_{f \in \mathcal{G}} \widehat{\mathcal{R}}(f) - \mathcal{R}(f).
\end{align*}

We then take the expectation over the data on both sides and use Lemma~\ref{lemma:expectation_sups} and Lemma~\ref{lemma:rademacher_Gaussian}
\begin{align*}
     \mathbb{E}_{\mathcal{D}}\big(\mathcal{R}(f^D)\big) - \mathcal{R}(f^*) 
    & \leq \mathbb{E}_{\mathcal{D}}\big(\sup_{f \in \mathcal{F}} \mathcal{R}(f) - \widehat{\mathcal{R}}(f) + \sup_{f \in \mathcal{F}} \widehat{\mathcal{R}}(f) - \mathcal{R}(f) \big)\\
    & \leq  4\sqrt{\frac{\pi}{2}}G \cdot G_n(\mathcal{G}) \\
    &\leq \frac{4GD}{\sqrt{n}}\sqrt{\frac{\pi}{2}}\sqrt{ 1 + \sum_{\alpha \in (\mathbb{N}^d)^*} \frac{c_{|\alpha|}}{|\alpha|} \mathbb{E}_{X}\big(H_\alpha(X)^2\big)},
\end{align*}
hence the desired result.
\end{proof}

In addition to the expectation-based result, obtaining a result with high probability for the constrained estimator is also of interest. This is achieved in Lemma~\ref{lemma:constrained}, presented in Appendix~\ref{sec:prop_constrained}, by using McDiarmid's inequality \cite{boucheron2013concentration}. To apply this inequality, an additional assumption is required: the boundedness of the loss (Assumption~\ref{ass:problem}.3). However, the most significant and relevant result is the one obtained for the estimator that minimises the regularised empirical risk. This result is more realistic and imposes the additional requirement of convexity of the loss function.

\begin{theorem}[\note{High-probability bound on expected risk of regularised estimator}]
\label{theo:stat_general}
Under Assumption~\ref{ass:sampling} and Assumptions~\ref{ass:problem}.1, \ref{ass:problem}.2, \ref{ass:problem}.3 with~$D = 2\Omega(f^*)$, \ref{ass:problem}.4, then for any~$\delta \in (0,1)$, with the choice of regularising parameter
\begin{equation*}
    \mu =  \frac{8G}{\sqrt{n}}\sqrt{\frac{\pi}{2}}\sqrt{1 + \sum_{\alpha \in (\mathbb{N}^d)^*} \frac{c_{|\alpha|}}{|\alpha|} \mathbb{E}_{X}\big( H_\alpha(X)^2\big)} + \frac{\ell_\infty2\sqrt{2}}{\Omega(f^*)\sqrt{n}}\sqrt{\log \frac{2}{\delta}},
\end{equation*}
with probability larger than~$1-\delta$
\begin{align*}
    \mathcal{R}(f^\mu) & \leq \mathcal{R}(f^*) \\
    & \hspace*{-1cm}+   \Omega(f^*)\bigg(\frac{16G}{\sqrt{n}}\sqrt{\frac{\pi}{2}}\sqrt{1 + \sum_{\alpha \in (\mathbb{N}^d)^*} \frac{c_{|\alpha|}}{|\alpha|} \mathbb{E}_X\big( H_\alpha(X)^2 \big)}\bigg) + \frac{\ell_\infty4\sqrt{2}}{\sqrt{n}}\sqrt{\log \frac{1}{\delta}} \\
    \text{ and } \ 
    \Omega(f^\mu) &\leq 2\Omega(f^*).
\end{align*}    
\end{theorem}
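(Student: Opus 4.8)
\noindent The plan is to work on a single high-probability ``good event'' on which the two-sided uniform deviation between empirical and expected risk is controlled over the ball $\mathcal{G} := \{f \in \mathcal{F} : \Omega(f) \le 2\Omega(f^*)\}$, and then to exploit both the optimality of $f^\mu$ and the convexity of the objective. Concretely, I would set
\[
W := \sup_{f \in \mathcal{G}}\big(\mathcal{R}(f) - \widehat{\mathcal{R}}(f)\big) + \sup_{f \in \mathcal{G}}\big(\widehat{\mathcal{R}}(f) - \mathcal{R}(f)\big).
\]
Lemma~\ref{lemma:expectation_sups} combined with Lemma~\ref{lemma:rademacher_Gaussian} (applied with $D = 2\Omega(f^*)$) bounds $\mathbb{E}_{\mathcal{D}}[W]$ by $\tfrac{8\Omega(f^*)G}{\sqrt n}\sqrt{\tfrac{\pi}{2}}\sqrt{1 + \sum_{\alpha}\tfrac{c_{|\alpha|}}{|\alpha|}\mathbb{E}_X(H_\alpha(X)^2)}$, which is exactly $\Omega(f^*)$ times the first term of the prescribed $\mu$. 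Since replacing one sample moves each empirical risk by $O(\ell_\infty/n)$ (this is where Assumption~\ref{ass:problem}.3 and the restriction to $\mathcal{G}$ enter), $W$ has bounded differences, so McDiarmid's inequality yields a fluctuation matching the second term of $\mu$. Arranging the confidence budget so that the total failure probability is $\delta$, the good event $\{W \le \mu\,\Omega(f^*)\}$ then holds with probability at least $1-\delta$; this calibration is precisely why $\mu$ is defined the way it is.

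\noindent The crux, and the step I expect to be the main obstacle, is showing that $f^\mu$ itself lies in $\mathcal{G}$, i.e. $\Omega(f^\mu) \le 2\Omega(f^*)$ --- the second conclusion of the theorem --- because this is where convexity (Assumption~\ref{ass:problem}.4) is indispensable. I would argue by contradiction: if $\Omega(f^\mu) > 2\Omega(f^*)$, then by continuity and convexity of $\Omega$ there is a point $f_0$ on the segment $[f^*, f^\mu]$ with $\Omega(f_0) = 2\Omega(f^*)$, so $f_0 \in \mathcal{G}$. As $\ell$ is convex, $\widehat{\mathcal{R}}_\mu = \widehat{\mathcal{R}} + \mu\Omega$ is convex with global minimiser $f^\mu$, hence $\widehat{\mathcal{R}}_\mu(f_0) \le \widehat{\mathcal{R}}_\mu(f^*)$; substituting $\Omega(f_0) = 2\Omega(f^*)$ and rearranging gives $\widehat{\mathcal{R}}(f_0) - \widehat{\mathcal{R}}(f^*) \le -\mu\Omega(f^*)$. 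On the other hand, inserting $\pm\mathcal{R}$ and using $\mathcal{R}(f_0) \ge \mathcal{R}(f^*)$ (optimality of $f^*$, Assumption~\ref{ass:problem}.1) together with $f_0, f^* \in \mathcal{G}$ gives $\widehat{\mathcal{R}}(f_0) - \widehat{\mathcal{R}}(f^*) \ge -W$. Combining the two forces $W \ge \mu\Omega(f^*)$, contradicting the good event (the strict/non-strict margin being absorbed by the slack built into $\mu$). Hence $f^\mu \in \mathcal{G}$.

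\noindent With $f^\mu \in \mathcal{G}$ secured, the risk bound follows from the excess-risk decomposition
\[
\mathcal{R}(f^\mu) - \mathcal{R}(f^*) = \big(\mathcal{R}(f^\mu) - \widehat{\mathcal{R}}(f^\mu)\big) + \big(\widehat{\mathcal{R}}(f^\mu) - \widehat{\mathcal{R}}(f^*)\big) + \big(\widehat{\mathcal{R}}(f^*) - \mathcal{R}(f^*)\big).
\]
The middle term is at most $\mu\Omega(f^*) - \mu\Omega(f^\mu) \le \mu\Omega(f^*)$ by optimality of $f^\mu$, while the first and third terms are each dominated by the relevant supremum in $W$ because $f^\mu, f^* \in \mathcal{G}$. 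Thus on the good event $\mathcal{R}(f^\mu) - \mathcal{R}(f^*) \le W + \mu\Omega(f^*) \le 2\mu\Omega(f^*)$, and unfolding the definition of $\mu$ produces the stated bound. The factor $2$ relative to $\mu$ explains the constants $16$ and $4\sqrt2$ in the statement, and the precise interplay between $\log(2/\delta)$ in $\mu$ and $\log(1/\delta)$ in the final inequality comes from distributing the confidence level across the concentration step --- routine bookkeeping I would finalise at the very end rather than dwell on here.
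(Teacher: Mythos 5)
Your proof is correct and reaches both conclusions with the same constants as the paper's own derivation, but your localisation step takes a genuinely different route. Writing $\mathcal{R}_\mu := \mathcal{R} + \mu\Omega$ and $\widehat{\mathcal{R}}_\mu := \widehat{\mathcal{R}} + \mu\Omega$, the paper anchors everything at the \emph{population regularised} minimiser $f^{\mu *} := \argmin_{f \in \mathcal{F}} \mathcal{R}_\mu(f)$: it introduces the convex set $\mathcal{C}_{D,\tau} = \{ f \in \mathcal{F} : \Omega(f) \leq D, \ \mathcal{R}_\mu(f) - \mathcal{R}_\mu(f^{\mu *}) \leq \tau \}$ with $D = 2\Omega(f^*)$, $\tau = \mu\Omega(f^*)$, argues via the subgradient optimality condition $\Omega^*(\mathcal{R}^\prime(f^{\mu *})) \leq \mu$ and H\"older's inequality that only the risk constraint can be active on the boundary of this set, and then runs the segment argument on $[f^\mu, f^{\mu *}]$, bounding $\mathcal{R}_\mu(f) - \mathcal{R}_\mu(f^{\mu *}) + \widehat{\mathcal{R}}_\mu(f^{\mu *}) - \widehat{\mathcal{R}}_\mu(f)$ by the same uniform deviation $W$ you use. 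You instead anchor at $f^*$ itself: your localisation set is just the ball $\mathcal{G}$, the crossing point $f_0 \in [f^*, f^\mu]$ with $\Omega(f_0) = 2\Omega(f^*)$ comes from the intermediate value theorem (valid since $t \mapsto \Omega(f^* + t(f^\mu - f^*))$ is Lipschitz), and the contradiction $W \geq \mu\Omega(f^*)$ uses only the empirical optimality of $f^\mu$ plus $\mathcal{R}(f_0) \geq \mathcal{R}(f^*)$. Your version is more elementary: it never requires $f^{\mu *}$ to exist and avoids dual-norm/subgradient calculus in function space; it also happens to sidestep the one shaky step in the paper's write-up, namely the ``triangular inequality'' bound $-\mu\Omega(f - f^{\mu *}) + \mu\Omega(f) - \mu\Omega(f^{\mu *}) \geq 2\mu\Omega(f) - 2\mu\Omega(f^{\mu *})$, whose inequality is reversed as stated (the boundary claim is nevertheless true, but by exactly the kind of direct comparison $\mathcal{R}(f) \geq \mathcal{R}(f^*)$, $\mathcal{R}_\mu(f^{\mu *}) \leq \mathcal{R}_\mu(f^*)$ that your argument relies on). What the paper's formulation buys in exchange is the extra conclusion $\mathcal{R}_\mu(f^\mu) \leq \mathcal{R}_\mu(f^{\mu *}) + \tau$, i.e.\ a comparison to the regularised population optimum, and conformity with the textbook template it cites.

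Two bookkeeping remarks. First, only one concentration event is needed, so no splitting of the confidence budget is actually required; and your observation that the $\sqrt{\log(2/\delta)}$ inside $\mu$ provides strict slack over the $\sqrt{\log(1/\delta)}$ McDiarmid fluctuation is precisely the clean way to settle the strict-versus-non-strict issue in your contradiction. Second, your argument, exactly like the paper's own proof, outputs the final $\ell_\infty$-term as $\frac{4\sqrt{2}\ell_\infty}{\sqrt{n}}\sqrt{\log(2/\delta)}$ rather than the $\sqrt{\log(1/\delta)}$ displayed in the theorem; that mismatch is an inconsistency between the paper's statement and its proof, not a gap in your argument.
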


We now discuss the meaning of Theorem~\ref{theo:stat_general}. The theorem states that with high probability, under the appropriate choice of the regularisation parameter, the norm of the estimator $f^\mu$, $\Omega(f^\mu)$, is bounded by twice the norm of the true regression function $f^*$, $\Omega(f^*)$. We remark that the choice of regularisation parameter depends on $\Omega(f^*)$, however, this is not the case in the bounded setting, see the discussion in Section~\ref{sec:dependence}.  Under Assumption~\ref{ass:feature} (feature learning setting) or Assumption~\ref{ass:variable} (variable selection setting), we know that~$\Omega(f^*)$ does not depend explicitly on~$d$ but only on~$s$, the underlying number of variables or dimension of the linear subspace.

The norm $\Omega(f^*)$ also helps us bound the difference between the expected risk of the estimator $\mathcal{R}(f^\mu)$ and the expected risk of the true regression function $\mathcal{R}(f^*)$. This difference, denoted as $\mathcal{R}(f^\mu) - \mathcal{R}(f^*)$, has a dependency on the number of samples $n$, with a convergence rate of $n^{-1/2}$, as expected for a Lipschitz loss and a well-specified model. However, the dependency on the dimension $d$ of the original data is somewhat concealed in $\sqrt{1 + \sum_{\alpha \in (\mathbb{N}^d)^*} \frac{c_{|\alpha|}}{|\alpha|} \mathbb{E}_X\big( H_\alpha(X)^2 \big)}$. We provide a detailed analysis of this dependency for specific choices of the data distribution $X$ and the sequence $(c_k)_{k>0}$ in Section~\ref{sec:dependence}.

\begin{proof}[\note{Proof of Theorem~\ref{theo:stat_general}}]
The proof is adapted from \cite{francis_book}. Define~$f^{\mu *}$ as the minimiser of~$\mathcal{R}_\mu:= \mathcal{R} + \mu\Omega$ over~$\mathcal{F}$. 
 Now, for~$D >0, \note{\tau} >0$ define the following convex set
\begin{equation*}
    \mathcal{C}_{D, \note{\tau}} = \{f \in \mathcal{F}, \Omega(f) \leq D, \mathcal{R}_\mu(f) - \mathcal{R}_\mu(f^{\mu *}) \leq \note{\tau} \}.
\end{equation*}

It has boundary
\begin{equation*}
    \partial \mathcal{C}_{D, \note{\tau}} = \{ f \in \mathcal{F}, \Omega(f) \leq D, \mathcal{R}_\mu(f) - \mathcal{R}_\mu(f^{\mu *}) = \note{\tau} \},
\end{equation*}
i.e., the second constraint is the saturated one, for well chosen~$D$ and~$\note{\tau}$. This is because, if we consider a~$f$ such that~$\Omega(f) = D$, since  the optimality conditions for~$f^{\mu *}$ give that~$\Omega^*(\mathcal{R}^\prime(f^{\mu *})) \leq \mu$, (with~$\mathcal{R}^\prime$ any subgradient of~$\mathcal{R}$ which necessarily exists because~$\mathcal{R}$ is convex since~$\ell$ is convex) we have
\begin{align*}
 \mathcal{R}_\mu(f) - \mathcal{R}_\mu(f^{\mu *}) &=  \mathcal{R}(f) + \mu\Omega(f) - \mathcal{R}(f^{\mu *}) - \mu\Omega(f^{\mu *})  \\ 
 &\geq \langle \mathcal{R}^\prime(f^{\mu *}),(f - f^{\mu *}) \rangle + \mu \Omega(f) - \mu \Omega(f^{\mu *})  \\
 & \text{ by convexity with } \langle \cdot, \cdot \rangle \text{ associated to } \Omega \\ 
 & \geq - \Omega^*\big(\mathcal{R}^\prime(f^{\mu *})\big)\Omega(f - f^{\mu *})  + \mu \Omega(f) - \mu \Omega(f^{\mu *}) \\ 
 & \text{ by Holder's inequality} \\
& \geq   -\mu\Omega(f - f^{\mu *}) + \mu\Omega(f) - \mu\Omega(f^{\mu *}) \text{ by optimality of } f^{\mu *}\\
& \geq  2\mu\Omega(f) - 2\mu\Omega(f^{\mu *}) \text{ by the triangular inequality } \\
 & \geq 2\mu D - 2 \mu\Omega(f^{\mu *}) \text{ since } \Omega(f)=D,   \\
& \geq  2 \mu\Omega(f^{*})  \text{ by  choosing }  D = 2\Omega(f^{*}) \text{, since } \Omega(f^*) \geq \Omega(f^{\mu *}) \\
& \geq  \note{\tau},   \text{ by choosing } \note{\tau} = \mu\Omega(f^{*}),
\end{align*}
hence the desired result on the active constraint of the boundary. We now fix~$\note{\tau} = \mu\Omega(f^{*})$ and~$D = 2\Omega(f^{*})$.

Now if~$f^\mu$ does not belong to~$\mathcal{C}_{D,\note{\tau}}$, since~$f^{\mu *}$ does,  there is an element~$f$ in the segment~$[f^\mu, f^{\mu *}]$ that belongs to~$\partial \mathcal{C}_{D, \note{\tau}}$, i.e,~$\Omega(f) \leq D$ and~$\mathcal{R}_\mu(f) - \mathcal{R}_\mu(f^{\mu *}) = \note{\tau}$. Because the loss is convex, we have that~$\widehat{\mathcal{R}}_\mu(f) \leq \max\{ \widehat{\mathcal{R}}_\mu(f^\mu), \widehat{\mathcal{R}}_\mu(f^{\mu *}) \} = \widehat{\mathcal{R}}_\mu(f^{\mu *})$. Therefore
\begin{align}
\nonumber
    \note{\tau} = \mathcal{R}_\mu(f) - \mathcal{R}_\mu(f^{\mu *}) &\leq \mathcal{R}_\mu(f) - \mathcal{R}_\mu  (f^{\mu *}) +  \widehat{\mathcal{R}}_\mu(f^{\mu *}) - \widehat{\mathcal{R}}_\mu(f) \\ 
    &\leq \mathcal{R}(f) - \widehat{\mathcal{R}}(f)+ \widehat{\mathcal{R}}(f^{\mu *}) -  \mathcal{R}(f^{\mu *}). \label{eq:epsilon} 
\end{align}

From the proof of Lemma~\ref{lemma:constrained}, for all~$\delta \in (0,1)$
\begin{align*}
 \sup_{f \in \mathcal{F}, \ \Omega(f) \leq D}& \mathcal{R}(f) - \widehat{\mathcal{R}}(f) + \sup_{f \in \mathcal{F}, \ \Omega(f) \leq D} \widehat{\mathcal{R}}(f) - \mathcal{R}(f) \\
    &\leq \frac{4GD}{\sqrt{n}}\sqrt{\frac{\pi}{2}}\sqrt{1+ \sum_{\alpha \in (\mathbb{N}^d)*} \frac{c_{|\alpha|}}{|\alpha|} \mathbb{E}_{X}\big(H_\alpha(X)^2\big)} +\frac{\ell_\infty2\sqrt{2}}{\sqrt{n}}\sqrt{\log \frac{1}{\delta}}   
\end{align*}
with probability larger than~$1-\delta$.

We apply this to the RHS of Equation~\eqref{eq:epsilon} (as~$\Omega(f) \leq D$ and~$\Omega(f^{\mu *}) \leq D$), which is smaller than~$ \frac{4GD}{\sqrt{n}}\sqrt{\frac{\pi}{2}}\sqrt{1 + \sum_{\alpha \in (\mathbb{N}^d)^*} \frac{c_{|\alpha|}}{|\alpha|} \mathbb{E}_X\big( H_\alpha(X)^2 \big)}  + \frac{\ell_\infty2\sqrt{2}}{\sqrt{n}}\sqrt{\log \frac{1}{\delta}}$ with probability larger than~$1 - \delta$.

Now if~$\note{\tau}$ is such that
\begin{align*}
    \frac{4GD}{\sqrt{n}}\sqrt{\frac{\pi}{2}}\sqrt{1 + \sum_{\alpha \in (\mathbb{N}^d)^*} \frac{c_{|\alpha|}}{|\alpha|} \mathbb{E}_X\big(H_\alpha(X)^2 \big)} + \frac{\ell_\infty2\sqrt{2}}{\sqrt{n}}\sqrt{\log \frac{1}{\delta}} & \leq \note{\tau}, \text{ i.e., }\\
    \Omega(f^{*})\frac{8G}{\sqrt{n}}\sqrt{\frac{\pi}{2}}\sqrt{1 + \sum_{\alpha \in (\mathbb{N}^d)^*} \frac{c_{|\alpha|}}{|\alpha|} \mathbb{E}_X\big( H_\alpha(X)^2 \big)} + \frac{\ell_\infty2\sqrt{2}}{\sqrt{n}}\sqrt{\log \frac{1}{\delta}} & \leq\mu\Omega(f^{*}) \\
     \frac{8G}{\sqrt{n}}\sqrt{\frac{\pi}{2}}\sqrt{1 + \sum_{\alpha \in (\mathbb{N}^d)^*} \frac{c_{|\alpha|}}{|\alpha|} \mathbb{E}_X\big( H_\alpha(X)^2 \big)} + \frac{\ell_\infty2\sqrt{2}}{\sqrt{n}\Omega(f^{*})}\sqrt{\log \frac{1}{\delta}} & \leq \mu
\end{align*}
then~$f^\mu$ belongs to~$\mathcal{C}_{D, \note{\tau}}$ with probability larger than~$1 - \delta$. If we choose~$\mu =\frac{8GD}{\sqrt{n}}\sqrt{\frac{\pi}{2}}\sqrt{1 + \sum_{\alpha \in (\mathbb{N}^d)^*} \frac{c_{|\alpha|}}{|\alpha|} \mathbb{E}_X\big( H_\alpha(X)^2 \big)} + \frac{\ell_\infty2\sqrt{2}}{\Omega(f^*)\sqrt{n}}\sqrt{\log \frac{1}{\delta}}$, then
\begin{align*}
    \mathcal{R}_\mu & (f^\mu) \leq \mathcal{R}_\mu(f^{\mu *}) + \note{\tau} \\
    &  \leq \mathcal{R}_\mu(f^{\mu *}) + \note{\tau} \\
    &  \leq \mathcal{R}_\mu(f^*) + \note{\tau} \\
    &  \leq \mathcal{R}(f^*) + \mu\Omega(f^*) + \note{\tau} \\
    &   \leq \mathcal{R}(f^*) + 2\mu\Omega(f^*) \\
     &  \leq \mathcal{R}(f^*) \\
     & \ \ +   \Omega(f^*)\bigg(\frac{16G}{\sqrt{n}}\sqrt{\frac{\pi}{2}}\sqrt{1 + \sum_{\alpha \in (\mathbb{N}^d)^*} \frac{c_{|\alpha|}}{|\alpha|} \mathbb{E}_X\big(H_\alpha(X)^2 \big)}\bigg) + \frac{\ell_\infty4\sqrt{2}}{\sqrt{n}}\sqrt{\log \frac{1}{\delta}}
\end{align*}
and~$\Omega(f^\mu) \leq D = 2\Omega(f^*)$ with probability larger than~$1-\delta$.
\end{proof}

\subsection{Dependence on problem parameters}
\label{sec:dependence}

As we have seen, Theorem~\ref{theo:stat_general} depends on some quantities we detail now. \note{First, we provide a definition of subgaussian real variables, as given by \cite{Vershynin_2018}.
\begin{definition}[Subgaussian Variables]
\label{def:subgaussian}
Let $Z$ be a real-valued (not necessarily centred) random variable. $Z$ is subgaussian with variance proxy $\sigma^2$ if and only if 
\begin{equation*}
\forall t > 0, \max\left( \mathbb{P}(Z \geq t), \mathbb{P}(Z \leq -t) \right) \leq e^{-\frac{t^2}{2\sigma^2}}.
\end{equation*}
\end{definition}}

\paragraph{Data distribution.}
To begin, we aim to establish an upper bound for the expectation of the squared Hermite polynomials over the covariates.
\begin{lemma}[\note{Analysis of data-dependent terms in Theorem~\ref{theo:stat_general}}]
\label{lemma:expectation_data}
Let~$\alpha \in \mathbb{N}^d$. 
\begin{enumerate}
    \item If~$X \sim \mathcal{N}(0, I_d)$, then
\begin{equation*}
    \mathbb{E}_X(H_\alpha(X)^2) =1.
\end{equation*}
\item If~$X$ is such that~$\|X\|_2 \leq R$ a.s., then
\begin{equation*}
  \mathbb{E}_X(H_\alpha(X)^2) \leq e^{\frac{R^2}{2}}.
\end{equation*}
\item \note{ If~$X$ is such that $\|X\|_2$ is a subgaussian variable with  variance proxy bounded by~$\sigma^2 < 1/(36e)$, then
\begin{equation*}
    \mathbb{E}_X(H_\alpha(X)^2) \leq e^{36e\sigma^2} \leq e.
\end{equation*}}
\end{enumerate}
\end{lemma}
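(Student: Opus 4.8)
The plan is to dispatch the three cases by first isolating a single pointwise estimate and then specialising to each distributional hypothesis. The common ingredient is the growth bound on the univariate Hermite polynomials from Equation~\eqref{eq:hermite_bound}, namely $|h_n(x)| \leq \exp(x^2/4)$. Since $H_\alpha(x) = \prod_{a=1}^d h_{\alpha_a}(x_a)$, squaring and multiplying these bounds gives, for every $\alpha \in \mathbb{N}^d$ and every $x \in \mathbb{R}^d$,
\begin{equation*}
H_\alpha(x)^2 = \prod_{a=1}^d h_{\alpha_a}(x_a)^2 \leq \prod_{a=1}^d e^{x_a^2/2} = e^{\|x\|_2^2/2},
\end{equation*}
an estimate that is uniform in $\alpha$. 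Everything then reduces to controlling $\mathbb{E}_X\big(e^{\|X\|_2^2/2}\big)$, except for the first case, where an exact value is required.

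For the first assertion I would not use the pointwise bound at all, but instead invoke the orthonormality of the family $(H_\alpha)_{\alpha \in \mathbb{N}^d}$ in $L^2(q)$, where $q$ denotes the standard Gaussian density on $\mathbb{R}^d$. Indeed $\mathbb{E}_{X \sim \mathcal{N}(0,I_d)}(H_\alpha(X)^2) = \int_{\mathbb{R}^d} H_\alpha^2\, q = \|H_\alpha\|_{L^2(q)}^2 = 1$. The second assertion is then immediate from the pointwise bound: if $\|X\|_2 \leq R$ almost surely, then $H_\alpha(X)^2 \leq e^{\|X\|_2^2/2} \leq e^{R^2/2}$ almost surely, and taking expectations preserves the inequality.

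The third assertion is the substantive one and is where the real work lies. Using the pointwise bound we have $\mathbb{E}_X(H_\alpha(X)^2) \leq \mathbb{E}_X\big(e^{\|X\|_2^2/2}\big)$, so the task is to bound the moment generating function of $Z^2$ at $\lambda = 1/2$, where $Z = \|X\|_2$ is subgaussian with variance proxy $\sigma^2$. I would expand $e^{Z^2/2}$ as a power series and bound each even moment $\mathbb{E}(Z^{2k})$ via the tail bound of Definition~\ref{def:subgaussian}: integrating the tail as $\mathbb{E}(Z^{2k}) = \int_0^\infty 2k\, t^{2k-1}\,\mathbb{P}(|Z| > t)\,dt$ together with $\mathbb{P}(|Z|>t) \leq 2e^{-t^2/(2\sigma^2)}$ yields, after a gamma-function evaluation and Stirling's inequality $k! \geq (k/e)^k$, a bound of the shape $(c\,\sigma^2 k)^k$. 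Summing the resulting geometric-type series $\sum_k (c\,e\,\sigma^2)^k$ converges precisely under the hypothesis $\sigma^2 < 1/(36e)$, and applying $1/(1-x) \leq e^{2x}$ on the remaining tail converts the bound into the claimed exponential form $e^{36e\sigma^2}$, which is at most $e$ under the same hypothesis.

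The main obstacle is the constant bookkeeping in the third part: one must track the factor arising from the tail integration, the Stirling step, and the summation so that the radius of convergence lines up with the stated threshold $1/(36e)$ and the final estimate collapses cleanly to $e^{36e\sigma^2}$. The reduction steps for the first two parts are routine once the uniform pointwise bound $H_\alpha(x)^2 \leq e^{\|x\|_2^2/2}$ is established.
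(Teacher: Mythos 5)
Your proposal is correct. Parts 1 and 2 coincide exactly with the paper's proof: orthonormality of $(H_\alpha)$ in $L^2(q)$ for the Gaussian case, and the pointwise bound $H_\alpha(x)^2 \leq e^{\|x\|_2^2/2}$ (from Equation~\eqref{eq:hermite_bound}) combined with almost-sure boundedness for the second case.

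For part 3 you take a genuinely different route. The paper performs the same reduction to bounding $\mathbb{E}_X\big(e^{\|X\|_2^2/2}\big)$, but then simply cites \cite[Proposition 2.5.2]{Vershynin_2018}: for $\lambda \leq 1/(6\sqrt{2e}\sigma)$ one has $\mathbb{E}_X(e^{\lambda^2\|X\|^2}) \leq e^{72e\lambda^2\sigma^2}$, which at $\lambda^2 = 1/2$ gives $e^{36e\sigma^2}$ under the hypothesis $\sigma^2 < 1/(36e)$. You instead re-derive this moment-generating-function bound from the tail definition (Definition~\ref{def:subgaussian}) via the series expansion of $e^{Z^2/2}$, tail-integration of the even moments, and Stirling. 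This inlined argument is self-contained and does go through: the moments satisfy $\mathbb{E}(Z^{2k}) \leq C\,k!\,(2\sigma^2)^k \leq C\,(2\sigma^2 k)^k$, so the series is dominated by $\sum_k C (e\sigma^2)^k$, and since $e\sigma^2 < 1/36$ the conversion $1/(1-x) \leq e^{2x}$ yields a bound no worse than $e^{36e\sigma^2}$ — in fact your route produces a strictly sharper constant (of the order $e^{2e\sigma^2}$, or even $1/(1-\sigma^2)$ if the Stirling step is skipped), so the stated inequality follows a fortiori. One small imprecision: the series does not converge "precisely" under $\sigma^2 < 1/(36e)$; it converges under the weaker condition $\sigma^2 < 1/e$, and the stronger hypothesis is only needed to conclude $e^{36e\sigma^2} \leq e$. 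In short, what the paper buys by citation (brevity), you buy back as self-containedness and a better constant; both are valid proofs.
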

The proof of this lemma is provided in Appendix~\ref{proof:lemma_expectation_data}. \note{Note that independence between the coordinates is not required, except in the first case, which is an illustration of the definition of the Hermite polynomials.} It is worth noting that except in the Gaussian case, the bounds may not be ideal with respect to their dependency on $d$. However, these bounds rely heavily on the bound for Hermite polynomials in Equation~\eqref{eq:hermite_bound}, which is valid for all points on the real line and for all one-dimensional Hermite polynomials. Thus, it is expected that better bounds in expectation are possible.

\paragraph{Choice of~$(c_k)_{k>0}$.}
The quantities in Theorem~\ref{theo:stat_general} are influenced by the design of the penalty, which is determined by the choice of the sequence $(c_k)_{k>0}$. This dependency is observed in $\Omega(f^*)$, $\ell_\infty$, and $\sqrt{1 + \sum_{\alpha \in (\mathbb{N}^d)^*} \frac{c_{|\alpha|}}{|\alpha|} \mathbb{E}_{X}\big( H_\alpha(X)^2\big)}$. It is worth noting that the bounds provided in Lemma~\ref{lemma:expectation_data} do not rely on the specific value of $\alpha$. Therefore, our focus is now on bounding the summation term $\sum_{\alpha \in (\mathbb{N}^d)^*} \frac{c_{|\alpha|}}{|\alpha|}$.

\begin{lemma}[\note{Analysis of terms depending on $(c_k)_{k>0}$ in Theorem~\ref{theo:stat_general}}]
\label{lem:choice_c}
If~$c_{|\alpha|} = \rho^{|\alpha|}$, with~$\rho \in (0,1)$
\begin{equation*}
    \sum_{\alpha \in (\mathbb{N}^d)^*} \frac{c_{|\alpha|}}{|\alpha|} \leq \frac{1}{(1-\rho)^d}
\end{equation*}
and if~$c_{|\alpha|} = \mathds{1}_{|\alpha|\leq M}$,
\begin{equation*}
    \sum_{\alpha \in (\mathbb{N}^d)^*} \frac{c_{|\alpha|}}{|\alpha|} \leq \frac{M+1}{d} \binom{M+d}{M+1}.
\end{equation*}
\end{lemma}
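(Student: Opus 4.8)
The plan is to reduce both claims to a single sum over the total degree $k = |\alpha|$. Since $c_{|\alpha|}$ depends on $\alpha$ only through $|\alpha|$, I would group the tuples by their degree: the number of $\alpha \in \mathbb{N}^d$ with $|\alpha| = k$ is exactly $\binom{k+d-1}{d-1}$ by a stars-and-bars argument, and the constraint $\alpha \in (\mathbb{N}^d)^*$ simply removes the term $k=0$. Hence in both cases
\begin{equation*}
\sum_{\alpha \in (\mathbb{N}^d)^*} \frac{c_{|\alpha|}}{|\alpha|} = \sum_{k=1}^{+\infty} \frac{c_k}{k} \binom{k+d-1}{d-1}.
\end{equation*}
Because $k \geq 1$ in every surviving term, I would then discard the harmless factor $1/k \leq 1$, which turns each remaining estimate into a clean combinatorial sum.

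For the geometric choice $c_k = \rho^{|\alpha|}$ this gives
\begin{equation*}
\sum_{k=1}^{+\infty} \frac{\rho^{k}}{k} \binom{k+d-1}{d-1} \leq \sum_{k=0}^{+\infty} \rho^{k} \binom{k+d-1}{d-1} = \frac{1}{(1-\rho)^d},
\end{equation*}
where the last equality is the generalised binomial (negative-binomial) series, valid for $\rho \in (0,1)$. This settles the first bound.

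For the truncated choice $c_k = \mathds{1}_{k \leq M}$ the sum terminates at $k = M$, and dropping $1/k$ leaves $\sum_{k=1}^{M} \binom{k+d-1}{d-1}$. Here I would invoke the hockey-stick identity $\sum_{k=0}^{M} \binom{k+d-1}{d-1} = \binom{M+d}{d}$, so that $\sum_{k=1}^{M}\binom{k+d-1}{d-1} = \binom{M+d}{d} - 1 \leq \binom{M+d}{d}$. The only remaining point is to recognise that the stated right-hand side is precisely this quantity: a short factorial computation gives the identity $\frac{M+1}{d}\binom{M+d}{M+1} = \binom{M+d}{d}$ (using $(M+1)/(M+1)! = 1/M!$ and $d\,(d-1)! = d!$), which closes the second bound.

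There is no real obstacle here: both statements are elementary once the degree-grouping is performed, and the slackness introduced by $1/k \leq 1$ is exactly what lets the two combinatorial identities apply directly. The point requiring the most care is purely bookkeeping, namely verifying the algebraic identity $\frac{M+1}{d}\binom{M+d}{M+1} = \binom{M+d}{d}$ so that the hockey-stick sum matches the target form.
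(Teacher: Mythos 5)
Your proof is correct and takes essentially the same route as the paper: group the multi-indices by total degree $k=|\alpha|$, discard the factor $1/k \le 1$, and invoke the standard combinatorial identities (the paper phrases the geometric-case identity as the normalisation of a $d$-dimensional geometric distribution, which is exactly your negative-binomial series $\sum_{k\ge 0}\binom{k+d-1}{d-1}\rho^k=(1-\rho)^{-d}$). In fact your write-up supplies the details the paper leaves implicit in the truncated case, namely the hockey-stick summation and the verification that $\frac{M+1}{d}\binom{M+d}{M+1}=\binom{M+d}{d}$.
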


The proof of this result can be found in Appendix~\ref{proof:lemma:choice_c}. By combining the different results, in the case of bounded data, for example, we can derive a corollary of Theorem~\ref{theo:stat_general} as follows
\begin{corollary}[\note{High-probability bound on expected risk of regularised estimator for bounded data}]
\label{cor:bounded}
    Under Assumption~\ref{ass:sampling} and Assumptions~\ref{ass:problem}.1, \ref{ass:problem}.2, \ref{ass:problem}.3 with~$D = 2\Omega(f^*)$, \ref{ass:problem}.4, if $\|X\|_2 \leq R$ a.s., $(c_k)_{k>0} = (\rho^k)_{k>0}$, then for any~$\delta \in (0,1)$, with the choice of regularising parameter
\begin{equation*}
    \mu =  \frac{G}{\sqrt{n}}\sqrt{1 + \frac{e^{R^2/2}}{(1-\rho)^d}}\bigg( 8\sqrt{\frac{\pi}{2}}  + 2\sqrt{2}\sqrt{\log \frac{2}{\delta}}\bigg),
\end{equation*}
with probability larger than~$1-\delta$
\begin{align*}
    \mathcal{R}(f^\mu) & \leq \mathcal{R}(f^*)  +   \Omega(f^*)\frac{G}{\sqrt{n}}\sqrt{1 + \frac{e^{R^2/2}}{(1-\rho)^d}}\bigg( 16\sqrt{\frac{\pi}{2}}  + 4\sqrt{2}\sqrt{\log \frac{2}{\delta}}\bigg) \\
    \text{ and } \ 
    \Omega(f^\mu) &\leq 2\Omega(f^*).
\end{align*} 
\end{corollary}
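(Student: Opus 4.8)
The plan is to obtain Corollary~\ref{cor:bounded} as a direct specialisation of Theorem~\ref{theo:stat_general}, keeping the theorem's proof skeleton fixed and merely replacing its two problem-dependent ingredients, the data-dependent quantity $\sqrt{1 + \sum_{\alpha \in (\mathbb{N}^d)^*} \frac{c_{|\alpha|}}{|\alpha|}\mathbb{E}_X(H_\alpha(X)^2)}$ and the loss bound $\ell_\infty$, by explicit expressions valid when $\|X\|_2 \leq R$ a.s.\ and $c_k = \rho^k$. The crucial structural observation I would exploit is that, inspecting the proof of Theorem~\ref{theo:stat_general}, the argument only uses that $\mu$ is at least the threshold $\frac{8G}{\sqrt n}\sqrt{\frac{\pi}{2}}\sqrt{1 + \sum_\alpha \frac{c_{|\alpha|}}{|\alpha|}\mathbb{E}_X(H_\alpha(X)^2)} + \frac{2\sqrt 2\,\ell_\infty}{\Omega(f^*)\sqrt n}\sqrt{\log\frac{1}{\delta}}$, after which it yields $\mathcal{R}(f^\mu) \leq \mathcal{R}(f^*) + 2\mu\Omega(f^*)$ together with $\Omega(f^\mu) \leq 2\Omega(f^*)$. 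Hence it suffices to show that the $\mu$ announced in the corollary is admissible, i.e.\ dominates this threshold, and then read off the risk bound from $\mathcal{R}(f^*) + 2\mu\Omega(f^*)$.

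First I would control the data term. Since the second item of Lemma~\ref{lemma:expectation_data} gives $\mathbb{E}_X(H_\alpha(X)^2) \leq e^{R^2/2}$ uniformly in $\alpha$, and the geometric case of Lemma~\ref{lem:choice_c} bounds the remaining sum by $(1-\rho)^{-d}$, I obtain
\begin{equation*}
\sqrt{1 + \sum_{\alpha \in (\mathbb{N}^d)^*} \frac{c_{|\alpha|}}{|\alpha|}\mathbb{E}_X(H_\alpha(X)^2)} \leq \sqrt{1 + \frac{e^{R^2/2}}{(1-\rho)^d}}.
\end{equation*}
Substituting this bound into the first (Gaussian-complexity) term of the theorem's $\mu$ and into the first bracket of the risk bound reproduces verbatim the corresponding $8\sqrt{\pi/2}$ and $16\sqrt{\pi/2}$ terms of the corollary.

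The delicate point is the concentration term, which the corollary writes with the same factor $\sqrt{1 + e^{R^2/2}/(1-\rho)^d}$ and a coefficient $G$ in place of $\ell_\infty/\Omega(f^*)$. Here I would make quantitative the reproducing-kernel description given in the discussion following Assumption~\ref{ass:problem}: for $\Omega(f) \leq D = 2\Omega(f^*)$ one has $|f(x)| \lesssim \Omega(f^*)\sqrt{k(x,x)}$ with $k(x,x') = 1 + \sum_\alpha \frac{c_{|\alpha|}}{|\alpha|}H_\alpha(x)H_\alpha(x')$, and the pointwise Hermite bound $|H_\alpha(x)| \leq e^{\|x\|^2/4}$ from Equation~\eqref{eq:hermite_bound} together with Lemma~\ref{lem:choice_c} gives $k(x,x) \leq 1 + e^{R^2/2}/(1-\rho)^d$ under $\|x\|_2 \leq R$. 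Combining this uniform sup-norm control with the $G$-Lipschitz property of $\ell$ (Assumption~\ref{ass:problem}.2) lets me bound $\ell_\infty$, and hence $\ell_\infty/\Omega(f^*)$, by a multiple of $G\sqrt{1 + e^{R^2/2}/(1-\rho)^d}$; since $\log\frac{2}{\delta} \geq \log\frac{1}{\delta}$, the corollary's $\mu$ then exceeds the threshold and is admissible. Expanding $\mathcal{R}(f^*) + 2\mu\Omega(f^*)$ with this $\mu$ yields exactly the stated $16\sqrt{\pi/2} + 4\sqrt 2\sqrt{\log(2/\delta)}$ bracket, and $\Omega(f^\mu) \leq 2\Omega(f^*)$ is inherited directly.

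I expect the main obstacle to lie entirely in this last step: the conceptual content is supplied by Theorem~\ref{theo:stat_general} and Lemmas~\ref{lemma:expectation_data} and \ref{lem:choice_c}, so what remains is careful bookkeeping of constants to fold $\ell_\infty/\Omega(f^*)$ into the single factor $G\sqrt{1 + e^{R^2/2}/(1-\rho)^d}$. In particular one must handle the loss offset (the value of $\ell$ at a reference point) in addition to the sup-norm of admissible $f$, so that the crude sup-norm/Lipschitz estimate is tight enough to guarantee admissibility of the announced $\mu$; this is the only place where genuine care, rather than routine substitution, is required.
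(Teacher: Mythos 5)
Your treatment of the data/penalty term is exactly the paper's: Lemma~\ref{lemma:expectation_data} gives $\mathbb{E}_X(H_\alpha(X)^2)\leq e^{R^2/2}$ uniformly in $\alpha$, Lemma~\ref{lem:choice_c} gives $\sum_{\alpha\in(\mathbb{N}^d)^*}c_{|\alpha|}/|\alpha|\leq (1-\rho)^{-d}$, and substituting into Theorem~\ref{theo:stat_general} handles the $8\sqrt{\pi/2}$ and $16\sqrt{\pi/2}$ terms. The gap is in the step you yourself flag as delicate, and it is not a bookkeeping issue: you cannot bound $\ell_\infty$ (and hence $\ell_\infty/\Omega(f^*)$) by a multiple of $G\sqrt{1+e^{R^2/2}/(1-\rho)^d}$. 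Assumption~\ref{ass:problem}.2 is a Lipschitz condition in the \emph{second} argument only; it controls differences $|\ell(y,f(x))-\ell(y,f(x'))|$ but says nothing about the absolute level of the loss. For any reference function, say $f=0$, one has $\ell(y,f(x))\leq \ell(y,0)+G|f(x)|$, and the offset $\sup_{y\in\mathcal{Y}}\ell(y,0)$ is a quantity entirely unrelated to $G$, $\Omega(f^*)$, $R$ or $\rho$ (take the absolute loss with a response set $\mathcal{Y}$ of large diameter). So the announced $\mu$ cannot be shown admissible by bounding $\ell_\infty$, and the route through "sup-norm control plus Lipschitz implies a bound on $\ell_\infty$" fails at exactly the point you deferred.

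The paper's proof avoids this by never bounding $\ell_\infty$ at all: it re-opens the proof of Theorem~\ref{theo:stat_general} at the application of McDiarmid's inequality and replaces the bounded-difference constant $4\ell_\infty/n$ by a tighter one valid in the bounded-data case. When a single data point $(x^{(i)},y^{(i)})$ is changed to $(\tilde{x}^{(i)},\tilde{y}^{(i)})$, the variation of the \emph{sum} of the two suprema is bounded, after introducing the two argmax functions $f_1,f_2$, by $\frac{1}{n}\big[\ell(\tilde{y}^{(i)},f_1(\tilde{x}^{(i)}))-\ell(\tilde{y}^{(i)},f_2(\tilde{x}^{(i)}))\big]+\frac{1}{n}\big[\ell(y^{(i)},f_2(x^{(i)}))-\ell(y^{(i)},f_1(x^{(i)}))\big]$; the terms pair up so that each bracket compares two functions at the \emph{same} $(x,y)$, the offsets cancel identically, and the Lipschitz assumption applies to give $\frac{G}{n}\big(|(f_1-f_2)(x^{(i)})|+|(f_1-f_2)(\tilde{x}^{(i)})|\big)\leq \frac{4}{n}GD\sqrt{1+e^{R^2/2}/(1-\rho)^d}$, using the dual-norm bound $|f(x)|\leq \Omega(f)\,\Omega^*((H_\alpha(x))_\alpha)$ together with Equation~\eqref{eq:hermite_bound} and Lemma~\ref{lem:choice_c}. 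Note that this sup-norm ingredient is precisely the one you invoke, but the paper applies it to $f_1-f_2$ \emph{inside} the bounded-difference computation rather than to $\ell_\infty$. Feeding this constant into McDiarmid and rerunning the theorem's argument with $D=2\Omega(f^*)$ produces the corollary's concentration term with $G$ in place of $\ell_\infty/\Omega(f^*)$, which is the structural change your proposal is missing.
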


The proof is provided in Appendix~\ref{proof:cor_bounded}. We note that the choice of the regularisation parameter is independent of the unknown norm $\Omega(f^*)$ or the distribution of $X$, as long as $R$ is known. In the derived bound, the value of $G$ can be independent of $d$ for certain loss functions such as the logistic loss. We observe that $\Omega(f^*)$ does not depend on the dimension $d$, but solely on the number of variables or the dimension of the linear subspace $s$.  It is important to note that the method exhibits a strong dependence on the dimension, which does not overcome the curse of dimensionality. However, this is merely the first step towards solving the multi-index model through regularised empirical risk minimisation, leaving room for future work and improvements.

\section{Numerical study}
\label{sec:num}
In this section, we present the numerical results that demonstrate the behaviour and performance of \textbf{RegFeaL}. The implementation of the estimator, as well as the code to run the experiments, can be accessed online at {\small \url{https://github.com/BertilleFollain/RegFeaL}}. The \textbf{RegFeaL} estimator class is designed to be compatible with the Scikit-learn API \cite{scikit-learn}, ensuring seamless integration with existing machine learning workflows.

\subsection{Setup}
We describe the experiment setup, which includes data simulation, training procedure and metrics for evaluation.

\paragraph{Data.}
In each generated dataset, depending on whether we consider feature learning or variable selection, we construct the linear subspace $P$ differently. In the feature learning case, we sample a matrix from the set of $d \times d$ orthogonal matrices $O_d$ and select its first $s$ columns to form $P$. For variable selection, we simply consider the first $s$ variables to be the relevant ones. \note{Note that while our experiments were conducted with independently generated covariates, our method is invariant to rotations (in the feature learning case) and sign changes of the data (in both feature learning and variable selection). As such, it is robust to potential correlation between the covariates.} The i.i.d dataset $(x^{(i)}, y^{(i)})_{i \in [n]}$ is then generated as follows
\begin{align*}
& X \sim \mathcal{U}\{[-\sqrt{3}, \sqrt{3}]\}^d \\
& f^*(x) = \sin(2 (P^\top x)_1) + \sin(2 (P^\top x)_2),  \forall x \in \mathbb{R}^d \text{ (sinus dataset)} \\
& f^*(x) = (P^\top x)_1 + (P^\top x)_2 - (P^\top x)_1^2 - (P^\top x)_2^2 + 2(P^\top x)_1(P^\top x)_2^3 - 4, \\
& \forall x \in \mathbb{R}^d \text{ (polynomial dataset)} \\
& Y = f^*(X) + \sigma \varepsilon, \ \varepsilon \sim \mathcal{N}(0, 1).
\end{align*}
Each component of $X$ has mean $0$ and variance $1$. Notably, in both datasets, the true regression function $f^*$ depends on $s=2$ linear combinations of the original variables. The importance of the noise can be controlled through the parameter $\sigma$. The test set $(x^{(i)}_{\text{test}}, y^{(i)}_{\text{test}})_{i \in [n_{\text{test}}]}$ is generated in a similar manner as the training set.

\paragraph{Training.}
The loss that we consider is the quadratic loss. We train \textbf{RegFeaL} on the training set with fixed values of~$\lambda$ and~$r$, and we cross-validate on~$\mu$ and~$\rho$. The number of iterations~$n_{\rm iter}$ depends on the experiment. Some of the parameters are the same in all experiments, such as $n_{\rm test}=5000, s=2, \lambda = 10^{-8} / d^{(2-r)/r}, r=0.33$.

The values of the grid used for cross-validation can be found in Appendix~\ref{app:numerical}. The training pipeline differs between Experiment 1 and Experiments 2 and 3.

In Experiment 1, for each parameter tuple $(\rho, \mu)$, we estimate the number of relevant dimensions $\hat{s}$ using $\hat{s} := |\{ a \in [d], (\eta^{\lambda, \mu}_{\text{feat}})_a^{r/(2-r)} \geq 1/d \}|$. Recall that $\eta_a^{r/(2-r)}$, represents the importance of feature $a$, and at initialisation, it is set to $1/d$ for all $a \in [d]$. We then select $\hat{P}$ as the set of $\hat{s}$ eigenvectors of $\Lambda_{\text{feat}}^{\lambda, \mu}$ corresponding to the $\hat{s}$ largest eigenvalues. Finally, we train a final regressor using Multivariate Adaptive Regression Splines (MARS) \cite{mars} on the dataset $(\hat{P}^\top x^{(i)}, y^{(i)})_{i \in [n]}$.

In Experiments 2 and 3, we simply use the output $f^{\lambda, \mu}_{\text{feat}}$ of Algorithm~\ref{alg:reg_feal} as the prediction function. In both cases, the $R^2$ score is used as the evaluation metric, which is described in Equation~\eqref{eq:$R^2$}.

\paragraph{Metrics.}
We evaluate the performance of \textbf{RegFeaL} using two metrics: the $R^2$ score \cite{r2} for regression performance and an adapted Grassmannian distance for feature learning performance.

The $R^2$  score is computed as
\begin{equation}
\label{eq:$R^2$}
   1 - \frac{\sum_{i=1}^{n_{\rm test}} (y^{(i)}_{\rm test} - y^{(i)}_{\rm pred})^2}{\sum_{i=1}^{n_{\rm test}}(y^{(i)}_{\rm test} - \Bar{y}_{\rm test})^2},
\end{equation}
where $\Bar{y}_{\rm test}$ is the mean of the test response values. The $R^2$ score can be computed on both the training and test sets. A score of $1$ indicates the best possible performance, while a score of $-\infty$ indicates the worst performance. A constant estimator that predicts the average response value corresponds to a score of $0$.

For the feature learning score, we compute the Grassmannian distance between the true subspace $P$ and the estimated subspace $\hat{P}$, which corresponds to the $s$ largest eigenvalue for the score computation. Note that the knowledge of~$s$ is only necessary to compute this score and not necessary for training. Note also that this is not the same $\hat{P}$ that was used to retrain \textbf{MARS} in Experiment~1, as the dimension of that one is estimated. The score is defined as 
\begin{align*}
&\| P(P^\top P)^{-1}P^\top - \hat{P}(\hat{P}^\top \hat{P})^{-1}\hat{P}^\top \|^{2}/(2s) \text{ if } s \leq d/2 \\
&\| P(P^\top P)^{-1}P^\top - \hat{P}(\hat{P}^\top \hat{P})^{-1}\hat{P}^\top \|^{2}/(2(d-s)) \text{ if } s > d/2,
\end{align*}
where $s$ is the number of relevant dimensions. The best possible score is $1$, indicating a perfect match between the true and estimated subspaces, while a score of $0$ indicates no correspondence between the subspaces.

In the setting of variable selection, this discussion can be adapted as discussed in Section~\ref{sec:opti}. The omitted details of the experiments can be found in Appendix~\ref{app:numerical}.

\subsection{Results}
We now provide the results of the experiments.

\paragraph{Experiment 1.}
In this experiment, we investigate the dependence on the dimension of the variables $d$ and the number of samples $n$. We perform the training procedure described earlier, including the retraining step using \textbf{MARS}~\cite{mars} on the projected data. We evaluate the performance on both the sinus dataset and the polynomial dataset with noise levels $\sigma = 0.5$ and $\sigma = 2.5$ respectively. For the sinus dataset, we consider both the variable selection and feature learning settings. We conduct a total of $n_{\rm iter} = 5$ iterations, and the grid used for cross-validation can be found in Appendix~\ref{app:numerical}.

To provide a comparison, we also include the performance of the state-of-the-art method \textbf{MAVE} \cite{MAVE}, which is based on local averaging and does not use regularisation. In our implementation, we follow the recommended procedure for \textbf{MAVE}, which involves first training the Outer Product of Gradients (OPG) method to determine the effective dimensionality reduction (e.d.r) space. We use cross-validation to select the underlying dimension of the space and then retrain the model using \textbf{MARS} on the projected data. This allows us to compute the $R^2$ score. For the feature learning score, we compute it based on the learned effective dimensionality reduction (e.d.r) space. Specifically, we choose $s=2$ as the dimension of the subspace to compute the score, following the same approach as \textbf{RegFeaL}. 

Additionally, we include the $R^2$ score for \textbf{Kernel Ridge}, which uses kernel ridge regression with the kernel $k(x,x^\prime) = \sum_{\alpha \in (\mathbb{N}^d)^*} c_{|\alpha|} H_\alpha(x) H_\alpha(x^\prime)$ and the hyperparameter $\lambda$, which we cross-validate over. To provide a comprehensive analysis, we also display the noise level, which represents the best achievable score considering the noise level $\sigma$. We repeat the entire experiment five times, each time with different data, and present the average results with error bars of $\pm \sigma_{\rm exp}/\sqrt{5}$, where $\sigma_{\rm exp}$ is the standard deviation of the scores across the repetitions. The results of the experiment can be found in Figures~\ref{fig:Experiment1_featureFalse}, \ref{fig:Experiment1_featureTrue}, and \ref{fig:Experiment1_featureTrue_polynomial}.

\begin{figure}[htb]
     \centering
     \begin{subfigure}[b]{165.6pt}
         \centering
         \includegraphics[width=165.6pt]{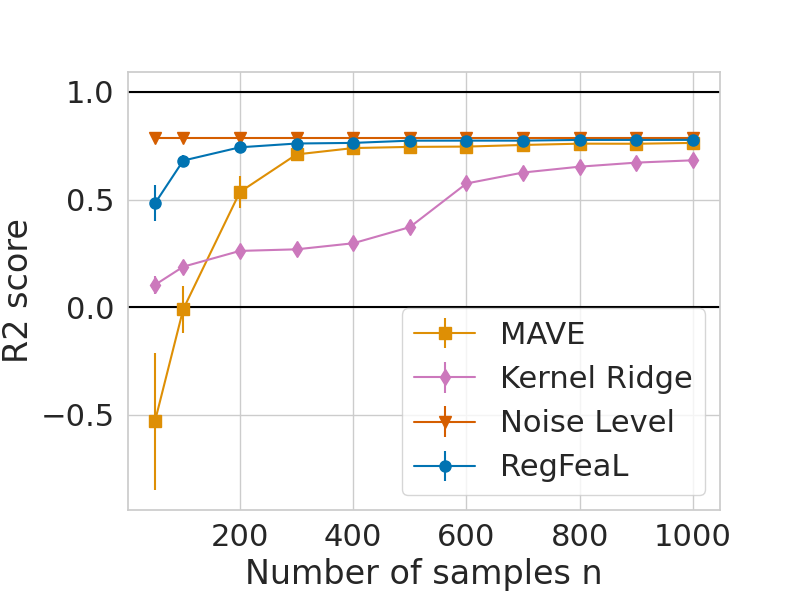}
         \caption{\centering $R^2$ score $d=10$.}
         \label{fig:Experiment1_featureFalse_d10_$R^2$}
     \end{subfigure}
     \hfill
     \begin{subfigure}[b]{165.6pt}
         \centering
         \includegraphics[width=165.6pt]{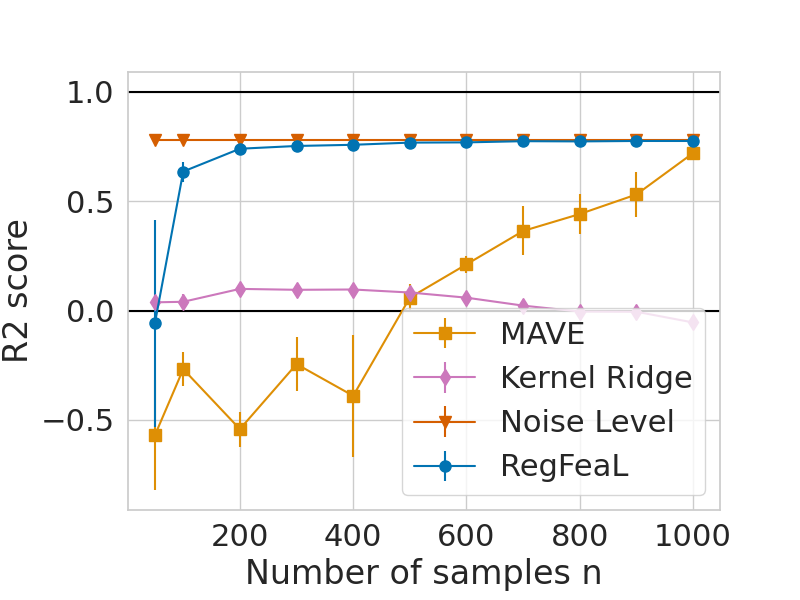}
         \caption{\centering $R^2$ score, $d=40$.}
         \label{fig:Experiment1_featureFalse_d40_$R^2$}
     \end{subfigure}
        \caption{Performance dependency on $d$ and $n$ for the sinus dataset in the variable selection setting.}
        \label{fig:Experiment1_featureFalse}
\end{figure}

\begin{figure}[htb]
     \centering
     \begin{subfigure}[b]{165.6pt}
         \centering
         \includegraphics[width=165.6pt]{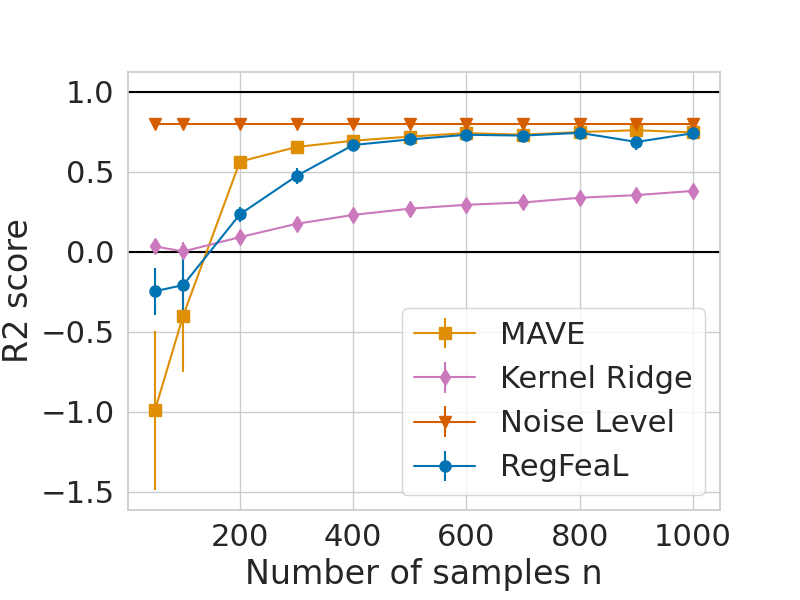}
         \caption{\centering $R^2$ score, $d=10$.}
         \label{fig:Experiment1_featureTrue_d10_$R^2$}
     \end{subfigure}
     \hfill
     \begin{subfigure}[b]{165.6pt}
         \centering
         \includegraphics[width=165.6pt]{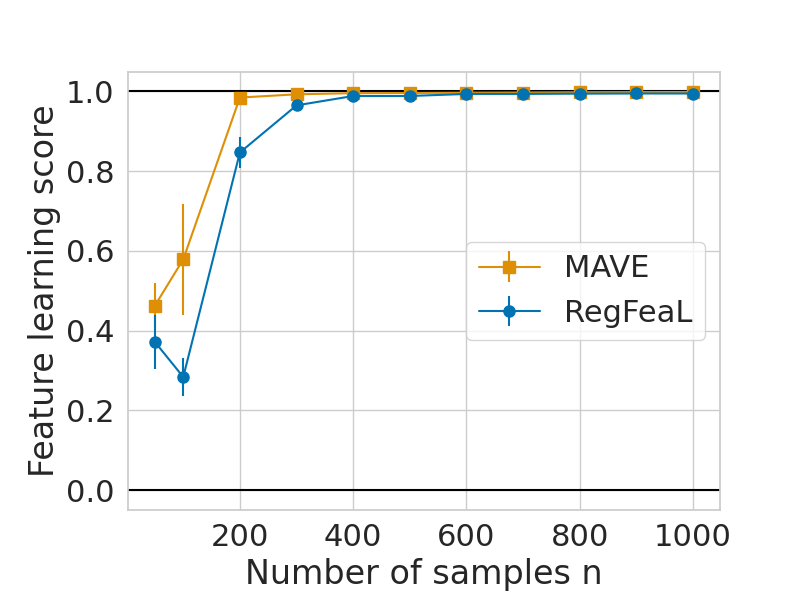}
         \caption{\centering Feature learning score, $d=10$.}
         \label{fig:Experiment1_featureTrue_d10_feature}
     \end{subfigure}
     \begin{subfigure}[b]{165.6pt}
         \centering
         \includegraphics[width=165.6pt]{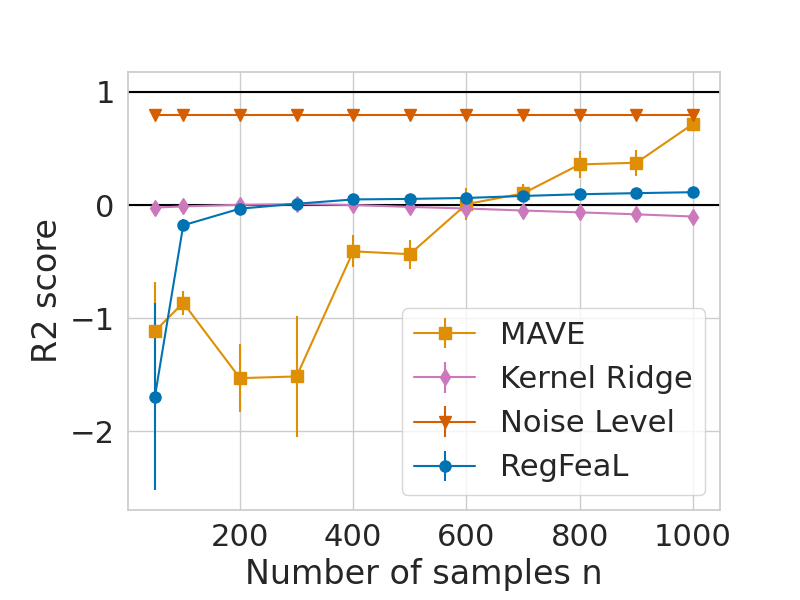}
         \caption{\centering $R^2$ score, $d=40$.}
         \label{fig:Experiment1_featureTrue_d4_$R^2$}
     \end{subfigure}
          \hfill
     \begin{subfigure}[b]{165.6pt}
         \centering
         \includegraphics[width=165.6pt]{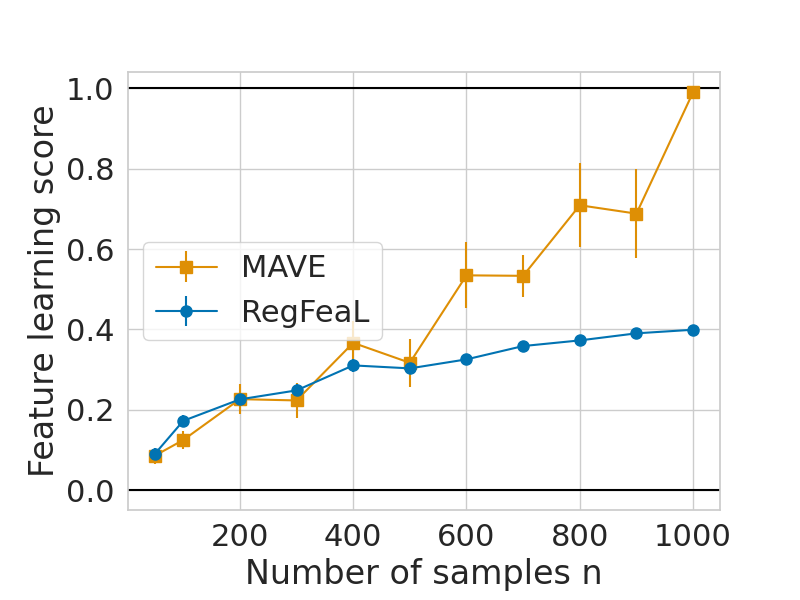}
         \caption{Feature learning score, $d=40$.}
         \label{fig:Experiment1_featureTrue_d40_feature}
     \end{subfigure}
        \caption{Performance dependency on $d$ and $n$ for the sinus dataset in the feature learning setting.}
        \label{fig:Experiment1_featureTrue}
\end{figure}

\begin{figure}[htb]
     \centering
     \begin{subfigure}[b]{165.6pt}
         \centering
         \includegraphics[width=165.6pt]{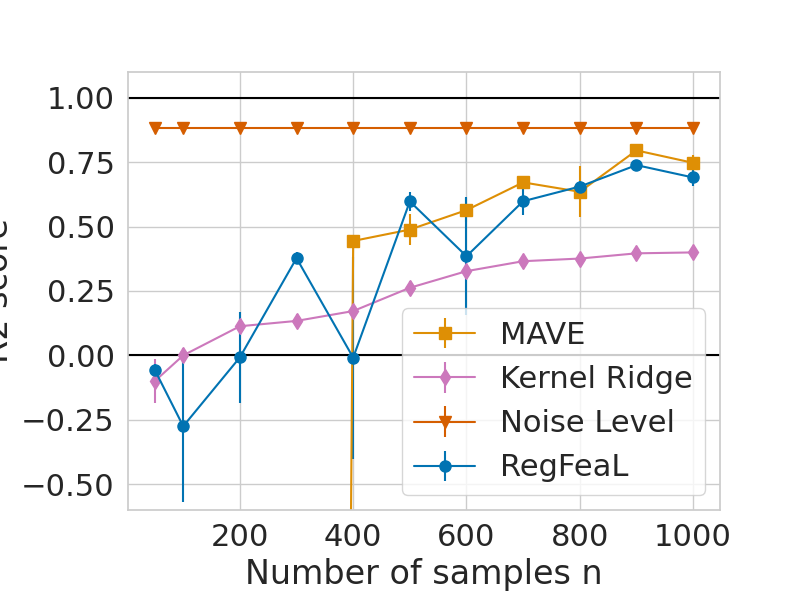}
         \caption{\centering $R^2$ score, $d=20$.}
         \label{fig:Experiment1_featureTrue_d20polynomial_$R^2$}
     \end{subfigure}
     \hfill
     \begin{subfigure}[b]{165.6pt}
         \centering
         \includegraphics[width=165.6pt]{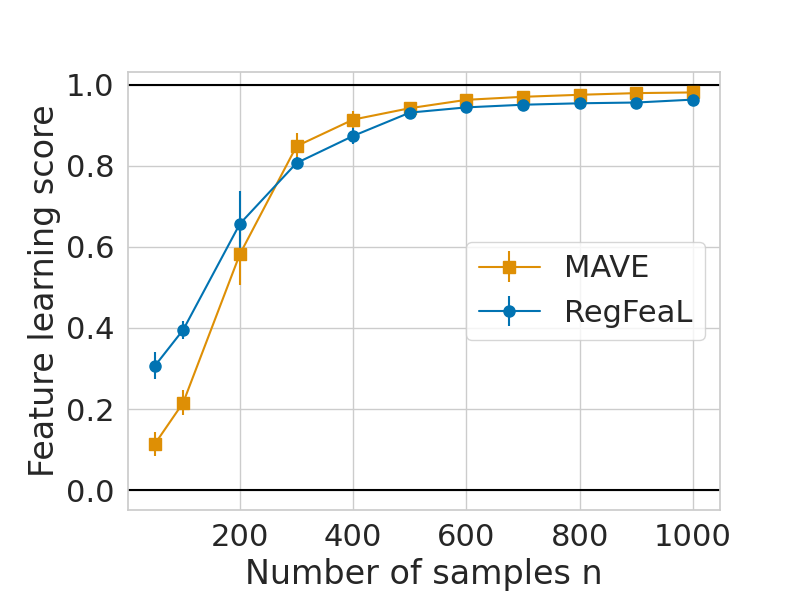}
         \caption{\centering Feature learning score, $d=40$.}
         \label{fig:Experiment1_featureTrue_d20polynomial_feature}
     \end{subfigure}
     \begin{subfigure}[b]{165.6pt}
         \centering
         \includegraphics[width=165.6pt]{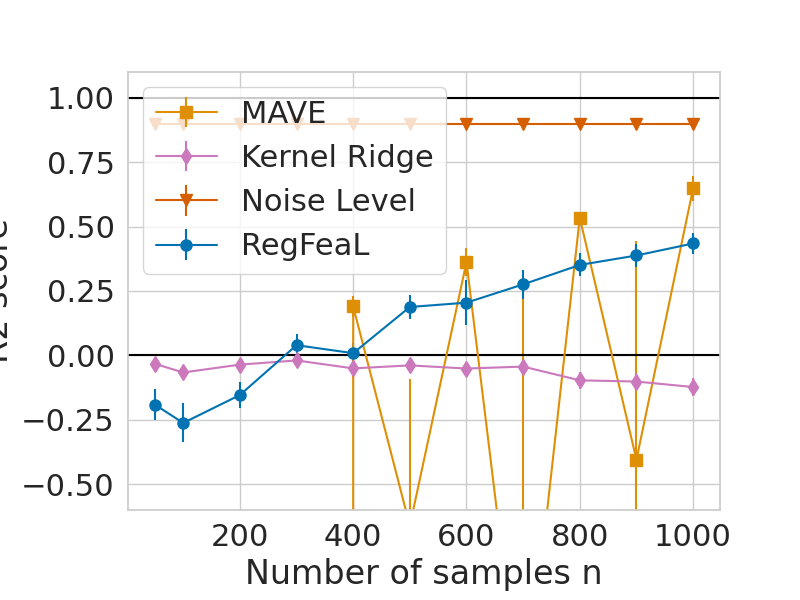}
         \caption{\centering $R^2$ score, $d=10$.}
         \label{fig:Experiment1_featureTrue_d4polynomial_$R^2$}
     \end{subfigure}
          \hfill
     \begin{subfigure}[b]{165.6pt}
         \centering
         \includegraphics[width=165.6pt]{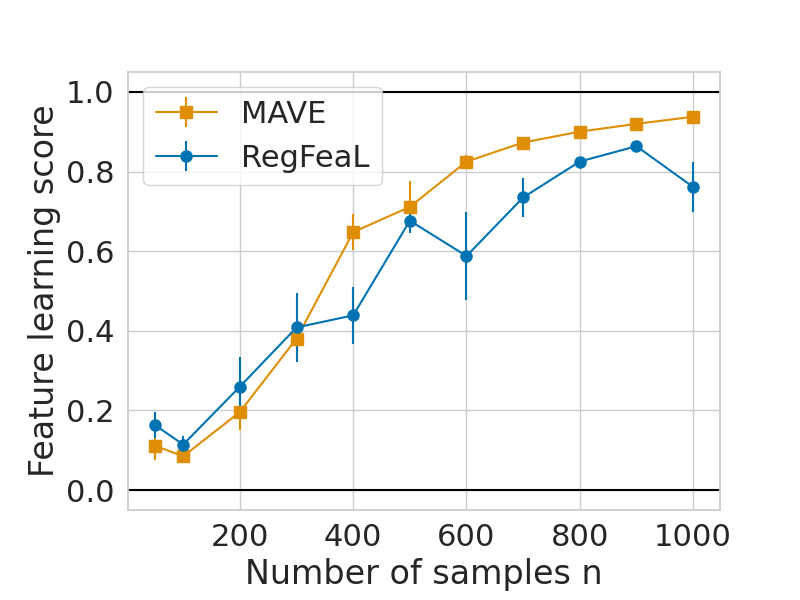}
         \caption{Feature learning score, $d=40$.}
         \label{fig:Experiment1_featureTrue_d40polynomial_feature}
     \end{subfigure}
        \caption{Performance dependency on $d$ and $n$ for the polynomial dataset in the feature learning setting.}
        \label{fig:Experiment1_featureTrue_polynomial}
\end{figure}
In all figures, we observe that the performance improves with a higher number of samples ($n$), which is expected, while it deteriorates with a larger dimension ($d$), which is typical behaviour.

In Figure~\ref{fig:Experiment1_featureFalse}, we focus on the $R^2$ score for the sinus dataset in the variable regression setting. We observe that \textbf{RegFeaL} performs well in both dimensions ($10$ and $40$) without requiring a large number of samples. However, \textbf{Kernel Ridge} fails in dimension $40$ as the kernel cannot effectively capture the dependency on only $2$ variables. As for \textbf{MAVE}, it does not benefit from the knowledge that this is a variable selection problem, unlike \textbf{RegFeaL}, resulting in a higher sample requirement, particularly in dimension $40$.

In Figure~\ref{fig:Experiment1_featureTrue}, we examine the $R^2$ score and the feature learning score for the sinus dataset in the feature learning setting. We observe that \textbf{MAVE} and \textbf{RegFeaL} exhibit similar behaviour in dimension $10$, reaching the noise level for the $R^2$ score and achieving a perfect feature learning score with enough samples. However, in dimension $40$, \textbf{MAVE} struggles significantly when the number of samples is low, while \textbf{RegFeaL} requires a substantially larger sample size to accurately learn the e.d.r. space. Our interpretation is that in this setting, where the true regression function uses a sinus, \textbf{RegFeaL} is hindered by its definition using a basis of polynomials.

In Figure~\ref{fig:Experiment1_featureTrue_polynomial}, we investigate the $R^2$ score and the feature learning score for the polynomial dataset in the feature learning setting. The feature learning performance of \textbf{MAVE} and \textbf{RegFeaL} is similar in this scenario. Regarding the $R^2$ score, \textbf{Kernel Ridge} encounters difficulties in dimension $40$ as it does not benefit from the underlying hidden structure. In dimension $10$, \textbf{RegFeaL} performs similarly to \textbf{MAVE}, but in dimension $40$, it outperforms \textbf{MAVE} as \textbf{MAVE} tends to be overly restrictive and consistently underestimates the number of linear features required to provide a good fit when the e.d.r. space is not perfectly learnt. In contrast, \textbf{RegFeaL} is less conservative, allowing us to leverage more features when the number of samples is too low to accurately estimate them.

\paragraph{Experiment 2.}
In this experiment, we investigate the impact of the number of random features~$m$ (as discussed in Section~\ref{sec:sampling}) on the $R^2$ score and feature learning score for different values of~$n$. The dimension~$d$ is fixed at $10$, while the true underlying dimension~$s$ is $2$. We consider the noiseless setting~$\sigma = 0$ and use the sinus dataset. The same methodology is applied for error bar computation as in Experiment 1. The results are presented in Figure~\ref{fig:experiment2}.

\begin{figure}[htb]
     \centering
     \begin{subfigure}[b]{165.6pt}
         \centering
         \includegraphics[width=165.6pt]{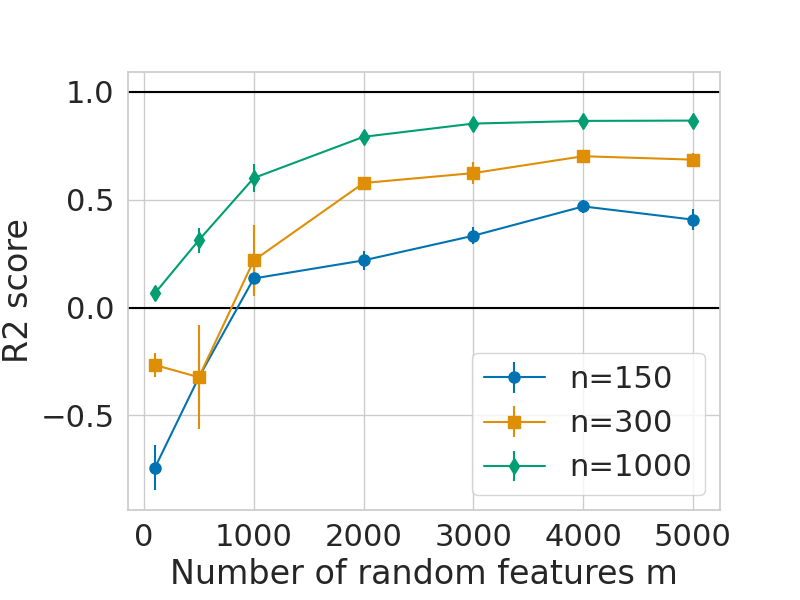}
         \caption{$R^2$ score}
         \label{fig:Experiment2_$R^2$}
     \end{subfigure}
     \hfill
     \begin{subfigure}[b]{165.6pt}
         \centering
         \includegraphics[width=165.6pt]{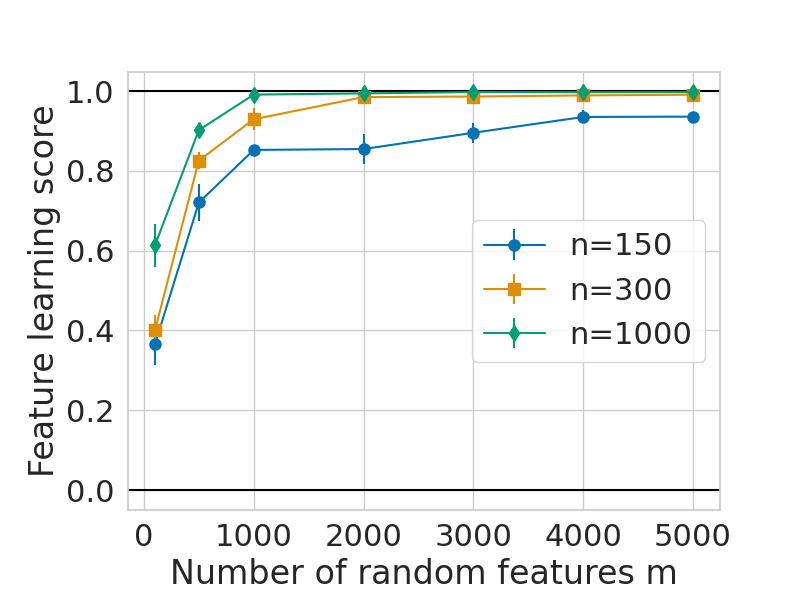}
         \caption{Feature learning score}
         \label{fig:Experiment2_feature}
     \end{subfigure}
        \caption{Influence of the number of random features}
        \label{fig:experiment2}
\end{figure}
We observe that both the $R^2$ score and feature learning score improve with an increase in the number of random features~$m$. This observation aligns with the discussion in Section~\ref{sec:sampling}, where a larger value of~$m$ leads to a better approximation of the kernel~$k_\Lambda$, and allows for a wider range of~$\alpha$ and~$H_\alpha$, resulting in enhanced descriptive power and improved fit and prediction of the subspace. However, we note that beyond a certain value of $m$, the performance improvement levels off while computational costs continue to rise. This suggests that choosing excessively large values of $m$ does not provide any significant benefit.

\paragraph{Experiment 3.}
In this experiment, we maintain the number of samples~$n=5000$, the number of random features~$m=2500$, the dimension~$d=10$, and the underlying dimension~$s=2$ fixed. We work with the noiseless sinus dataset, i.e., $\sigma=0$, and examine the training behaviour of \textbf{RegFeaL} over the iterations. We train the model using cross-validation based on the $R^2$ score and set~$n_{\rm iter} = 10$. The results of this experiment are depicted in Figure~\ref{fig:Experiment3}.

\begin{figure}[htb]
     \centering
     \begin{subfigure}[b]{165.6pt}
         \centering
         \includegraphics[width=165.6pt]{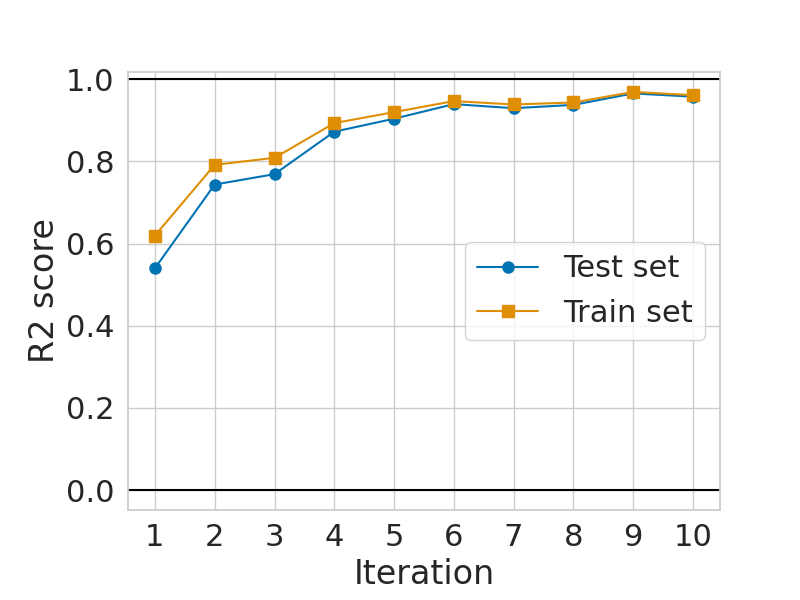}
         \caption{$R^2$ score.}
         \label{fig:Experiment3_$R^2$}
     \end{subfigure}
     \hfill
     \begin{subfigure}[b]{165.6pt}
         \centering
         \includegraphics[width=165.6pt]{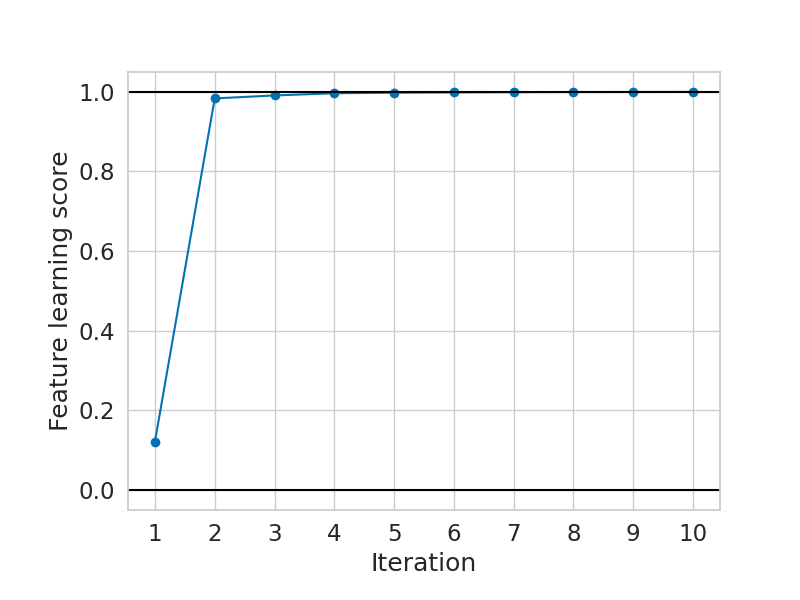}
         \caption{Feature learning score.}
         \label{fig:Experiment3_feature}
     \end{subfigure}
     \begin{subfigure}[b]{165.6pt}
         \centering
         \includegraphics[width=165.6pt]{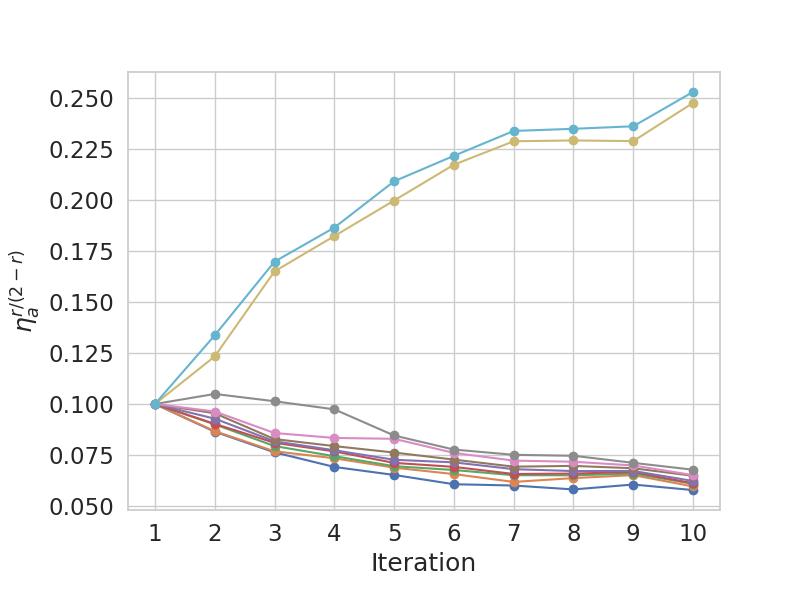}
         \caption{$\eta_a^{r/(2-r)}, \forall a \in [d]$.}
         \label{fig:Experiment3_eta}
     \end{subfigure}
          \hfill
     \begin{subfigure}[b]{165.6pt}
         \centering
         \includegraphics[width=165.6pt]{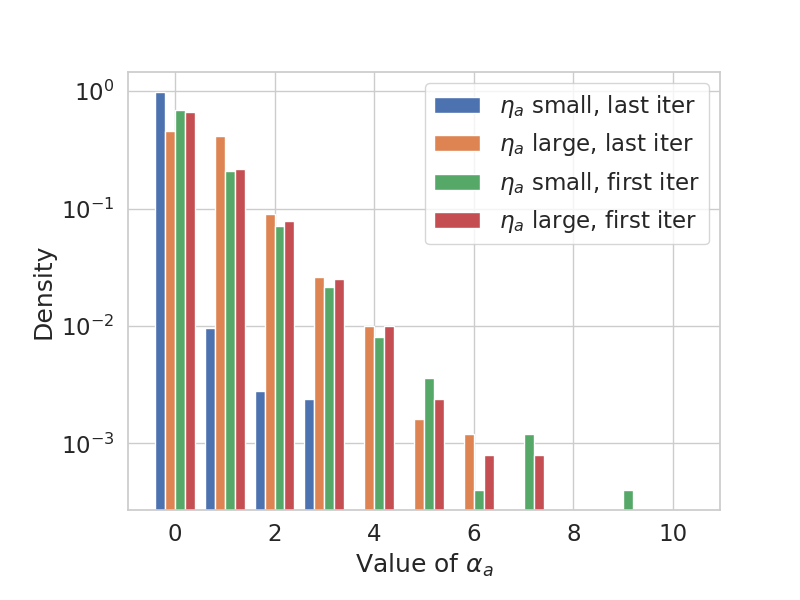}
         \caption{Empirical distribution of~$\alpha$.}
         \label{fig:Experiment3_alpha}
     \end{subfigure}
        \caption{Training behaviour.}
        \label{fig:Experiment3}
\end{figure}
In Figure~\ref{fig:Experiment3_$R^2$}, we observe that the $R^2$ score improves across the iterations on both the test set and the training set. However, the behaviour is not strictly increasing on the training set. This can be attributed to the fact that the kernel approximation differs at each iteration, leading to variations in the fit.

Figure~\ref{fig:Experiment3_feature} demonstrates that the features are learned more rapidly than the fit. It is important to note that the feature learning score assumes knowledge of the underlying dimension~$s=2$. Hence, an important question is whether the estimated value of $s$ is accurate.

In Figure~\ref{fig:Experiment3_eta}, we observe the values of~$\eta_a^{r/(2-r)}$ for all $a \in [d]$ across the iterations. Recall that~$\sum_{a=1}^d \eta_a^{r/(2-r)} = 1$ and that~$\eta_a^{r/(2-r)}$ represents the relative importance of feature~$(R^\top x)_a$. Initially, all~$\eta_a^{r/(2-r)}$ are equal to $1/d$. As the training progresses, most of the components of~$\eta^{r/(2-r)}$ decrease, while two components increase, surpassing the threshold of $1/d$. These two components correspond to the relevant dimensions, indicating that the correct number of dimensions would be easily predicted. Additionally, we observe that these two components of~$\eta$ have relatively similar values, which aligns with the symmetry of the regression function in this example.

Figure~\ref{fig:Experiment3_alpha} displays the empirical density (in log scale) of $\alpha_a$ for two different values of $a \in [d]$ (specifically, $a_{\rm small} := \argmin_{a \in [d]} \eta_a$ and $a_{\rm large} := \argmax_{a \in [d]} \eta_a$ for the final $\eta$) at two different iterations: the first and last iteration. During the first iteration, the distributions of $\alpha_a$ for $a_{\rm large}$ and $a_{\rm small}$ are equal, which aligns with the initialisation discussed in Section~\ref{sec:sampling} (all components of $\eta$ are equal). However, at the end of the optimisation, we observe that the distribution of $\alpha_{a_{\rm small}}$, corresponding to a non-important linear feature, remains almost constant at $0$. Conversely, the distribution of $\alpha_{a_{\rm large}}$, representing an important linear feature, is more widely spread, which is beneficial to the fit.

\section{Conclusion}
\label{sec:conclusion}
We addressed the challenge of \note{prediction function estimation} in multi-index models by proposing a novel approach \textbf{RegFeaL}. Our method combines empirical risk minimisation with derivative-based regularisation to simultaneously estimate the prediction function, the relevant linear transformation, and its dimension. By leveraging the orthogonality and rotation invariance properties of Hermite polynomials, \textbf{RegFeaL} captures the underlying structure of the data. Through alternative minimisation, we iteratively rotate the data to better align it with the leading dimensions.

Theoretical results support the statistical consistency of \note{the expected risk of} our estimator and provide explicit rates of convergence. We demonstrated the performance and effectiveness of our method through extensive empirical experiments on diverse datasets. One of the strengths of our approach is that it does not rely on strong assumptions about the distribution shape or prior knowledge of the subspace dimension.

However, we acknowledge that our method is still subject to the curse of dimensionality, as indicated by the theoretical results showing an exponential dependence on the dimension of the \note{covariates}. Nonetheless, we believe that our findings will contribute to further developments in representation learning and high-dimensional data analysis. Regularisation is a versatile approach that can be applied to a wide range of problems where \note{an} empirical risk can be formulated, \note{foregoing the limitations of some methods solely based on the square loss in supervised learning}.

There are several interesting directions for future research. One possibility is exploring alternative bases other than Hermite polynomials. Additionally, investigating more efficient algorithms and strategies for handling high-dimensional data could be valuable. \note{Furthermore, examining the applicability of our approach to various types of problems and datasets would also be worth pursuing.}

\FloatBarrier
\appendix

\section{Additional proofs and results}
\label{app:theory}

\subsection{Proof of Lemma~\ref{lem:invariance_property}}
\label{proof:invariance_property}
\begin{proof}[\note{Proof of Lemma~\ref{lem:invariance_property}}]
We denote by $\mathcal{N}(0, I_d)$ the $d$-dimensional normal distribution with mean $0 \in \mathbb{R}^d$ and covariance matrix $I_d$. For any~$k \in \mathbb{N}, \ x, x^\prime \in \mathbb{R}^d$, using ~$\forall z \in \mathbb{R}, \ h_k(z)  =  \frac{1}{\sqrt{ k!} } \mathbb{E}_{Y \sim \mathcal{N}(0,1)} (z + i Y)^k$ (which can be shown by recurrence), we have
\begin{align*}
\sum_{|\alpha| = k} & H_\alpha(x) H_\alpha(x^\prime)
 =  \sum_{|\alpha| = k}  \prod_{a=1}^d {h_{\alpha_a}(x_a) h_{\alpha_a}(x^\prime_a)}{  } \\
 = & \mathbb{E}_{Y, Y^\prime \sim \mathcal{N}(0, I_d)} \bigg( \sum_{|\alpha| = k}  \prod_{a=1}^d \frac{1}{\alpha_a !} (x_a + i Y_a)^{\alpha_a}  (x^\prime_a + i Y^\prime_a)^{\alpha_a} \bigg) \\
 = & \frac{1}{k!}  \mathbb{E}_{Y,Y^\prime \sim \mathcal{N}(0,I_d)} \big(  \big( x^\top x^\prime - Y^\top Y^\prime + i ( x^\top Y^\prime + Y^\top x^\prime)  \big)^k\big).
\end{align*}
This shows rotational invariance, that is, for any orthogonal matrix~$R \in O_d$, 
\begin{equation*}
\sum_{|\alpha| = k}  {H_\alpha(x) H_\alpha(x^\prime)} = \sum_{|\alpha| = k}  {H_\alpha(R x) H_\alpha(R x^\prime)}.
\end{equation*}
\end{proof}
\subsection{Proof of Lemma~\ref{lemma:expectation_sups}}
\label{proof:expectation_sups}
\begin{proof}[\note{Proof of Lemma~\ref{lemma:expectation_sups}}]
Define~$\mathcal{H} = \{h : (x,y) \in \mathcal{X}\times \mathcal{Y} \to \ell(y, f(x)), \text{ for }  f \in \mathcal{G} \}$. We have that
\begin{align*}
   &\sup_{f \in \mathcal{G}} \big(\mathcal{R}(f) - \widehat{\mathcal{R}}(f) \big) + \sup_{f \in \mathcal{G}} \big(\widehat{\mathcal{R}}(f) - \mathcal{R}(f) \big)  \\
   &= \sup_{h \in \mathcal{H}} \bigg(\mathbb{E}(h(z)) - \frac{1}{n}\sum_{i=1}^n h(z^{(i)}) \bigg) + \sup_{h \in \mathcal{H}} \bigg( \frac{1}{n}\sum_{i=1}^n h(z^{(i)}) - \mathbb{E}(h(z))\bigg).
\end{align*}

We define the Rademacher complexity of the set~$\mathcal{H}$ by
\begin{equation*}
    R_n(\mathcal{H}) = \mathbb{E}_{\mathcal{D}, \varepsilon \sim \big(\mathcal{U}\{-1,1\}\big)^n }\bigg( \sup_{h \in \mathcal{H}} \frac{1}{n}\sum_{i=1}^n \varepsilon_i h(z^{(i)}) \bigg),
\end{equation*}
where~$\varepsilon \sim \big(\mathcal{U}\{-1,1\}\big)^n$ means that each component of~$\varepsilon$ is independent and follows the uniform distribution over the set~$\{-1, 1\}$.

Using Proposition 4.2 from \cite{francis_book}, we obtain
\begin{align*}
     &\mathbb{E}_{\mathcal{D}}\bigg( \sup_{h \in \mathcal{H}} \mathbb{E}(h(z)) - \frac{1}{n}\sum_{i=1}^n h(z^{(i)}) \bigg) \leq 2 R_n(\mathcal{H})\\ 
     &\mathbb{E}_{\mathcal{D}}\bigg(\sup_{h \in \mathcal{H}}   \frac{1}{n}\sum_{i=1}^n h(z^{(i)}) - \mathbb{E}(h(z))\bigg)  \leq  2 R_n(\mathcal{H}).
\end{align*}

Now from Assumption~\ref{ass:problem}.2 and using Proposition 4.3 from \cite{francis_book}
\begin{equation*}
    R_n(\mathcal{H}) \leq G \cdot R_n(\mathcal{G}),
\end{equation*}
with
\begin{equation*}
    R_n(\mathcal{G}) = \mathbb{E}_{\mathcal{D}, \varepsilon \sim \big(\mathcal{U}\{-1,1\}\big)^n }\bigg( \sup_{f \in \mathcal{G}} \frac{1}{n}\sum_{i=1}^n \varepsilon_i f(x^{(i)}) \bigg).
\end{equation*}
 We have from Exercise 4.9 from \cite{francis_book} that~$R_n(\mathcal{G}) \leq \sqrt{\frac{\pi}{2}} G_n(\mathcal{G})$. Combining all inequalities yields the desired result.
\end{proof}

\subsection{Lemma~\ref{lemma:constrained} and its proof}
\label{sec:prop_constrained}
\begin{lemma}
\label{lemma:constrained}
Under Assumption~\ref{ass:sampling}, Assumptions~\ref{ass:problem}.1,~\ref{ass:problem}.2, \ref{ass:problem}.3,  with~$D \geq \Omega(f^*)$, and~$f^D := \argmin_{f \in \mathcal{F}, \ \Omega(f) \leq D} \widehat{\mathcal{R}}(f)$, for any $\delta \in (0,1)$, with probability larger than~$1-\delta$
\begin{equation*}
       \mathcal{R}(f^D) \leq  \mathcal{R}(f^*) + \frac{4GD}{\sqrt{n}}\sqrt{\frac{\pi}{2}}\sqrt{1 + \sum_{\alpha \in (\mathbb{N}^d)^*} \frac{c_{|\alpha|}}{|\alpha|}\mathbb{E}_{X}\big(H_\alpha(X)^2\big)} + \frac{\ell_\infty2\sqrt{2}}{\sqrt{n}}\sqrt{\log \frac{1}{\delta}}.
\end{equation*}
\end{lemma}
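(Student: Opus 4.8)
The plan is to mirror the expectation bound of Lemma~\ref{prop:constrained_expectation} and then upgrade it to a high-probability statement by a bounded-differences concentration argument. First I would set $\mathcal{G} = \{f \in \mathcal{F},\ \Omega(f) \leq D\}$ and observe that, since $D \geq \Omega(f^*)$, the target $f^*$ is feasible for the constrained problem, so the constrained empirical minimiser $f^D$ satisfies $\widehat{\mathcal{R}}(f^D) \leq \widehat{\mathcal{R}}(f^*)$. The same excess-risk decomposition used in the proof of Lemma~\ref{prop:constrained_expectation} then yields
\[
\mathcal{R}(f^D) - \mathcal{R}(f^*) \leq \sup_{f \in \mathcal{G}}\big(\mathcal{R}(f) - \widehat{\mathcal{R}}(f)\big) + \sup_{f \in \mathcal{G}}\big(\widehat{\mathcal{R}}(f) - \mathcal{R}(f)\big) =: \Phi(\mathcal{D}),
\]
so the entire task reduces to controlling the single random quantity $\Phi(\mathcal{D})$.

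For the expectation, Lemma~\ref{lemma:expectation_sups} gives $\mathbb{E}_{\mathcal{D}}(\Phi(\mathcal{D})) \leq 4\sqrt{\pi/2}\,G\,G_n(\mathcal{G})$, and Lemma~\ref{lemma:rademacher_Gaussian} bounds the Gaussian complexity $G_n(\mathcal{G})$, so that
\[
\mathbb{E}_{\mathcal{D}}\big(\Phi(\mathcal{D})\big) \leq \frac{4GD}{\sqrt{n}}\sqrt{\frac{\pi}{2}}\sqrt{1 + \sum_{\alpha \in (\mathbb{N}^d)^*}\frac{c_{|\alpha|}}{|\alpha|}\mathbb{E}_X\big(H_\alpha(X)^2\big)}.
\]
To move from expectation to high probability I would apply McDiarmid's inequality to $\Phi$ viewed as a function of the $n$ i.i.d.\ samples. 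The crucial point is that resampling a single pair $(x^{(i)},y^{(i)})$ leaves the population risk $\mathcal{R}$ untouched and perturbs each empirical average $\widehat{\mathcal{R}}(f)$ by at most $\ell_\infty/n$ uniformly over $f \in \mathcal{G}$, by the uniform boundedness of the loss on $\mathcal{G}$ granted by Assumption~\ref{ass:problem}.3; a supremum of functions with bounded increments inherits the same increment, so $\Phi$ enjoys bounded differences of order $\ell_\infty/n$ in each coordinate. McDiarmid then provides, with probability at least $1-\delta$,
\[
\Phi(\mathcal{D}) \leq \mathbb{E}_{\mathcal{D}}\big(\Phi(\mathcal{D})\big) + \frac{\ell_\infty 2\sqrt{2}}{\sqrt{n}}\sqrt{\log\frac{1}{\delta}},
\]
and combining the last two displays with the decomposition gives the claimed bound.

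The main obstacle is the careful verification of the bounded-differences condition, which is exactly why Assumption~\ref{ass:problem}.3 is invoked here while it was unnecessary for the purely expectation-based Lemma~\ref{prop:constrained_expectation}: one must argue that every $f \in \mathcal{G}$ is uniformly bounded so that each summand $\ell(y^{(i)}, f(x^{(i)}))$ lies in a bounded range, and that taking the supremum over $\mathcal{G}$ does not amplify the per-sample perturbation. One should also keep careful track of the numerical constant in the McDiarmid term, since the two one-sided suprema each contribute and the relevant per-coordinate increment is accordingly twice $\ell_\infty/n$, but this is a routine computation. Everything else is a direct assembly of the already-established Lemma~\ref{lemma:expectation_sups} and Lemma~\ref{lemma:rademacher_Gaussian}.
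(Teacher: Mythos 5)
Your proposal follows essentially the same route as the paper's proof: the same excess-risk decomposition over $\mathcal{G} = \{f \in \mathcal{F},\ \Omega(f) \leq D\}$ using feasibility of $f^*$, the same expectation bound assembled from Lemma~\ref{lemma:expectation_sups} and Lemma~\ref{lemma:rademacher_Gaussian} (as in Lemma~\ref{prop:constrained_expectation}), and McDiarmid's inequality applied to the sum of the two one-sided suprema. The only difference is your bounded-difference constant $2\ell_\infty/n$ (valid for a nonnegative loss) versus the paper's more conservative $4\ell_\infty/n$, which gives you a slightly sharper deviation term $\ell_\infty\sqrt{2}\sqrt{\log(1/\delta)}/\sqrt{n}$ that still implies the stated bound.
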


\begin{proof}[\note{Proof of Lemma~\ref{lemma:constrained}}]
Define~$\mathcal{G}:= \{ f \in \mathcal{F}, \ \Omega(f) \leq D \}$. We apply McDiarmid's inequality \cite{boucheron2013concentration} to~$ \sup_{f \in \mathcal{G}} \mathcal{R}(f) - \widehat{\mathcal{R}}(f) + \sup_{f \in \mathcal{G}} \widehat{\mathcal{R}}(f) - \mathcal{R}(f)$, which has bounded variation with constant~$4\ell_\infty/n$, yielding that for all~$\delta \in (0,1)$ 
\begin{align*}
\mathbb{P}_{\mathcal{D}} \bigg( &\sup_{f \in \mathcal{G}} \mathcal{R}(f) - \widehat{\mathcal{R}}(f) +  \sup_{f \in \mathcal{G}} \widehat{\mathcal{R}}(f) - \mathcal{R}(f) \leq \\ 
 & \mathbb{E}\big(\sup_{f \in \mathcal{G}}  \mathcal{R}(f) - \widehat{\mathcal{R}}(f) + \sup_{f \in \mathcal{F}} \widehat{\mathcal{R}}(f) - \mathcal{R}(f)\big) 
    +    \frac{\ell_\infty2\sqrt{2}}{\sqrt{n}}\sqrt{\log \frac{1}{\delta}} \bigg)  \geq 1 - \delta.
\end{align*}
We recall that 
\begin{equation*}
     \mathcal{R}(f^D) - \mathcal{R}(f^*) 
    \leq \sup_{f \in \mathcal{G}} \mathcal{R}(f) - \widehat{\mathcal{R}}(f) + \sup_{f \in \mathcal{G}} \widehat{\mathcal{R}}(f) - \mathcal{R}(f) 
\end{equation*}
and from the proof of Lemma~\ref{prop:constrained_expectation}
\begin{align*}
    \mathbb{E}& \bigg(\sup_{f \in \mathcal{G}}\big( \mathcal{R}(f) - \widehat{\mathcal{R}}(f)\big) + \sup_{f \in \mathcal{G}}\big(  \widehat{\mathcal{R}}(f) - \mathcal{R}(f)\big)\bigg) \\
    &\leq \frac{4GD}{\sqrt{n}}\sqrt{\frac{\pi}{2}}\sqrt{ 1 + \sum_{\alpha \in (\mathbb{N}^d)^*} \frac{c_{|\alpha|}}{|\alpha|}\mathbb{E}_{X}\big(H_\alpha(X)^2\big)},
\end{align*}
yielding the final result.
\end{proof}

\subsection{Proof of Lemma~\ref{lemma:expectation_data}}
\label{proof:lemma_expectation_data}
\begin{proof}[\note{Proof of Lemma~\ref{lemma:expectation_data}}] 
The result for centred normal data with identity covariance matrix is by the construction of the Hermite polynomials \cite{hermite_2009}.

If ~$\|X\|_2$ is bounded by~$R$, using the bound from Equation~\eqref{eq:hermite_bound}, we get that
\begin{align*}
   \mathbb{E}_X(H_\alpha(X)^2) \leq \mathbb{E}(e^{\|X\|^2/2}) \leq \mathbb{E}_X\big(e^{R^2/2}\big) \leq e^{\frac{R^2}{2}}.
\end{align*}

\note{
If ~$X$ is such that  $\|X\|$ is subgaussian with variance proxy $\sigma^2$, we know that $\forall \lambda \leq 1/(6\sqrt{2e}\sigma)$, then $\mathbb{E}_X (e^{\|X\|^2\lambda^2}) \leq e^{72 e\lambda^2 \sigma^2}$ \cite[Proposition 2.5.2]{Vershynin_2018}. Therefore, using the bound from Equation~\eqref{eq:hermite_bound}, we have
\begin{align*}
   \mathbb{E}_X(H_\alpha(X)^2) &\leq \mathbb{E}(e^{\|X\|^2/2}) \leq e^{36e\sigma^2} \leq e
\end{align*}}
This concludes the study of~$\mathbb{E}_X\big( H_\alpha(X)^2 \big)$.
\end{proof}

\subsection{Proof of Lemma~\ref{lem:choice_c}}
\label{proof:lemma:choice_c}
\begin{proof}[\note{Proof of Lemma~\ref{lem:choice_c}}]
Using~$d$-dimensional geometric random variables, we know that
\begin{align*}
    \sum_{\alpha \in \mathbb{N}^d} (1-\rho)^d\rho^{|\alpha|}&= 1, \text{ and therefore }
    \sum_{\alpha \in (\mathbb{N}^d)^{*}}\frac{\rho^{|\alpha|}}{|\alpha|}  \leq \frac{1}{(1-\rho)^d}.
\end{align*}
For the other setting, 
\begin{equation*}
     \sum_{\alpha \in (\mathbb{N}^d)^*, |\alpha| \leq M} \frac{1}{|\alpha|} = \sum_{k=1}^{M} \frac{1}{k} \binom{d-1+k}{d-1} \leq \frac{M+1}{d}\binom{M+d}{M+1},
\end{equation*}
which concludes the proof.
\end{proof}

\subsection{Proof of Corollary~\ref{cor:bounded}}
\label{proof:cor_bounded}
\begin{proof}[\note{Proof of Corollary~\ref{cor:bounded}}] 
First, we note from Lemma~\ref{lemma:expectation_data} that for any $\alpha \in \mathbb{N}^d$, we have $\mathbb{E}_X\big( H_\alpha(X)^2 \big) \leq e^{R^2/2}$. Additionally, from Lemma~\ref{lem:choice_c}, we know that $    \sum_{\alpha \in (\mathbb{N}^d)^*} \frac{c_{|\alpha|}}{|\alpha|} \leq \frac{1}{(1-\rho)^d}$.

Next, we aim to improve the use of McDiarmid's inequality by bounding the deviation of $\sup_{f \in \mathcal{F}, \Omega(f) \leq D} \mathcal{R}(f) - \widehat{\mathcal{R}}(f) + \sup_{f \in \mathcal{F}, \Omega(f) \leq D} \widehat{\mathcal{R}}(f) - \mathcal{R}(f)$ when a single data point $(x^{(i)}, y^{(i)})$ is changed to $(\tilde{x}^{(i)}, \tilde{y}^{(i)})$ changing the dataset from $\mathcal{D}$ to $\tilde{\mathcal{D}}$. In the original proof of Theorem~\ref{theo:stat_general}, we used $4l_\infty/n$ as our bound, but we can provide a tighter bound. We write $\widehat{\mathcal{R}}_\mathcal{D}(f)$ to specify the dependency on the dataset. We also write $\mathcal{G} := \{f \in \mathcal{F}, \ \Omega(f) \leq D \}$. Specifically, we have
\note{
\begin{align*}
& \sup_{f \in \mathcal{G}} \mathcal{R}(f) - \widehat{\mathcal{R}}_\mathcal{D}(f) - \sup_{f \in \mathcal{G}} \mathcal{R}(f) - \widehat{\mathcal{R}}_{\tilde{\mathcal{D}}}(f)  \\
& = \sup_{f \in \mathcal{G}} \mathcal{R}(f) - \widehat{\mathcal{R}}_{\tilde{\mathcal{D}}}(f) + \frac{1}{n}\ell(\tilde{y}^{(i)}, f(\tilde{x}^{(i)})) - \frac{1}{n}\ell(y^{(i)}, f(x^{(i)})) - \sup_{f \in \mathcal{G}} \mathcal{R}(f) - \widehat{\mathcal{R}}_{\tilde{\mathcal{D}}}(f)  \\
& \leq  \sup_{f \in \mathcal{G}} \frac{1}{n}\ell(\tilde{y}^{(i)}, f(\tilde{x}^{(i)})) - \frac{1}{n}\ell(y^{(i)}, f(x^{(i)})), 
\end{align*}
and similarly 
\begin{align*}
 \sup_{f \in \mathcal{G}}&  \widehat{\mathcal{R}}_\mathcal{D}(f) - \mathcal{R}(f)  - \sup_{f \in \mathcal{G}} \widehat{\mathcal{R}}_{\tilde{\mathcal{D}}}(f) - \mathcal{R}(f)  \leq  \sup_{f \in \mathcal{G}}  \frac{1}{n}\ell(y^{(i)}, f(x^{(i)})) - \frac{1}{n}\ell(\tilde{y}^{(i)}, f(\tilde{x}^{(i)}). 
\end{align*}
Combining both and taking the argmax functions $f_1$ and $f_2$, we obtain 
\begin{align*}
 \sup_{f \in \mathcal{G}}& \mathcal{R}(f) - \widehat{\mathcal{R}}_\mathcal{D}(f) - \sup_{f \in \mathcal{G}} \mathcal{R}(f) - \widehat{\mathcal{R}}_{\tilde{\mathcal{D}}}(f)  + \sup_{f \in \mathcal{G}} \widehat{\mathcal{R}}_\mathcal{D}(f) - \mathcal{R}(f)  - \sup_{f \in \mathcal{G}} \widehat{\mathcal{R}}_{\tilde{\mathcal{D}}}(f) - \mathcal{R}(f)   \\
 &\leq \frac{1}{n}\ell(\tilde{y}^{(i)}, f_1(\tilde{x}^{(i)}) - \frac{1}{n}\ell(y^{(i)}, f_1(x^{(i)})) +  \frac{1}{n}\ell(y^{(i)}, f_2(x^{(i)})) - \frac{1}{n}\ell(\tilde{y}^{(i)}, f_2(\tilde{x}^{(i)}) \\
    & \leq \frac{G}{n}(|(f_1-f_2)(x^{(i)}| + |(f_1-f_2)(\tilde{x}^{(i)})|) \\
     &\leq \frac{4}{n}G\sup_{f \in \mathcal{F}, \Omega(f) \leq D, x \in \mathbb{R}^d, \|x\|_2 \leq R} |f(x)| \\
 &\leq \frac{4}{n}GD\sup_{x \in \mathbb{R}^d, \|x\|_2 \leq R}\Omega^*((H_\alpha(x))_{\alpha}) \\
 &\leq \frac{4}{n} GD \sup_{x \in \mathbb{R}^d, \|x\|_2 \leq R}\sqrt{1 + \sum_{\alpha \in (\mathbb{N}^d)^*} \frac{c_{|\alpha|}}{|\alpha|} H_\alpha(x)^2} \\
 & \leq \frac{4}{n} GD \sqrt{1 + \frac{e^{R^2/2}}{(1-\rho)^d}}.
\end{align*}
We can obtain the same exact bound for the opposite quantity of
\begin{align*}
    \sup_{f \in \mathcal{G}}& \mathcal{R}(f) - \widehat{\mathcal{R}}_\mathcal{D}(f) - \sup_{f \in \mathcal{G}} \mathcal{R}(f) - \widehat{\mathcal{R}}_{\tilde{\mathcal{D}}}(f)  + \sup_{f \in \mathcal{G}} \widehat{\mathcal{R}}_\mathcal{D}(f) - \mathcal{R}(f)  - \sup_{f \in \mathcal{G}} \widehat{\mathcal{R}}_{\tilde{\mathcal{D}}}(f) - \mathcal{R}(f)
\end{align*}
by using the same arguments.} We use this bound for $D= 2\Omega(f^*)$. The result follows by employing the proof of Theorem~\ref{theo:stat_general}.
\end{proof}

\section{Technical details of the numerical experiments}
\label{app:numerical}
\paragraph{Experiment 1.}
For \textbf{MAVE} and \textbf{RegFeaL}, the \textbf{MARS} final training used the default parameters provided by the py-earth python package ({\small\url{https://contrib.scikit-learn.org/py-earth/}}), except for the maximum degree, which was taken as the estimated dimension for both methods. \textbf{MAVE} was run using the provided CRAN package in R ({\small \url{https://cran.r-project.org/web/packages/MAVE/index.html}}) and the default parameters.

The number of iterations $n_{\rm iter}$ was set to $5$. For \textbf{RegFeaL}, the cross-validation for~$\rho \times \mu$ was done over the grid defined by~$(0.01, 0.05, 0.1, 0.2, 0.4, 0.6, 0.8)$ for~$\rho$ and $(1000, 100, 10, 1, 0.1, 0.01, 0.001)/d^{((2 - r) / r)}$ for $\mu$. 

The cross-validation for \textbf{Kernel Ridge} was done on parameter $\lambda$, with the set of values $(1000, 100, 10, 1, 0.1, 0.01, 0.001)/d^{((2 - r) / r)}$. The score of the noise level was estimated by $1- \frac{n \sigma^2}{\sum_{i=1}^n (y^{(i)}_{\rm test} - \Bar{y}_{\rm test})^2}$.

\paragraph{Experiment 2.}
For each value of~$n$, we used cross-validation for the largest value of~$m$ and then used the selected~$\rho$ and~$\lambda$ for all other values of~$m$. The cross-validation  was done over the grid defined by~$(0.2, 0.4, 0.6, 0.8, 1.0)$ for~$\rho$ and $(100, 1,0.1,0.01,0.001)/d^{((2 - r) / r)}$ for~$\mu$. The number of iterations~$n_{\rm iter}$ was~$3$. 

\paragraph{Experiment 3.}
The cross-validation for~$\rho \times \mu$ was done over the grid defined by~$(0.2, 0.4, 0.6, 0.8, 1.0)$ for~$\rho$ and $(100, 1,0.1,0.01,0.001)/d^{((2 - r) / r)}$ for~$\mu$.

\begin{acks}[Acknowledgements]
The author thanks Lawrence Stewart, Antonin Brossollet and Oumayma Bounou for fruitful discussions related to this work. The authors are grateful to the CLEPS infrastructure from the Inria of Paris for providing resources and support, particularly  Simon Legrand  ({\small\url{https://paris-cluster-2019.gitlabpages.inria.fr/cleps/cleps-userguide/index.html}}). 
This work is funded in part by the French government under the management of Agence Nationale de la Recherche as part of the “Investissements d’avenir” program, reference ANR-19-P3IA-0001 (PRAIRIE 3IA Institute). We also acknowledge support from the European Research Council (grants SEQUOIA 724063 and DYNASTY 101039676).

\end{acks}

\bibliographystyle{imsart-number} 
\bibliography{biblio}      

\end{document}